\crefname{hypothesis}{Hypothesis}{Hypotheses}
\title{A Bound on the Edge-Flipping Distance between Triangulations (Revisiting the Proof)}
\author{Thomas Dag\`es\thanks{Department of Computer Science, Technion Israel Institute of Technology, Haifa, ISRAEL (\email{thomas.dages@cs.technion.ac.il}).}
\and Alfred M. Bruckstein\footnotemark[1]}
\DeclareMathOperator*{\argmax}{argmax}
\newcommand*{\addFileDependency}[1]{% argument=file name and extension
  \typeout{(#1)}% latexmk will find this if $recorder=0 (however, in that case, it will ignore #1 if it is a .aux or .pdf file etc and it exists! if it doesn't exist, it will appear in the list of dependents regardless)
  \@addtofilelist{#1}% if you want it to appear in \listfiles, not really necessary and latexmk doesn't use this
  \IfFileExists{#1}{}{\typeout{No file #1.}}% latexmk will find this message if #1 doesn't exist (yet)
}
\begin{document}

\maketitle

% REQUIRED
\begin{abstract}
  We revisit here a fundamental result on planar triangulations, namely that the flip distance between two triangulations is upper-bounded by the number of proper intersections between their straight-segment edges. We provide a complete and detailed proof of this result in a slightly generalised setting using a case-based analysis that fills several gaps left by previous proofs of the result.
\end{abstract}

% REQUIRED
\begin{keywords}
  Triangulations, Edge-flip operation, Flip Distance
\end{keywords}

% REQUIRED
\begin{AMS}
  05C10, 68R10, 68U05
\end{AMS}

\section{Introduction}

Triangulations of finite sets of points in the plane are important in many areas of applied geometry, from computer graphics and rendering, computational geometry to imaging, computational fluid dynamics, and finite element theory. However, triangulations of a same finite set of points are not unique in general, and we therefore need tools to compare different triangulations.  One tool to do this is to define a local operation called edge-flipping and define an ``edge-flipping distance'' between different triangulations of the same set of points. An edge flip consists in an operation on a triangulation, where given two adjacent triangles that form a convex quadrilateral, we flip the existing diametrical edge to the other diameter of the quadrilateral, as illustrated in \cref{fig: flip conv and non conv}. This operation maps the quadrilateral to itself but its interior is triangulated differently, the rest of the triangulation remaining unchanged. A flip does not create any conflict between the new edge and the rest of the triangulation's edges. Thus, the edge-flipping operation maps a triangulation to an other slightly different triangulation.

\begin{figure}[tbhp]
    \centering
    \subfloat[][Convex case]{%\label{fig: }
    \includegraphics[width=0.66\textwidth]{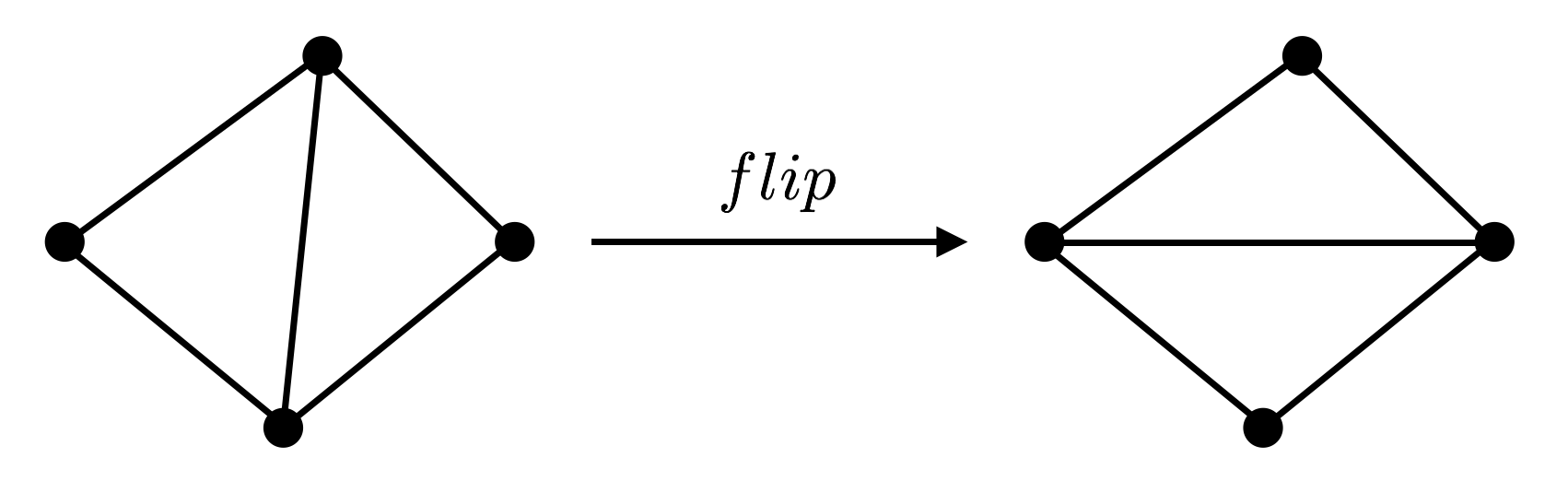}}
    \hspace{2em}
    \subfloat[][Non convex case]{%\label{fig: }
    \includegraphics[width=0.2\textwidth]{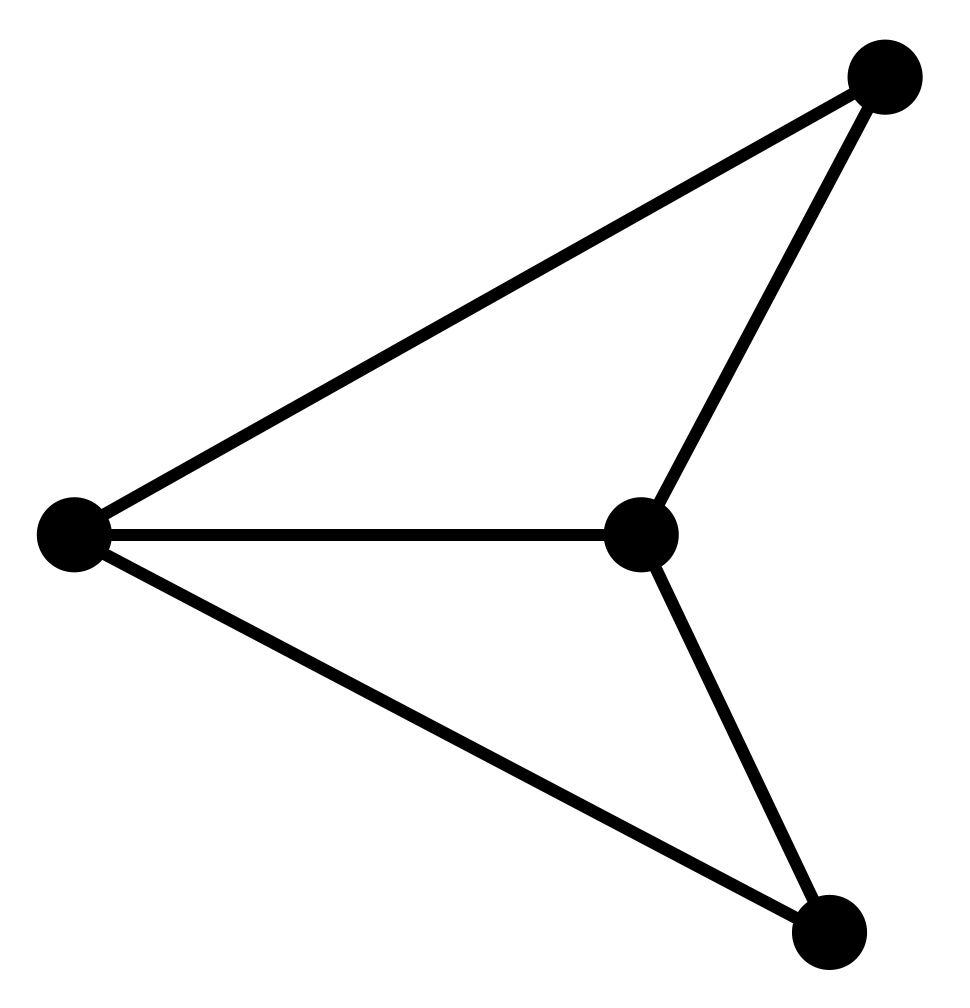}}\\ 
    \caption{Flip operation on two neighbouring triangles of a triangulation. The flip is well defined if the quadrilateral is convex. The diagonal of a non convex quadrilateral in a triangulation is not flippable.}
    \label{fig: flip conv and non conv}
\end{figure}

It is a well-known fact that given any finite set of points $S$ on the plane, and two triangulations $T_1$ and $T_2$ of this set, then there always exists a finite sequence of edge-flipping operations such that we can transform $T_1$ to $T_2$: $T_1 \xrightarrow{flip} T^{(2)} \xrightarrow{flip} T^{(3)}  \hdots \xrightarrow{flip} T_2$. The traditional proof consists in first showing that given any triangulation of planar points, we can reach a reference triangulation, and since the flip operation is clearly reversible, the result follows immediately. The Delaunay triangulation is commonly chosen as the reference triangulation, and the flip sequence to reach it is one that increases the minimum angle of the triangulation at each flip \cite{LAWSON1977flip}. Unfortunately, if there are four co-circular points, the Delaunay triangulation is not unique and extra effort is needed to prove that any Delaunay triangulation is reachable from any other one using only edge flips.  A more elegant proof that does not require any high level knowledge on triangulations, such as the Delaunay triangulation and its angular properties, can be found in Osherovich and Bruckstein \cite{osherovich2008all}. The core idea of this proof is to look at ``ears of triangulations'' of polygons and try to cut them in a goal-oriented way and then generalise by induction. 

Given two triangulations $T_1$ and $T_2$ of the same finite set of vertices $S$, there exist in general many sequences of edge flips that map $T_1$ to $T_2$. Denote $d_f(T_1,T_2)$ the shortest length of all the possible such sequences of flips. It is easy to check that $d_f$ is a distance measure over the set of triangulations of a given finite point set (for details see \cref{prop: d_f distance measure}), and it is therefore called the edge-flipping distance.

Computing the minimal sequence of flips is a difficult computational challenge. Lubiw et al. \cite{lubiw2015flip} have shown that it is actually a NP-complete task. The same year, Aichholzer et al. \cite{aichholzer2015flip} proved that the problem is also NP-complete when considering triangulations of simple planar polygons. Thus, we may search instead for an upper-bound of the edge-flipping distance. For this, we will look at edge intersections of the original and target triangulations. Due to the maximality of triangulations, we have the following equivalence: two triangulations $T_1$ and $T_2$ of a same finite set of points $S$ are identical if and only if there are no proper intersections between edges of $T_1$ with those of $T_2$ (see \cref{prop: = equiv no inter}). We say that two edges of two different triangulations intersect properly if the ``open edges'' (excluding the border vertices) intersect at a single point. Note that identical edges in two different triangulations do not properly intersect, they are superimposed.

Magically, it was found that the flip distance $d_f(T_1,T_2)$, i.e. the minimum number of flips required to go from $T_1$ to $T_2$, is upper-bounded by the number of proper intersections between the two triangulations, denoted as $\#(T_1,T_2)$. The proof of this beautiful result was originally proposed by Hanke et al. \cite{hanke1996edge} and was later revisited in the reference book on triangulations by De Loera et al. \cite{de2010triangulations}. Both proofs are quite difficult to follow due to the case based analysis involved and to some omissions, minor errors, and even some logical flaws in the proofs provided in those references. The purpose of this paper is to revisit the proof in a way that is hopefully readable, complete, and understandable by anyone with some background in planar geometry and triangulations.

\section{Edge-flipping distance and intersections: an upper-bound}
\label{sec: edge flipping dist and inter upper bound}

We consider triangulations of a fixed finite set of points $S$ in the plane. We assume that the edges of the triangulations are straight segments,  this being an essential assumption for counting proper edge intersections. In particular, we do not triangulate the outer face, i.e. the outside of the convex hull of $S$. While Hanke et al. \cite{hanke1996edge} and De Loera \cite{de2010triangulations} both consider triangulations of the convex hull of $S$ without any further constraint, we notice that the result and proof still hold for triangulations with more general border constraints, e.g. a fixed non convex outer border and some fixed inner faces called holes that are not triangulated. Formally, let $B = (B_0, B_1,\cdots, B_h)$ with $h\ge 0$ be the border constraints of $S$. Each constraint $B_i$ is a simple polygon defined on vertices of $S$. The polygons $B_i$ must not overlap, although they are allowed to share vertices but not edges. $B_0$ is taken to be a simple polygon of $S$ such that its outer face does not contain any point of $S$. $B_0$ may be chosen as the boundary of the convex hull of $S$, but we will not limit our choice to this example. The set of further polygonal constraints $(B_1,\cdots,B_h)$, which can be empty, defines holes of the triangulation. Each $B_i$ for $i\ge 1$ is defined such that their inner region does not contain any point of $S$. Note that the holes are not necessarily triangles and can have arbitrarily long polygonal boundaries $B_i$. See \cref{fig: border example} for an illustration of triangulation with general border constraints.

\begin{figure}[tbhp]
    \centering
    \includegraphics[width=0.7\textwidth]{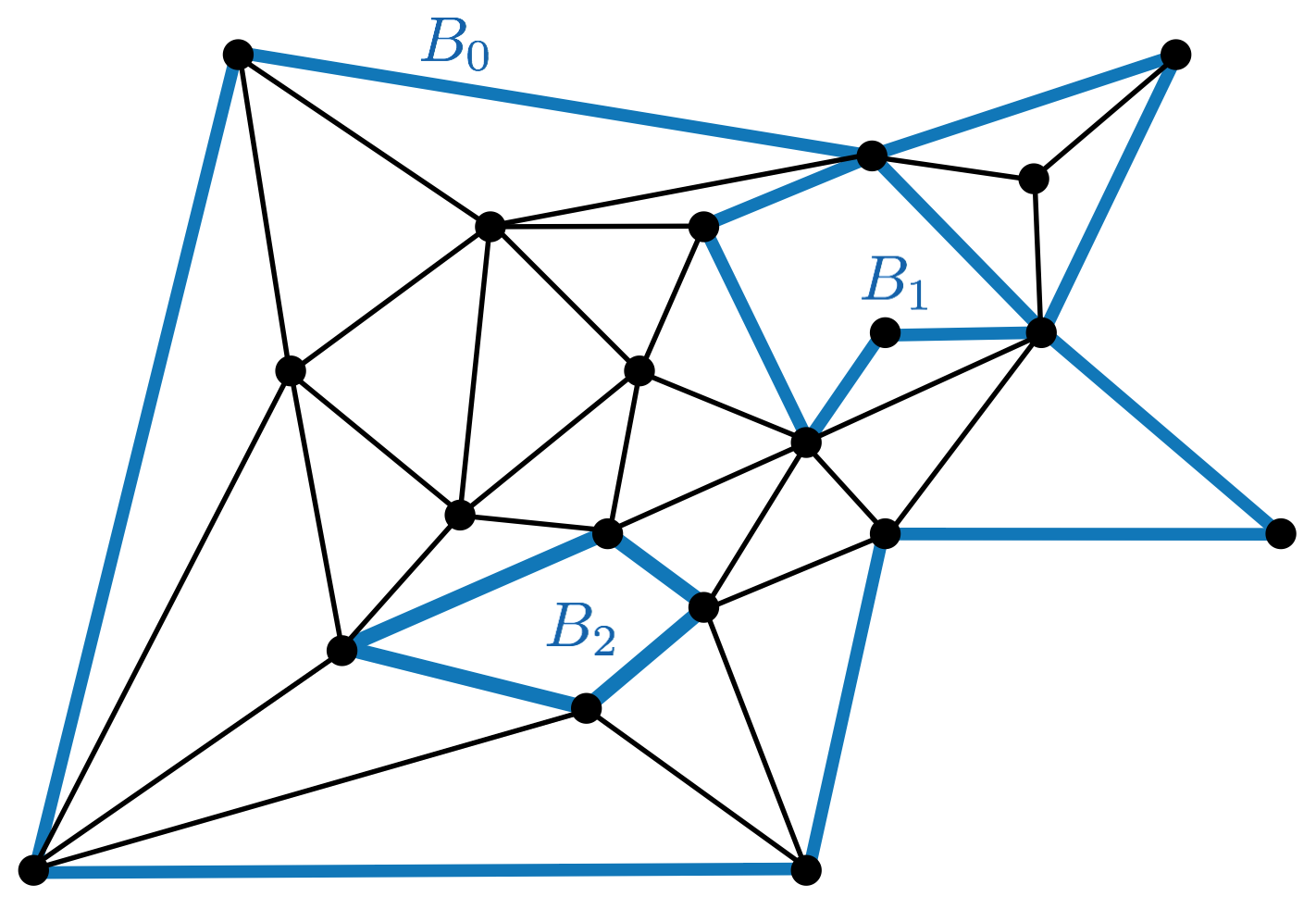}
    \caption{Example of a triangulation with general border constraints $B$. Here, we constrain the triangulation to have $h=2$ fixed holes defined by $B_1$ and $B_2$. The outer border of the triangulation is constrained to the polygon $B_0$, which in this case is not convex as it is not the polygon defined by the border of the convex hull of $S$. Given $B = (B_0, B_1, B_2)$, we will compare different triangulations of $S$ that respect the border constraints $B$.}
    \label{fig: border example}
\end{figure}

We want to show that the edge-flipping distance of points in any such region of the plane is upper-bounded by the number of edge intersections between both triangulations. The proof is heavily based on the original ideas of \cite{hanke1996edge} and \cite{de2010triangulations} with all details thoroughly explained, and hopefully all problems removed. The goal is to prove the following theorem:

%\bigskip
\begin{theorem}
\label{th: goal}
    If $T_1$ and $T_2$ are two triangulations of a planar region defined by a finite set of points $S$ with a set of boundary constraints $B$, then we have that:
    $$ d_f(T_1,T_2) \le \#(T_1,T_2). $$
    In particular, if $n$ is the number of vertices in $S$, if $n_b$ is the number of boundary vertices of $S$, and if $h$ is the number of holes of $S$, i.e. $B = (B_0, \cdots, B_h)$ and $n_b$ is the sum of the length of the polygons $B_i$ for $0\le i \le h$, then:
    $$ d_f(T_1,T_2) \le \#(T_1,T_2) \le (3n-2n_b-3-3h)^2.$$
\end{theorem}

The main idea for the proof is to prove the existence a sequence of edge flips that strictly decreases the number of intersections at each step. To achieve this, we consider the edges in $T_1$ with maximal number of intersections with $T_2$. We then prove the existence of one of these maximal edges such that it is flippable and that flipping it will strictly reduce the number of intersections with $T_2$. We will first show that such maximal edges are flippable, which means that they are diagonals of convex quadrilaterals in $T_1$. Second, we will show that there exists a class of configurations of maximal edges for which flipping strictly reduces the number of intersections. Third, we will show that if no maximal edge belongs to this class of configurations, then there must exist at least one maximal edge that does not belong to this class which will strictly reduce the number of intersections with $T_2$ after being flipped as we would otherwise reach a contradiction. Therefore, there is at least one edge in $T_1$ that we can flip and for which this flip strictly reduces the number of intersections with $T_2$. This result proves \cref{th: goal}.

We will denote, unless mentioned otherwise, the edge between the vertices $a$ and $b$ as $ab$. We will call a quadrilateral of a triangulation $T$ the shape formed by two adjacent triangles of $T$ (sharing an edge). Given a quadrilateral created by two adjacent triangles, we will say that an edge intersects with this quadrilateral if it crosses the interior of the quadrilateral. If $ab$, $cd$ and $ef$ are edges between vertices of the triangulations, we will denote $\#(ab,T_2)$ the number of edges intersecting $ab$ that belong to $T_2$, $\#(ab,cd,T_2)$ the number of edges intersecting $ab$ and $cd$ that belong to $T_2$, $\#(ab,cd,ef,T_2)$ the number of edges intersecting $ab$, $cd$ and $ef$ that belong to $T_2$, $\#_a(cd,T_2)$ the number of edges intersecting $cd$ that belong to $T_2$ but that emerge from $a$ and finally $\#_a(cd,ef,T_2)$ the number of edges intersecting $cd$ and $ef$ that belong to $T_2$ but that emerge from $a$, where an edge is said to emerge from $a$ if $a$ is one of its vertices. See figure \cref{fig: notations inter} for an illustration of each case.

\begin{figure}[tbhp]
    \centering
    \subfloat[][$\#(ab,T_2)$]{%\label{fig: }
    \includegraphics[width=0.25\textwidth]{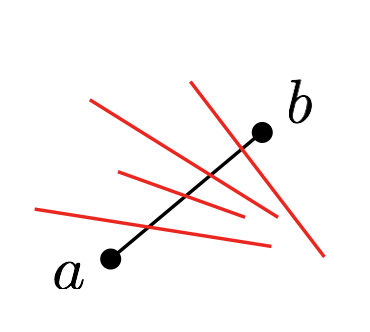}}
    \subfloat[][$\#(ab,cd,T_2)$]{%\label{fig: }
    \includegraphics[width=0.35\textwidth]{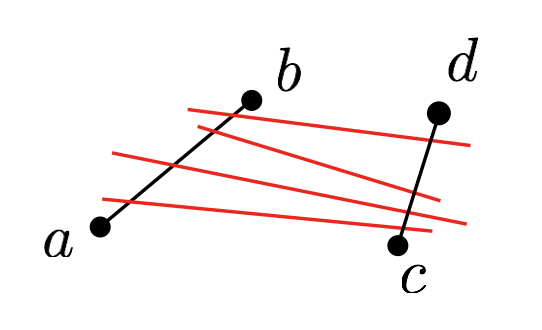}}
    \subfloat[][$\#(ab,cd,ef,T_2)$]{%\label{fig: }
    \includegraphics[width=0.4\textwidth]{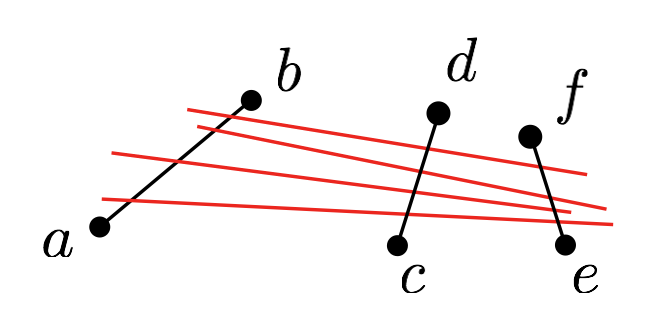}}\\
    \subfloat[][$\#_a(bc,T_2)$]{%\label{fig: }
    \includegraphics[width=0.3\textwidth]{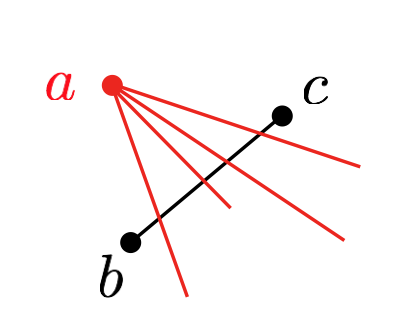}}
    \subfloat[][$\#_a(bc,de,T_2)$]{%\label{fig: }
    \includegraphics[width=0.35\textwidth]{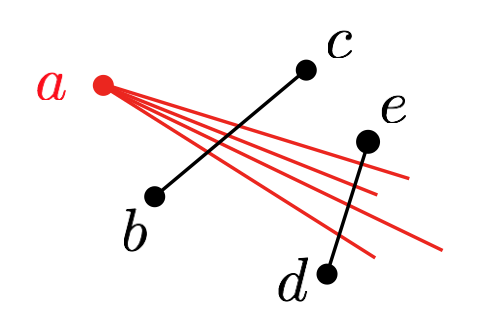}}
    \caption{Notations for counting the types of intersecting edges. The edges drawn in black are assumed to belong to $T_1$ and those in red are assumed to belong in $T_2$. This colour code will remain consistent throughout this paper.}
    \label{fig: notations inter}
\end{figure}

Note that while Hanke et al. \cite{hanke1996edge} stated \cref{th: goal} with a strict inequality, it is easy to find equality cases. For instance, any triangulation of three points is trivially reduced to a triangle and there cannot be any intersection. Another simple example consists in the two possible triangulations of a convex set of four points, which differ by only one flipped edge and have exactly one edge intersection. This small misstatement was also corrected in the proof of De Loera et al. \cite{de2010triangulations}.

\subsection{Preliminary properties of triangulations}

Some of the important tools for proving theorem \cref{th: goal} are listed as the following properties:

%\bigskip
\begin{proposition}
\label{prop: planar}
    If $T$ is a triangulation of a finite set of points, then $T$ is a  planar graph, i.e. its edges do not intersect.
\end{proposition}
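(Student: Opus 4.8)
The plan is to argue by contradiction directly from the defining properties of a triangulation. Recall that a triangulation $T$ of the region is a subdivision into closed triangles whose vertices lie in $S$, such that these triangles cover the region, have pairwise disjoint interiors, and any two of them meet in a common face (the empty set, a shared vertex, or a shared edge). In particular, every edge $ab$ of $T$ is a side of at least one triangle, and the relative interior of $ab$ lies on the boundary between faces. I would first suppose, towards a contradiction, that two distinct edges $ab$ and $cd$ of $T$ intersect properly, i.e. their relative interiors meet at a single point $p$ where the two straight segments cross transversally. The two segments cannot be collinear, since overlapping collinear edges would already violate the subdivision property, so $ab$ and $cd$ genuinely split a neighbourhood of $p$ into four sectors.

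Next I would localise the picture around $p$. Since $ab$ is a straight segment, it spans a line $\ell$, and a triangle $\Delta$ having $ab$ as a side lies, near $p$, strictly on one side of $\ell$: a sufficiently small half-disc centred at $p$ on that side is contained in the interior of $\Delta$. On the other hand, because $cd$ crosses $ab$ transversally at $p$, the segment $cd$ has points arbitrarily close to $p$ on both sides of $\ell$; in particular some sub-segment of $cd$ near $p$ enters the open half-disc lying inside $\Delta$. Hence the relative interior of the edge $cd$ contains points that are \emph{strictly interior} to the triangle $\Delta$.

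Finally I would close the argument. A point strictly interior to $\Delta$ belongs to no triangle other than $\Delta$, by disjointness of interiors, yet such a point of $cd$ lies on the common boundary of the triangles adjacent to $cd$, hence on the boundary of some triangle $\Delta' \ne \Delta$. This is incompatible with the requirement that $\Delta$ and $\Delta'$ meet only in a common face, so no such crossing point $p$ can exist, and the edges of $T$ do not properly intersect; that is, $T$ is a planar graph. I expect the only delicate point to be purely local bookkeeping: making the transversal-crossing step rigorous (ruling out collinearity so that the two segments truly separate a neighbourhood of $p$) and invoking precisely the right clause of the definition of a triangulation, namely the disjointness of triangle interiors together with the common-face intersection property, to obtain the contradiction. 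Everything else is a direct unwinding of the definition, and if one instead adopts the definition of a triangulation as a maximal set of non-crossing edges, the statement becomes immediate.
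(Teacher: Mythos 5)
Your proof is correct, but it takes a genuinely different route from the paper's, because it starts from a different definition of a triangulation. You adopt the geometric (simplicial-complex) definition---closed triangles covering the region, with pairwise disjoint interiors and pairwise intersections equal to common faces---and derive non-crossing of edges by a local argument: a transversal crossing point $p$ would push the relative interior of one edge into the open interior of a triangle $\Delta$ having the other edge as a side, contradicting the common-face clause. The argument is sound; the one slightly loose step (``a point strictly interior to $\Delta$ belongs to no triangle other than $\Delta$, by disjointness of interiors'' only rules out membership in other triangles' \emph{interiors}, not their boundaries) is immediately repaired by your appeal to the common-face property, since $\Delta \cap \Delta'$ would then contain an interior point of $\Delta$, which no common vertex or edge can. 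The paper, by contrast, \emph{defines} a triangulation as a maximal planar graph, so planarity holds by fiat; its proof of this proposition is really devoted to establishing the equivalence of that definition with ``graph all of whose bounded faces are triangles,'' which is the characterization the rest of the paper leans on (notably the takeaway that two edges of a triangulation can only meet at a vertex, and the maximality arguments in \cref{prop: vertex linked to the closest edge} and \cref{prop: = equiv no inter}). Your approach buys generality and self-containment: it works for any geometric triangle complex, maximal or not, and needs no appeal to maximality. The paper's approach buys exactly the maximality/all-triangle-faces equivalence that its later propositions repeatedly invoke---and, as you yourself observe in your closing sentence, under the paper's definition the statement becomes immediate, which is precisely the paper's stance.
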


%\bigskip
\begin{proposition}
\label{prop: no vertex inside quad}
    Let $T_1$ and $T_2$ be two triangulations of the same finite set of points. Consider any quadrilateral of $T_1$ defined by two adjacent triangles of $T_1$. Consider any edge of $T_2$ intersecting with the quadrilateral. Necessarily, the considered edge is one of the three possible kinds: both vertices of the edge are outside the quadrilateral, or one of the vertices is one of the vertices of the polygon and the other vertex is outside the polygon, or the edge is a diagonal of the quadrilateral.
\end{proposition}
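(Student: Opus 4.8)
The plan is to exploit the fact that $T_1$ and $T_2$ are triangulations of the \emph{same} point set $S$, so that every endpoint of a $T_2$-edge is itself a point of $S$, and then to show that no point of $S$ can lie strictly inside the quadrilateral nor in the relative interior of one of its four sides. The three admissible configurations will then follow from a short case analysis on where the two endpoints of the crossing edge may lie.

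First I would fix the quadrilateral $Q$ of $T_1$, formed by two triangles of $T_1$ glued along a shared diagonal $d$, and observe that the open interior of $Q$ is exactly the union of the two open triangular faces together with the relative interior of $d$. By the defining property of a triangulation, no point of $S$ lies strictly inside a triangular face of $T_1$, and no point of $S$ lies in the relative interior of an edge of $T_1$, since an edge is a straight segment joining two vertices and passing through no other vertex of $S$. Applying this to the two faces and to $d$ shows that no point of $S$ lies in the open interior of $Q$. The same emptiness-of-edges property, applied to the four sides of $Q$, which are themselves edges of $T_1$, shows that no point of $S$ lies in the relative interior of a side either.

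Next I would take an edge $e = uv$ of $T_2$ crossing the interior of $Q$. Since $T_2$ has vertex set $S$, both $u$ and $v$ belong to $S$, and by the previous step each of them is either a corner of $Q$ or a point lying strictly outside the closed quadrilateral. I would then enumerate the possibilities: if both endpoints are outside, we are in the first case; if exactly one endpoint is a corner of $Q$ and the other is outside, we are in the second case; and if both endpoints are corners, then the only way for $e$ to meet the open interior is for them to be \emph{opposite} corners, making $e$ a diagonal of $Q$. The remaining subcase, two adjacent corners, yields a side of $Q$, which lies on the boundary and therefore does not cross the interior, so it is excluded from the edges we classify. Note that this argument is insensitive to whether $Q$ is convex: the classification rests only on the position of the endpoints, not on whether a given diagonal actually dips inside. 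This exhausts all configurations and produces exactly the three stated types.

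The only genuinely delicate point is the emptiness argument of the second paragraph: one must verify that the interior of $Q$ decomposes cleanly into the two open faces plus the relative interior of $d$, and recognise that the standard non-degeneracy convention, namely that no vertex of $S$ sits in the relative interior of any edge, is precisely what forbids an endpoint of $e$ from landing on a side of $Q$ without being a corner. Everything after that is an elementary incidence enumeration on the two endpoints.
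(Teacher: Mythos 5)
Your proof is correct and follows essentially the same route as the paper's: both argue that since $T_1$ and $T_2$ triangulate the same point set, no vertex of $S$ can lie in the interior of the quadrilateral (faces plus relative interior of the diagonal) or in the relative interior of a side, and then classify the crossing edge by where its two endpoints can sit. Your version is slightly more explicit about the decomposition of the interior and the adjacent-corners subcase, but the underlying argument is the same.
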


%\bigskip
\begin{proposition}
\label{prop: no intersec inside quad}
    Let $T_1$ and $T_2$ be two triangulations of the same finite set of points. Consider any quadrilateral of $T_1$ defined by two adjacent triangles. Consider any two edges of $T_2$ that both intersect the quadrilateral. Then these edges cannot meet inside the quadrilateral.
\end{proposition}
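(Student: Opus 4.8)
The plan is to argue by contradiction, leaning on the planarity of $T_2$ (\cref{prop: planar}) together with the fact that the open interior of the quadrilateral contains no vertex of $S$. The guiding observation is that two edges of a planar straight-line graph can only ever meet at a shared endpoint, and every shared endpoint is a vertex of $S$; since no point of $S$ sits strictly inside the quadrilateral, no meeting point can occur there.

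First I would fix two edges $e_1$ and $e_2$ of $T_2$ that both intersect the given quadrilateral $Q$ of $T_1$, and suppose for contradiction that they meet at some point $p$ lying strictly inside $Q$. Because $e_1$ and $e_2$ are both edges of the triangulation $T_2$, \cref{prop: planar} tells us that they cannot cross transversally: as straight segments belonging to a planar graph, their only possible common point is a shared endpoint. Hence $p$ must be a common vertex of $e_1$ and $e_2$, and in particular $p$ is a vertex of $S$.

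Second I would show that this is impossible by establishing that the open interior of $Q$ contains no vertex of $S$. The quadrilateral $Q$ is, by definition, the union of two triangles of $T_1$ glued along their common diagonal, which is an edge of $T_1$. Since $T_1$ is a triangulation of $S$, every point of $S$ is a vertex of $T_1$, so no point of $S$ can lie in the open interior of either triangle nor in the relative interior of the shared diagonal; consequently the open interior of $Q$ is free of points of $S$. This contradicts $p \in S$ being strictly inside $Q$, which closes the argument. Alternatively, \cref{prop: no vertex inside quad} already guarantees that the endpoints of any edge of $T_2$ meeting $Q$ lie outside $Q$ or coincide with its four corners, so a shared endpoint can never be interior to $Q$.

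I expect the only delicate point to be the careful justification that planarity of $T_2$ really forces the two edges to meet solely at a common vertex, that is, ruling out the degenerate possibilities that the segments partially overlap or touch at a non-vertex point, and that \emph{inside the quadrilateral} is to be read as the open interior, so that meetings at the corners or along the boundary are not excluded by the statement. Once these are pinned down, the contradiction with the emptiness of the interior of $Q$ is immediate.
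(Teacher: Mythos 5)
Your proposal is correct and follows essentially the same route as the paper: planarity of $T_2$ (\cref{prop: planar}) forces any meeting point of the two edges to be a shared vertex of $S$, which then contradicts the fact that the interior of a quadrilateral formed by two adjacent triangles of $T_1$ contains no vertex of $S$. The extra care you take in justifying the empty interior and in noting the overlap of this argument with \cref{prop: no vertex inside quad} matches the spirit, and the substance, of the paper's proof.
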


%\bigskip
\begin{proposition}
\label{prop: vertex linked to the closest edge}
    Let $T$ be a triangulation of a finite set of points $S$. Let $a\in S$ be a vertex of $T$ and $x\in\mathbb{R}^2$ be a point that is not necessarily in $S$. Assume that the segment $[ax[$ open at $x$ linking $a$ to $x$ does not intersect with any edge of $T$. Then if $x$ is a vertex of $T$, i.e. $x\in S$, then $ax$ is an edge of $T$. Likewise, if $x$ is not a vertex of $T$ but is on an edge $bc$ of $T$ with $b\neq a$ and $c\neq a$, then $ab$ and $ac$ are edges of $T$. 
\end{proposition}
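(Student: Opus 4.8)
The plan is to follow the segment $[ax[$ outward from the vertex $a$ and show that it is trapped inside a single triangle of $T$ incident to $a$. I read the hypothesis as saying that the open segment $]ax[$, strictly between $a$ and $x$, meets no edge of $T$; the endpoint $a$ of course lies on all edges incident to it, and $x$ is excluded. First I would inspect the point $a + t\vec u$ for small $t > 0$, where $\vec u$ is the unit vector pointing from $a$ towards $x$ (the case $x = a$ being degenerate and excluded). By \cref{prop: planar} the finitely many edges of $T$ emanating from $a$ are pairwise non-crossing, so they cut a neighbourhood of $a$ into finitely many angular sectors, each sector on the inner side being the corner of a triangle of $T$ incident to $a$. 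The direction $\vec u$ then either lies strictly inside one such inner sector, runs along one of the edges $ae$, or (when $a$ is a boundary vertex) points out of the triangulated region. The middle possibility is impossible, since then $]ax[$ would contain points of the edge $ae$ near $a$; the last I defer to the discussion of the main obstacle below. In the remaining case $\vec u$ points into the interior of a well-defined triangle $T_0 = apq$ of $T$ incident to $a$, with $pq$ the side opposite $a$.

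Second, I would argue that $]ax[$ never leaves $T_0$. If it did, it would meet the boundary of $T_0$ at some point interior to the segment; but $\partial T_0$ is a union of three edges of $T$, so that point would lie on an edge, contradicting the hypothesis. Hence $]ax[\, \subset \operatorname{int}(T_0)$ and, passing to the limit, $x \in \overline{T_0}$, i.e. $x$ lies in the closed triangle $apq$. The two claims now follow by locating $x$ inside this triangle. If $x$ is a vertex of $T$, then since $T_0$ is a triangle of the triangulation its closure contains no vertices of $S$ other than its three corners $a, p, q$; as $x \neq a$, we get $x \in \{p,q\}$, so $ax$ is a side of $T_0$ and hence an edge of $T$. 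If instead $x$ is not a vertex but lies on an edge $bc$ with $b, c \neq a$, then $x$ cannot be interior to $T_0$ (the interior of a triangle contains no edge points), so $x \in \partial T_0$; moreover $x$ cannot lie in the relative interior of $ap$ or $aq$, for then the whole closed segment $[ax]$ would be contained in that edge and $]ax[$ would meet it. Thus $x$ lies in the relative interior of $pq$, which forces $bc = pq$, and $ab, ac$ are the sides $ap, aq$ of $T_0$, which are edges of $T$.

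The step I expect to require the most care is the very first one: establishing that $\vec u$ points into a genuine triangle of $T$ rather than out of the triangulated region. Because we allow a non-convex outer border $B_0$ and holes $B_1,\dots,B_h$, the vertex $a$ may sit on the boundary, and a priori $\vec u$ could point into the outer face or into a hole, where no triangle lives. I would rule this out using the hypothesis together with the fact that $x$ is known to lie in the triangulated region (it is either a vertex or a point of an edge): if $\vec u$ pointed out of the region near $a$, the segment would have to re-enter the region to reach $x$, and any such re-entry forces a crossing of a border edge of $T$, contradicting that $]ax[$ avoids all edges. Once this is settled, the sector containing $\vec u$ is genuinely the corner of an incident triangle and the remainder of the argument is routine.
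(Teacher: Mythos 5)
Your main line of attack is, at its core, the same as the paper's: the paper's proof also rests on trapping $]ax[$ inside a single face of $T$ and then reading off that this face must be the triangle $abc$ (in the case where $x$ lies on an edge). Your version is more constructive, identifying the triangle $T_0$ via the angular sectors at $a$ and showing the segment never exits it, and your handling of the case where $x$ is a vertex is genuinely different: the paper invokes the maximality principle (if $ax$ were not an edge, it could be added without creating an intersection), whereas you deduce $x\in\{p,q\}$ directly from $x\in\overline{T_0}$. That unification is nice, since both cases then follow from locating $x$ in $\overline{T_0}$, and it avoids leaning on maximality, which itself needs care in the constrained-border setting.

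The genuine gap is precisely in the step you flagged as delicate, and your proposed fix does not close it. You argue that if $\vec u$ pointed out of the triangulated region, the segment would have to ``re-enter'' the region to reach $x$ and would therefore cross a border edge. But the re-entry can happen exactly at the endpoint $x$, which is excluded from $[ax[$: if $x$ itself lies on the border (a border vertex, or a point interior to a border edge), then $]ax[$ can run entirely through the outer face or a hole and touch the region only at $x$, crossing no edge at all. In that configuration the proposition is in fact false as stated. For instance, take the L-shaped region with vertices $(0,0)$, $(2,0)$, $(2,1)$, $(1,1)$, $(1,2)$, $(0,2)$, and let $a=(2,1)$, $x=(1,2)$: for any triangulation $T$ of this region, the open segment $]ax[$ lies strictly in the outer face and meets no edge of $T$, and $x$ is a vertex of $T$, yet $ax$ cannot be an edge of $T$ because it lies outside the region. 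So no argument from the stated hypotheses can rule out the case you are trying to exclude; what is needed is the additional (implicit) hypothesis that $[ax[$ lies in the triangulated region, which holds in every application in the paper, where $[ax[$ is always a sub-segment of an edge of the other triangulation of the same region. Note that the paper's own proof makes the same silent assumption when it asserts that $]ax[$ ``is necessarily included in a face $F$''; you deserve credit for noticing the issue at all, but the correct resolution is to strengthen the hypothesis, not a re-entry argument that the counterexample above defeats.
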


%\bigskip
\begin{proposition}
\label{prop: = equiv no inter}
    Let $T_1$ and $T_2$ be two triangulations of a same finite set of points, then $T_1\neq T_2$ if and only if the number of intersections between both triangulations is strictly positive. 
\end{proposition}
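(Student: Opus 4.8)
The plan is to establish the logically equivalent statement $T_1 = T_2 \iff \#(T_1,T_2) = 0$, and to split it into its two implications. The ``if'' direction ($\#(T_1,T_2) = 0$ whenever $T_1 = T_2$) is immediate, while the contrapositive of the ``only if'' direction, i.e.\ producing a proper intersection from the mere fact that $T_1 \neq T_2$, is where the real work lies.

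For the easy direction I would assume $T_1 = T_2$ and argue that $\#(T_1,T_2) = 0$. Since the two triangulations then have identical edge sets, any candidate proper intersection would be between an edge of $T_1$ and an edge of $T_2$ that are either superimposed (identical edges, which never intersect properly) or distinct edges of one and the same triangulation (which cannot cross, by the planarity of a triangulation, \cref{prop: planar}). Hence no proper intersection exists and $\#(T_1,T_2) = 0$; equivalently, $\#(T_1,T_2) > 0$ forces $T_1 \neq T_2$.

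For the hard direction I would assume $T_1 \neq T_2$ and exhibit a proper intersection. Because both triangulations share the finite vertex set $S$, differing triangulations must differ in their edges, so without loss of generality there is an edge $ab \in T_1$ with $ab \notin T_2$. I would proceed by contradiction, supposing that $ab$ crosses no edge of $T_2$ properly. Walking along the straight segment from $a$ toward $b$, I would let $v$ be the first vertex of $T_2$ strictly beyond $a$ met on it; such a $v$ exists since $b$ itself is a vertex (so at worst $v = b$), and the selection is well defined because $S$ is finite. On the half-open segment $[av[$ there is then neither a proper crossing with an edge of $T_2$ (by the contradiction hypothesis) nor an interior vertex of $T_2$ (by minimality of $v$), so $[av[$ meets no edge of $T_2$; \cref{prop: vertex linked to the closest edge} then forces $av$ to be an edge of $T_2$.

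It remains to close two exhaustive cases. If $v = b$, then $ab = av \in T_2$, contradicting $ab \notin T_2$. If $v \neq b$, then $v$ is a point of $S$ lying strictly in the relative interior of the segment $ab$, which is an edge of $T_1$; but a triangulation, being a simplicial subdivision of $S$, admits no vertex of $S$ in the interior of any of its edges, again a contradiction. Either way $ab$ must cross some edge of $T_2$ properly, so $\#(T_1,T_2) \ge 1 > 0$. The main obstacle I anticipate is precisely this collinear degeneracy---the segment $ab$ passing through an intermediate vertex of $T_2$---and making both the ``first vertex'' selection and the emptiness-of-edge-interiors property rigorous; once these are handled, the argument reduces to a direct application of the previously established \cref{prop: planar} and \cref{prop: vertex linked to the closest edge}.
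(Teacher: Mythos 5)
Your proof is correct, and at its core it runs on the same engine as the paper's---the contrapositive applied to an edge $ab \in T_1 \setminus T_2$, closed off by the maximality of $T_2$---but your route is structured differently and is strictly more careful. The paper disposes of the hard direction in one line: since $ab$ crosses no edge of $T_2$, the open segment $]ab[$ lies in the interior of a single face of $T_2$, so adding $ab$ to $T_2$ would preserve planarity and violate maximality. You instead walk to the first vertex $v$ of $S$ on $]ab]$, apply \cref{prop: vertex linked to the closest edge} to force $av \in T_2$, and then split on $v = b$ versus $v \neq b$. Because \cref{prop: vertex linked to the closest edge} is itself proved by that same maximality argument, the $v = b$ branch reproduces the paper's reasoning; what your detour buys is the $v \neq b$ branch, which handles exactly the degenerate configuration the paper silently assumes away: $]ab[$ may pass through a vertex of $T_2$ without producing any \emph{proper} intersection (a crossing at a vertex of $T_2$ is not proper under the paper's definition), and in that case $]ab[$ is not contained in the interior of a single face, so the paper's one-liner does not literally apply. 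Your resolution of that branch rests on the fact that no point of $S$ lies in the relative interior of an edge of a triangulation; this is never stated as a proposition in the paper, but the paper itself relies on it implicitly (e.g.\ ``an edge of a triangulation cannot pass through a vertex'' in the proof of \cref{prop: no vertex inside quad}), so the only addition your write-up needs is an explicit statement or citation of that property, since it is the one ingredient you use beyond \cref{prop: planar} and \cref{prop: vertex linked to the closest edge}.
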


%\bigskip
\begin{proposition}
\label{prop: d_f distance measure}
    The function $d_f$ between triangulations of a same finite set of points is a distance measure.
\end{proposition}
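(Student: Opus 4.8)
The plan is to verify directly the four defining axioms of a metric: non-negativity, identity of indiscernibles, symmetry, and the triangle inequality. Before doing so, I would first establish that $d_f$ is well defined and finite on every pair. The flip-connectivity result recalled in the introduction guarantees that any triangulation of $S$ can be transformed into any other by a finite sequence of flips, so the set of flip-sequence lengths joining $T_1$ to $T_2$ is a nonempty set of non-negative integers and therefore admits a minimum. In the constrained setting one should additionally observe that a flip never alters a boundary edge, since it only replaces one diagonal of a convex quadrilateral by the other while fixing the quadrilateral's four sides; consequently every intermediate triangulation in a flip sequence continues to respect the same constraints $B$, so the connectivity argument stays inside the set of $B$-admissible triangulations.

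Two of the axioms are then immediate from the definition. Non-negativity holds because $d_f(T_1,T_2)$ is by definition the length of a sequence of flips, i.e.\ a non-negative integer. For the identity of indiscernibles, if $T_1 = T_2$ then the empty sequence has length $0$, forcing $d_f(T_1,T_2)=0$; conversely $d_f(T_1,T_2)=0$ means the empty sequence is optimal, which leaves $T_1$ unchanged and hence $T_1 = T_2$. (This could alternatively be read off from \cref{prop: = equiv no inter}, but the elementary argument suffices and avoids any circularity with \cref{th: goal}.)

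For symmetry I would exploit the reversibility of a single flip: flipping the new diagonal of the convex quadrilateral restores the original one. Hence reversing a flip sequence $T_1 \xrightarrow{flip} \cdots \xrightarrow{flip} T_2$ of length $k$ yields a valid flip sequence $T_2 \xrightarrow{flip} \cdots \xrightarrow{flip} T_1$ of the same length $k$, so the two ordered pairs have identical sets of achievable sequence lengths and therefore the same minimum, giving $d_f(T_1,T_2)=d_f(T_2,T_1)$. Finally, for the triangle inequality I would concatenate optimal sequences: a shortest flip sequence from $T_1$ to $T_2$ followed by a shortest one from $T_2$ to $T_3$ is itself a flip sequence from $T_1$ to $T_3$, of length $d_f(T_1,T_2)+d_f(T_2,T_3)$, so the minimizing quantity $d_f(T_1,T_3)$ can only be smaller or equal. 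None of these steps presents a genuine difficulty; the only point requiring care — and the closest thing to an obstacle — is justifying finiteness together with the boundary-preservation of flips, so that $d_f$ takes finite values within the admissible class, which is precisely what the connectivity discussion of the introduction provides.
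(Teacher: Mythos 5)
Your proposal is correct and takes essentially the same route as the paper's proof: direct verification of the metric axioms, with flip reversibility yielding symmetry and concatenation of shortest sequences yielding the triangle inequality. Your extra care about well-definedness and finiteness (via the flip-connectivity result and the fact that flips preserve the boundary constraints $B$) is a point the paper leaves implicit in its introduction rather than in the proof itself, but it does not constitute a different approach.
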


%\bigskip
\begin{proposition}
\label{prop: inter edge not edge other triangu}
    Let $T_1$ and $T_2$ be two triangulations of a same finite set of points. If $T_2$ intersects an edge $ac$ of $T_1$, then $ac$ is not an edge of $T_2$. In particular, when $T_1\neq T_2$, the edges $ac\in \argmax\limits_{\widetilde{ac}\in T_1} \#(\widetilde{ac},T_2)$ of $T_1$ with maximal number of intersections with $T_2$ cannot be edges of $T_2$.
\end{proposition}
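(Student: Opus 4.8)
The plan is to prove the first, general statement directly by contradiction using the planarity of $T_2$, and then to obtain the consequence about maximal edges as an immediate corollary by combining this with \cref{prop: = equiv no inter}.

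First I would establish the main claim: if some edge of $T_2$ properly intersects the edge $ac\in T_1$, then $ac\notin T_2$. Suppose for contradiction that $ac$ is also an edge of $T_2$, and let $e$ be an edge of $T_2$ that properly intersects $ac$. The key observation is that $e\neq ac$: a proper intersection meets the two open edges at a single point, whereas an edge coincides with itself along its whole length, so an edge cannot properly intersect itself. Hence $e$ and $ac$ would be two \emph{distinct} edges of $T_2$ meeting at an interior point, i.e. they would properly intersect. But by \cref{prop: planar} the triangulation $T_2$ is a planar graph, so its edges do not intersect, which is a contradiction. Therefore $ac$ cannot be an edge of $T_2$.

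For the second statement, assume $T_1\neq T_2$. By \cref{prop: = equiv no inter} there is at least one proper intersection between the two triangulations, i.e. some edge of $T_2$ crosses some edge of $T_1$. Thus there is an edge of $T_1$ with a strictly positive intersection count, so the maximal count $\max_{\widetilde{ac}\in T_1}\#(\widetilde{ac},T_2)$ is at least $1$. Consequently every maximal edge $ac\in\argmax_{\widetilde{ac}\in T_1}\#(\widetilde{ac},T_2)$ satisfies $\#(ac,T_2)\ge 1$, so $ac$ is properly intersected by at least one edge of $T_2$. Applying the first statement then yields that $ac$ is not an edge of $T_2$, as desired.

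I do not anticipate any substantial obstacle here; the only point requiring care is the definition of proper intersection, namely that superimposed (identical) edges are \emph{not} counted as intersecting. This is precisely what guarantees that the crossing edge $e$ is distinct from $ac$ in the contradiction step, so that the planarity of $T_2$ given by \cref{prop: planar} can legitimately be invoked.
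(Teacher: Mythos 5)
Your proposal is correct and follows essentially the same argument as the paper: invoke the planarity of $T_2$ (\cref{prop: planar}) to rule out $ac\in T_2$ when some edge of $T_2$ crosses it, then combine $T_1\neq T_2$ with \cref{prop: = equiv no inter} to ensure maximal edges have at least one intersection. Your explicit remark that the crossing edge must be distinct from $ac$ (since superimposed edges do not properly intersect) is a small clarification the paper leaves implicit, but it does not change the route.
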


%\bigskip
\begin{proposition}
\label{prop: inter not on border}
    Let $T_1$ and $T_2$ be two triangulations of a same finite set of points. If $ac$ an edge of $T_1$ with at least one intersection with $T_2$, then $ac$ cannot be a border edge. Likewise, all border edges are in both $T_1$ and $T_2$.
\end{proposition}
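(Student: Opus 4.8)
The plan is to prove the two assertions in the reverse of the order in which they are stated: first establish the ``Likewise'' part, that every border edge belongs to both $T_1$ and $T_2$, and then derive the first assertion as a one-line consequence of this fact together with \cref{prop: inter edge not edge other triangu}. The shared-border property is the genuine content here; once it is in hand, the non-intersection of border edges is immediate.

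For the ``Likewise'' part I would argue straight from the border constraints. A border edge is by definition an edge of one of the polygons $B_0, B_1, \dots, B_h$ comprising $B$, and since $T_1$ and $T_2$ are triangulations of $S$ \emph{respecting the same constraint} $B$, each of them necessarily contains every edge of every $B_i$; hence all border edges lie in both triangulations. If one prefers a geometric justification to invoking the definition of a constrained triangulation, I would note that an open border segment $]ac[$ has the triangulated region on one side and, on the other side, a region that contains no point of $S$, namely either the outer face of $B_0$ or the interior of a hole. Consequently $]ac[$ cannot be crossed by any edge of a triangulation of $S$ respecting $B$, so the segment $[ac[$ meets no edge of the triangulation and \cref{prop: vertex linked to the closest edge} (with $x=c$) forces $ac$ to be an edge.

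For the first assertion I would reason by contradiction. Suppose $ac$ is a border edge of $T_1$ having at least one proper intersection with $T_2$. By the ``Likewise'' part, $ac$ is then also an edge of $T_2$. But $ac$ is an edge of $T_1$ that is intersected by $T_2$, so \cref{prop: inter edge not edge other triangu} forces $ac$ not to be an edge of $T_2$. These two conclusions are contradictory, so a border edge cannot have any intersection with $T_2$, which is exactly the first assertion.

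The only point requiring care, and the one I expect to be the main obstacle, is making precise what a ``border edge'' is and what it means for a triangulation to ``respect'' the constraint $B$, since the whole argument hinges on the border edges being common to $T_1$ and $T_2$. Once this is fixed, both parts are short: the shared-border property is essentially definitional (or follows from the brief geometric argument above), and the first assertion is a direct contradiction with \cref{prop: inter edge not edge other triangu}.
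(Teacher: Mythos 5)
Your proposal is correct and follows essentially the same route as the paper: both establish first that every border edge belongs to all triangulations respecting $B$ (you via the observation that nothing can cross into the non-triangulated region plus \cref{prop: vertex linked to the closest edge}, the paper via the same observation plus the maximality of triangulations directly), and then deduce the first assertion from \cref{prop: inter edge not edge other triangu}. The only caveat is that your purely ``definitional'' justification of the shared-border property would beg the question under the paper's definitions (which constrain the triangulated region, not its edge set), but your geometric fallback argument fills exactly that gap, so the proof stands.
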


\begin{proof}
    \Cref{prop: planar} follows from the definition of a triangulation. Triangulations are defined as maximal planar graphs, i.e. such that adding any extra edge to the triangulation would render it non planar and thus cause an intersection. One can see that if all the faces are not triangles then there is a face with 4 or more vertices. By definition, the inside of the face is empty without any edge passing through it. Therefore, we can simply add an edge between two non adjacent vertices of the face passing through the face in order to violate the maximality assumption while still being a planar graph. Reciprocally, if we have only triangles for faces, in order to be able to add an edge without causing intersections this would mean that there is a face with an available diagonal at which we can add an edge. But faces with a diagonal have at least 4 vertices, which contradicts the triangles only assumption. Thus the graph is maximally planar. Therefore, we have the equivalence of triangulations as maximal planar graphs and as connected graphs that have only triangle faces. What is important to remember is that should two edges of a triangulation meet, then necessarily they meet at a vertex of the triangulation.
\end{proof}

\begin{proof}
    \Cref{prop: no vertex inside quad} is a consequence of the fact that $T_1$ and $T_2$ are the triangulation of the same set of points $S$. As such, when defining the quadrilateral as two adjacent triangles in $T_1$, there cannot be any other vertex of $S$ inside the quadrilateral as it would be inside one of the faces or inside the diagonal edge (not on the extremities). Therefore, if an edge of $T_2$ passes inside the quadrilateral, then necessarily its vertices are either outside the quadrilateral or on its boundary. The only possibilities for a vertex being on the boundary is for it to be one of the four vertices of the quadrilateral since an edge of a triangulation cannot pass through a vertex. By definition, the edges of the boundary of the quadrilateral do not intersect the polygon as they do not pass through its interior. Thus the three possibilities: the edge in $T_2$ is a diagonal of the quadrilateral, or one of its vertices is on the polygon but the other one is outside it, or both vertices are outside it. In summary, there cannot be a vertex inside the polygon.
\end{proof}
    
\begin{proof}
    \begin{figure}[tbhp]
        \centering
        \includegraphics[width=0.3\textwidth]{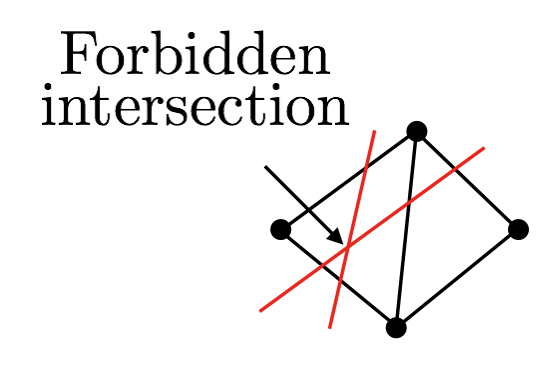}
        \caption{Edges of $T_2$ cannot meet inside a quadrilateral composed of two neighbouring triangles of $T_1$.}
        \label{fig: no intersec inside quad}
    \end{figure}
    \Cref{prop: no intersec inside quad} is proven similarly to the  previous property. See \cref{fig: no intersec inside quad} for an illustration. Two edges of $T_2$ cannot intersect since $T_2$ is planar (\cref{prop: planar}), they are either disjoint or they meet at a common vertex. Now assume two edges of $T_2$ intersect a quadrilateral of adjacent triangles of $T_1$. If these edges meet inside the quadrilateral, then there is a vertex inside it. However, since $T_1$ and $T_2$ triangulate the same set of vertices, that means that there is a vertex inside the triangulated quadrilateral, which is a contradiction. Therefore, edges of $T_2$ cannot meet inside the quadrilateral.
\end{proof}

\begin{proof}
    \begin{figure}[tbhp]
        \centering
        \includegraphics[width=0.8\textwidth]{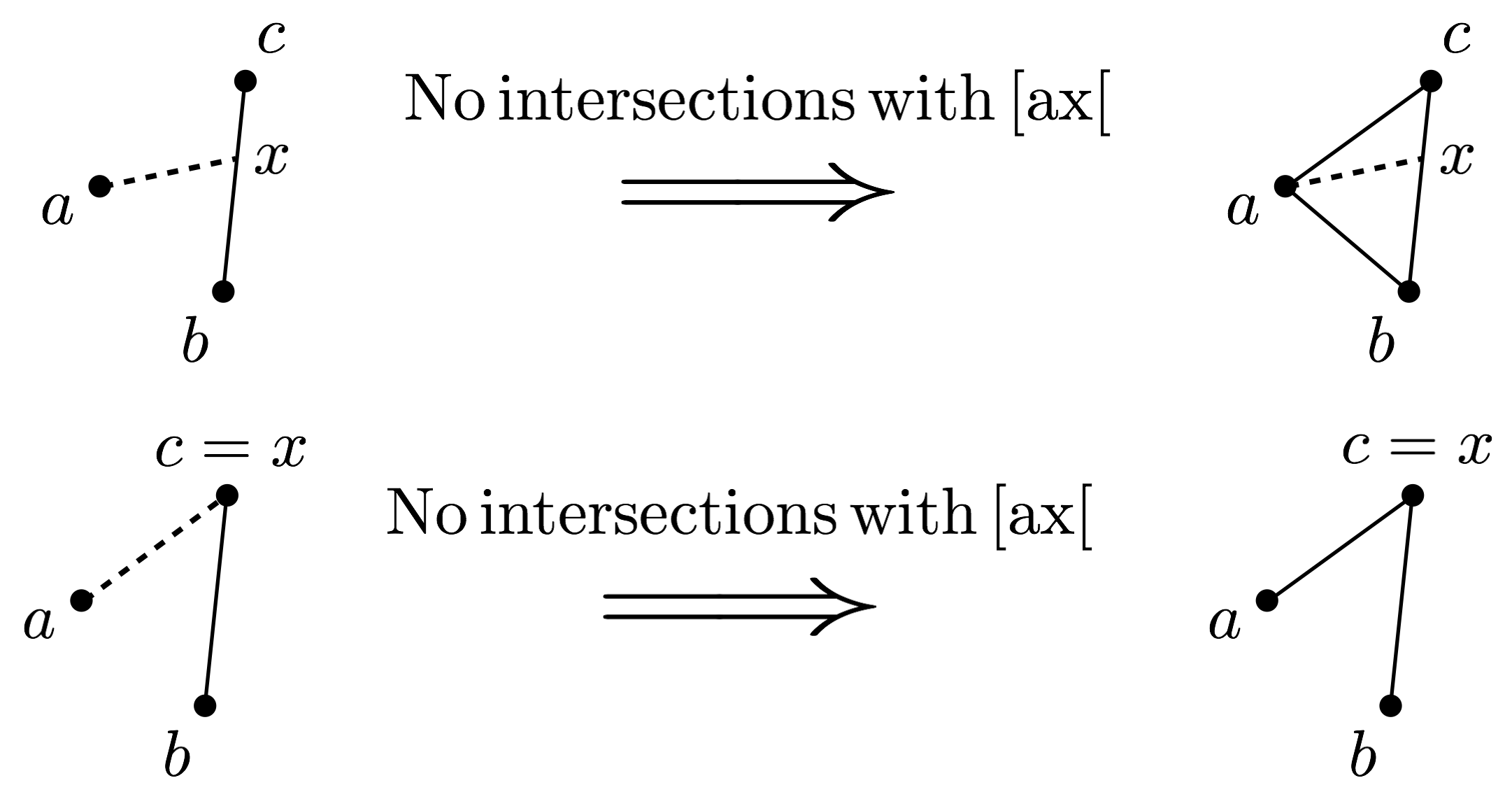}
        \caption{If $T$ has no intersection with $[ax[$ and $x$ is on an edge of $T$ without being a vertex, then $ab$ and $bc$ are edges of $T$. If instead $x$ is a vertex of $T$, then $ax$ is an edge of $T$.}
        \label{fig: vertex linked to the closest edge}
    \end{figure}
    \Cref{prop: vertex linked to the closest edge} is one of the most important tools in proving \cref{th: goal}. Both Hanke et al. \cite{hanke1996edge} and De Loera et al. \cite{de2010triangulations} implicitly use it, however they neither explicitly formulate it, nor prove it. The proof follows from the fact that triangles only have three edges. See  \cref{fig: vertex linked to the closest edge} for an illustration of both cases. Consider the case when $x$ is not a vertex of $T$ but is on an edge $bc$ of $T$ with $a\notin\{b,c\}$. The open segment $]ax[$ does not intersect with any other edge of the triangulation and is not included in one of the edges. As such, it is necessarily included in a face $F$. Since $]ax[\subset F$, $a$ is a vertex of $F$ and $bc$ is an edge of $F$. In turn, this implies that the vertices of $F$ are $a$, $b$, and $c$. Therefore, $ab$ and $ac$ are edges of $T$. If $x$ is now a vertex of $T$, then $ax$ has no intersection with any other edge of $T$. If $ax$ is not an edge of $T$, then adding it to $T$ would not create an intersection, which would violate the maximality principle of the triangulation $T$. Therefore, $ax$ is necessarily part of the triangulation.
\end{proof}

\begin{proof}
    \Cref{prop: = equiv no inter} is the fundamental property for the edge-flipping distance to be a proper distance. If the triangulations are identical, then there are no intersections. Reciprocically, assume that the triangulations have no intersections. If they are not equal, then one of the triangulations has an edge that is not in the other. Without loss of generality assume $ab$ is an edge of $T_1$ but not of $T_2$. Since $ab$ does not intersect any edge of $T_2$ then the open edge $]ab[$ is included in the interior of a face of $T_2$, and thus it is a diagonal of a face of $T_2$. Adding the edge $ab$ to $T_2$ would then produce an expanded planar graph, which violates the maximality of the triangulation $T_2$. Therefore, both triangulations share the same edges, implying that they are equal.
\end{proof}

\begin{proof}
    \Cref{prop: d_f distance measure} can be proven by simply checking that $d_f$ satisfies the defining properties of distance measures. The measure $d_f(T_1,T_2)$ is positive. Due to the fact that edge flips are invertible, with inverses being edge flips, the sequence of opposite flips from $T_2$ to $T_1$ is of minimal length. This result gives the symmetry $d_f$. If the two triangulations are identical then no edge flip is necessary and  if no edge flip is needed then the triangulations are identical (see \cref{prop: = equiv no inter}). The minimality provides the triangular inequality. If $T_1$, $T_2$, and $T_3$ are three triangulations of $S$, then the shortest sequence of edge flips from $T_1$ to $T_3$ does not necessarily pass through $T_2$ and is necessarily shorter than any sequence going from $T_1$ to $T_2$ and then from $T_2$ to $T_3$. In particular, it is shorter than the sum of minimal ones. Thus $d_f$ is a distance measure.
\end{proof}

\begin{proof}
    The proof of \cref{prop: inter edge not edge other triangu} directly follows from the planarity of triangulations. Indeed, assume an edge $ac$ of $T_1$ has an intersection with an edge $bd$ of $T_2$. If $ac$ is an edge of $T_2$, then $ac$ and $bd$ are two edges of $T_2$ that intersect, which is impossible since $T_2$ is planar. From \cref{prop: = equiv no inter}, if $T_1\neq T_2$, then they have least one intersection. This implies that the maximally intersecting edges of $T_1$ with $T_2$ are not edges of $T_2$.
\end{proof}

\begin{proof}
    \Cref{prop: inter not on border} is a consequence of \cref{prop: inter edge not edge other triangu}. Indeed, consider the border constraint $B_i$ for $0\le i\le h$. We can define a cyclical ordering on the vertices on the border polygon $B_i$, such as a clockwise ordering. Then, each border vertex is linked to its next neighbour, otherwise adding that edge would preserve planarity while respecting the border constraints (as border polygons do not overlap nor share edges) but violate the maximality of triangulations. Therefore, all triangulations of the same finite set of points share the same border edges. Since triangulations are planar, i.e. they do not have intersecting edges, and since the edges are taken to be straight and thus cannot self-intersect, then intersecting edges between triangulations cannot be border edges.
\end{proof}

\subsection{Upper-bounding the flip distance}

We are now equipped with the basic tools for proving \cref{th: goal}. In all the following, we will assume that $T_1$ and $T_2$ are two triangulations of a same finite set of points $S$ with $T_1\neq T_2$. Due to \cref{prop: = equiv no inter} there is at least one intersection. In particular, if $ac$ an edge with maximal number of intersections with $T_2$, then due to  \cref{prop: inter edge not edge other triangu} $ac$ cannot be an edge of $T_2$.

Following the sketch of the proof of this theorem outlined in the introduction of \cref{sec: edge flipping dist and inter upper bound}, the first step is to prove that all edges with maximal intersections can be flipped.

%\bigskip
\begin{lemma}
\label{lemma 1}
    All quadrilaterals of $T_1$, formed by adjacent triangles of $T_1$, that contain a diagonal with maximum number of intersection with $T_2$ are necessarily convex. 
\end{lemma}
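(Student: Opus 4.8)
The plan is to argue by contradiction: I assume a maximal edge lies on a non-convex quadrilateral and then exhibit another edge of $T_1$ that $T_2$ crosses strictly more often, contradicting maximality. Fix an edge $ac$ realizing $\max_{\widetilde{ac}\in T_1}\#(\widetilde{ac},T_2)$. By \cref{prop: = equiv no inter}, \cref{prop: inter edge not edge other triangu} and \cref{prop: inter not on border} we have $\#(ac,T_2)\ge 1$, $ac\notin T_2$, and $ac$ is not a border edge; hence $ac$ is the common diagonal of two adjacent triangles $abc$ and $acd$ of $T_1$, with $b$ and $d$ on opposite sides of the line through $a$ and $c$, forming a quadrilateral $Q$ with vertices $a,b,c,d$ in cyclic order. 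Suppose $Q$ is non-convex. The interior angles of $Q$ at $b$ and $d$ are triangle angles, hence less than $\pi$, so the reflex vertex must be an endpoint of the diagonal; swapping the labels $a\leftrightarrow c$ and $b\leftrightarrow d$ if necessary, I may assume the reflex vertex is $a$, i.e. $a$ lies strictly inside the triangle $bcd$.

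First I would record the purely local consequence of $a$ being interior to $bcd$. Writing $a$ as a strictly positive convex combination of $b,c,d$ shows that the three vectors $\vec{ab},\vec{ac},\vec{ad}$ positively span the plane, so they do not lie in any common open half-plane of directions. Since the directions from $a$ to the points of a straight segment not passing through $a$ span an angular arc of width strictly less than $\pi$, no single straight segment can meet all three of $ab$, $ac$, $ad$. In particular, every edge of $T_2$ crossing $ac$ meets \emph{at most one} of the two reflex-sides $ab,ad$; and, invoking \cref{prop: no vertex inside quad} and \cref{prop: no intersec inside quad} so that a crossing edge enters each convex triangle $abc$, $acd$ as a chord, every such edge that avoids the vertices $b,d$ crosses exactly one of $\{ab,bc\}$ and one of $\{ad,cd\}$, hence crosses a far side $bc$ or $cd$.

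Next I would manufacture the extra crossings that break maximality via \cref{prop: vertex linked to the closest edge}. Let $e_1=uv$ be the edge of $T_2$ crossing $ac$ closest to $a$, with $u$ on the $b$-side and $v$ on the $d$-side of the line $ac$. The half-open segment from $a$ to the first crossing point meets no edge of $T_2$, so \cref{prop: vertex linked to the closest edge}, applied to $T_2$, yields $au,av\in T_2$. These two edges emanate from $a$, so they cross none of $ab$, $ac$, $ad$; a cone argument at $a$ shows that when the ray $au$ falls in the angular sector $\angle bac$ it must exit the triangle $abc$ through the opposite side $bc$, and symmetrically $av$ exits through $cd$ when it falls in $\angle cad$. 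Each such crossing is a crossing of a far side by an edge of $T_2$ that does \emph{not} cross $ac$; that is, it is counted by $\#_a(bc,T_2)$ or $\#_a(cd,T_2)$ but not by $\#(ac,T_2)$. Running the mirror argument at the convex tip $c$ with the edge of $T_2$ crossing $ac$ \emph{nearest to $c$} produces, likewise, edges $cu',cv'\in T_2$ contributing extra crossings of the near sides $ab,ad$.

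Finally I would assemble the count. For each side $s\in\{ab,bc,cd,ad\}$ I would split $\#(s,T_2)$ into the edges it shares with $ac$, the generic corner-cutting edges, and the vertex-incident edges (the $\#_a$- and $\#_c$-type terms); maximality of $ac$ forces each side's total to be at most $\#(ac,T_2)$, while Step~2 controls how the $\#(ac,T_2)$ crossings distribute among the sides and Step~3 supplies at least one incident crossing that is not shared with $ac$, pushing one side's total strictly above $\#(ac,T_2)$. I expect this last step to be the main obstacle: the naive comparison of $\#(ac,T_2)$ with the four side-counts is by itself consistent with non-convexity, so the strict inequality genuinely depends on isolating the vertex-incident edges from the generic ones and on a case split according to whether the endpoints $u,v,u',v'$ point across a far side, across a near side, or into the notch $abd$ (where the direction to an endpoint need not match the side through which $e_1$ exits). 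Reconciling these placements is exactly the delicate case analysis that this paper sets out to carry through rigorously.
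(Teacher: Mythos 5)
You have the right engine---contradiction against maximality, with \cref{prop: vertex linked to the closest edge} applied to the crossing of $ac$ nearest a chosen vertex in order to manufacture edges of $T_2$ incident to that vertex---but your plan is missing the one idea that makes the count close, and your own last paragraph concedes exactly this. Your Step~1 only yields that every edge of $T_2$ crossing $ac$ crosses \emph{some} far side $bc$ or $cd$; this is compatible with the crossings splitting between the two far sides, in which case neither far side carries $\#(ac,T_2)$ crossings and the handful of extra incident edges from Steps~2--3 cannot push any single side past the maximum. The paper's resolution is a dichotomy you never establish: because $a$ is reflex, an edge of $T_2$ entering through $]ab]$ (crossing $ab$ or emanating from $b$) and an edge entering through $]ad]$ would have to cross \emph{each other} inside the quadrilateral, which \cref{prop: no intersec inside quad} forbids; hence one of the two families is empty, and \emph{all} edges crossing $ac$ cross one common far side, say $cd$. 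Maximality of $ac$ then gives not a numerical inequality but a set identity: the edges of $T_2$ crossing $cd$ are exactly those crossing $ac$. After that, a single witness finishes the stage: the crossing edge nearest to $a$ passes through $]ab]$, its far endpoint $g$ lies (again because $a$ is reflex) in the cone $cad$ beyond $cd$, so $ag\in T_2$ crosses $cd$ but not $ac$, contradicting the set identity. The paper runs this two-stage argument twice (the second time with both $bc$ and $cd$ maximal) and never needs the four-side bookkeeping you propose---precisely because, as you observed, that bookkeeping alone is consistent with non-convexity.

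Two concrete defects in the steps you do sketch. First, your mirror construction at the convex tip $c$ is unsound: if the crossing edge of $T_2$ nearest to $c$ is of the type that crosses $bc$, $ac$ and $cd$, then its endpoints $u',v'$ lie beyond $bc$ and beyond $cd$, and the edges $cu',cv'$ given by \cref{prop: vertex linked to the closest edge} lie strictly beyond the lines $bc$ and $cd$ respectively, meeting no side of the quadrilateral at all; so they supply none of the claimed ``extra crossings of the near sides $ab$, $ad$''. The paper needs no construction at $c$ whatsoever---both stages of its contradiction are run at the reflex vertex $a$. Second, your Step~2 conclusions are conditional (``when the ray $au$ falls in the sector $\widehat{bac}$\dots''), and before the dichotomy above is in place the conditions can simply fail: one may have $u=b$ (so $au=ab$ crosses nothing) or $u$ in the notch beyond $ab$ (so $au$ again crosses nothing), in which case the closest-edge lemma manufactures no extra crossing at all. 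So what you call the ``assembly'' is not a final verification but the substance of the lemma, and the tools needed for it (the no-meeting-inside property and the set identity it produces) are absent from your proposal.
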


\begin{proof}
    First, realise that if the quadrilateral is not convex, then by definition the flip of the diagonal is an impossible operation (see \cref{fig: flip conv and non conv}). Let us prove that the quadrilateral $abcd$ with diagonal $ac$ in $T_1$ is convex, where $ac$ has maximum number of intersections with $T_2$. Assume the quadrilateral is not convex and that $a$ is the reflex vertex, i.e. the outer angle $\widehat{bad}<\pi$. The trick will be to prove that all edges that intersect $ac$ will necessarily also intersect $cd$ and $cb$, which will imply by maximality that $cd$ and $cb$ also have maximal intersections with $T_2$. And then we will find an edge that intersects $cd$ or $cb$ but not $ac$ and this will violate the maximality assumption of $ac$.

    \begin{figure}[tbhp]
        \centering
        \includegraphics[width=0.3\textwidth]{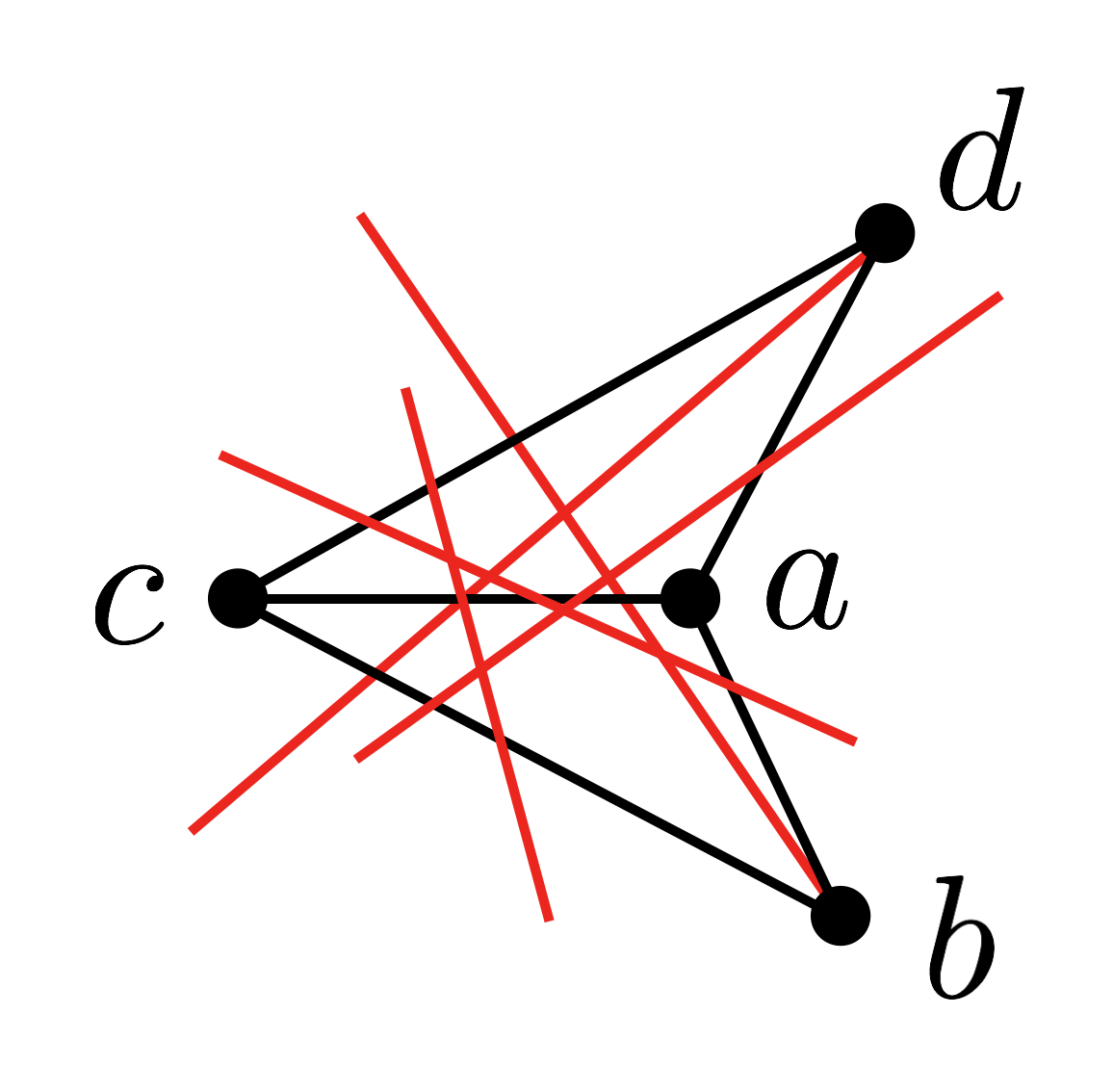}
        \caption{All 5 possible types of edges of $T_2$ that intersect $ac$ in a non convex configuration.}
        \label{fig: lemma 1 5 types intersec}
    \end{figure}
    
    There are $5$ types of possible edges that can intersect $ac$ (see \cref{fig: lemma 1 5 types intersec}): those that intersect $cb$ and $cd$ while intersecting $ac$, those that intersect $cd$ and come from $b$ while intersecting $ac$, those that intersect $ab$ and $cd$ while intersecting $ac$, those that intersect $ad$ and $cb$ while intersecting $ac$ and those that come from $d$ and intersect $cb$ and $ac$. This can be summarised in the following summation where it is important to remember that different terms ine the sum correspond to different edges:
    
    \begin{align}
    \label{ac,T2 lemma 1}
        \#(ac,T_2) = \, &\#(ab,cd,ac,T_2) + \#(bc,cd,ac,T_2) + \#(da,bc,ac,T_2) \nonumber\\
        &+ \#_b(cd,ac,T_2) + \#_d(bc,ac,T_2)
    \end{align}

    \begin{figure}[tbhp]
        \centering
        \includegraphics[width=0.6\textwidth]{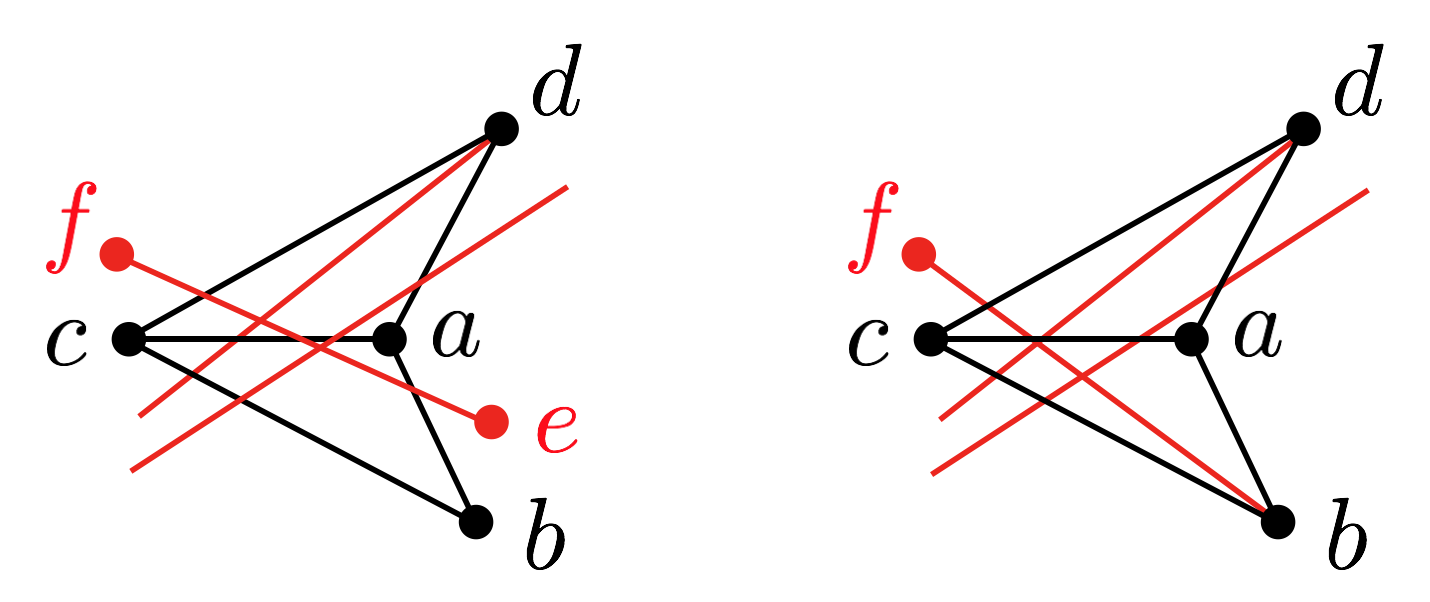}
        \caption{If $T_2$ has an edge $ef$ (or $bf$) intersecting $ac$ that passes through $]ab]$, then it cannot also have an edge intersecting $ac$ that passes through $]ad]$ since that would create illegal self intersections inside the non-convex quadrilateral because $a$ is a reflex vertex.}
        \label{fig: lemma 1 ef implies no ad}
    \end{figure}
    
    \begin{figure}[tbhp]
        \centering
        \includegraphics[width=0.7\textwidth]{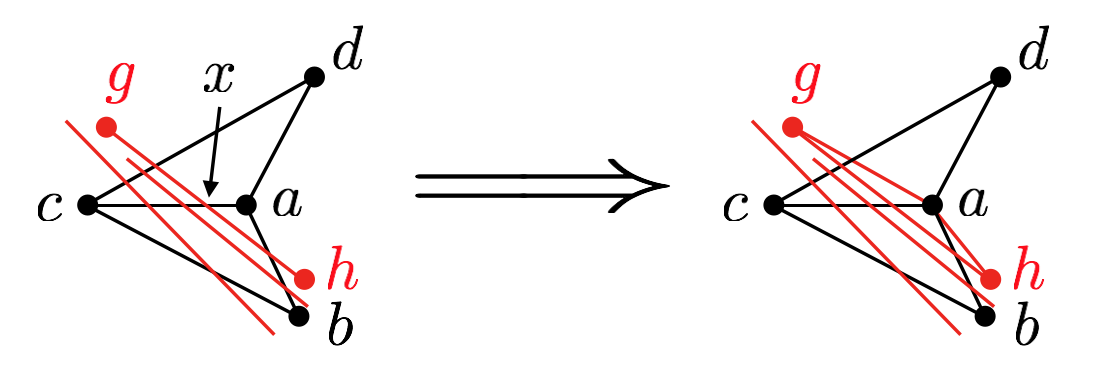}
        \caption{The closest to $a$ intersecting edge $gh$ of $T_2$ with $ac$ that passes through $]ab]$ intersects $ac$ at point $x$. In order to satisfy the triangulation assumption, necessarily $ag$ and $ah$ are in the triangulation. At least one of these two edges intersects $cd$ but they do not intersect $ac$, which violates the maximality assumption on $ac$.}
        \label{fig: lemma 1 x ag not all vertical}
    \end{figure}

    Assume that not all of them pass through $cb$ and $cd$. For instance, there is an edge $ef$ in $T_2$ that intersects $ab$ and $cd$ and that intersects $ac$, or that comes from $b$ and intersects $cd$ and $ac$ (see \cref{fig: lemma 1 ef implies no ad}). Then, since the quadrilateral is not convex, any edge that intersects $ad$ and $ac$ also intersects $ef$ inside the quadrilateral. Similarly, all edges that come from $d$ and intersect $ac$ also intersect $ef$ inside the quadrilateral. Due to \cref{prop: no intersec inside quad}, there cannot be an intersection of edges of $T_2$ inside the quadrilateral. Thus, no edge from $T_2$ passes through $]ad]$ (intersects $ad$ or comes from $d$). Therefore, all edges intersecting $ac$ necessarily intersect $cd$. We can now zero all terms that do not use $cd$ in the previous formula:
    \begin{equation}
        \#(ac,T_2) = \, \#(ab,cd,ac,T_2) + \#(bc,cd,ac,T_2) + \#_b(cd,ac,T_2).
    \end{equation}
    
    Therefore, there are at least as many edges intersecting $cd$ than $ac$. By maximality of $ac$, $ad$ is maximal and cannot have any other edge intersecting it that does not also intersect $ac$. Denote $x$ the closest intersection point on $ac$ to $a$ (see \cref{fig: lemma 1 x ag not all vertical}). This point $x$ does not belong to the vertices $S$ but lies on an edge $gh$ of $T_2$. Then $]ax[$ does not intersect any edge of $T_2$ and is not included in an edge of $T_2$ since $ac$ is not in $T_2$ because it has intersections with the triangulation (\cref{prop: inter edge not edge other triangu}). Therefore, $ag$ and $ah$ are edges of $T_2$ (\cref{prop: vertex linked to the closest edge}).

    Since $gh$ intersects $cd$, one of its vertices is on the other half-plane delimited by $cd$ from $a$. Without loss of generality choose this vertex to be $g$. Furthermore, since $gh$ also passes through the semi-open edge $]ab]$, $g$ also lies in the cone defined by $cba$ (in the direction of the face $cba$). Because $a$ is a reflex vertex, then all points inside this cone and on the opposite half-plane delimited by $cd$ are also in the cone delimited by $cad$, implying that they would create an intersection with $cd$ when linking them with $a$. In particular, this result holds for $g$. Therefore, $ag$ intersects $cd$. However, $ag$ does not intersect $ac$. This is contradictory to the previous results we established: all edges intersecting $ac$ intersect $cd$ and by maximality there are no edges intersecting $cd$ that do not intersect $ac$. We have thus proven that all edges intersecting $ac$ also intersect $cd$ and $bc$.
    
    \begin{figure}[tbhp]
        \centering
        \includegraphics[width=0.7\textwidth]{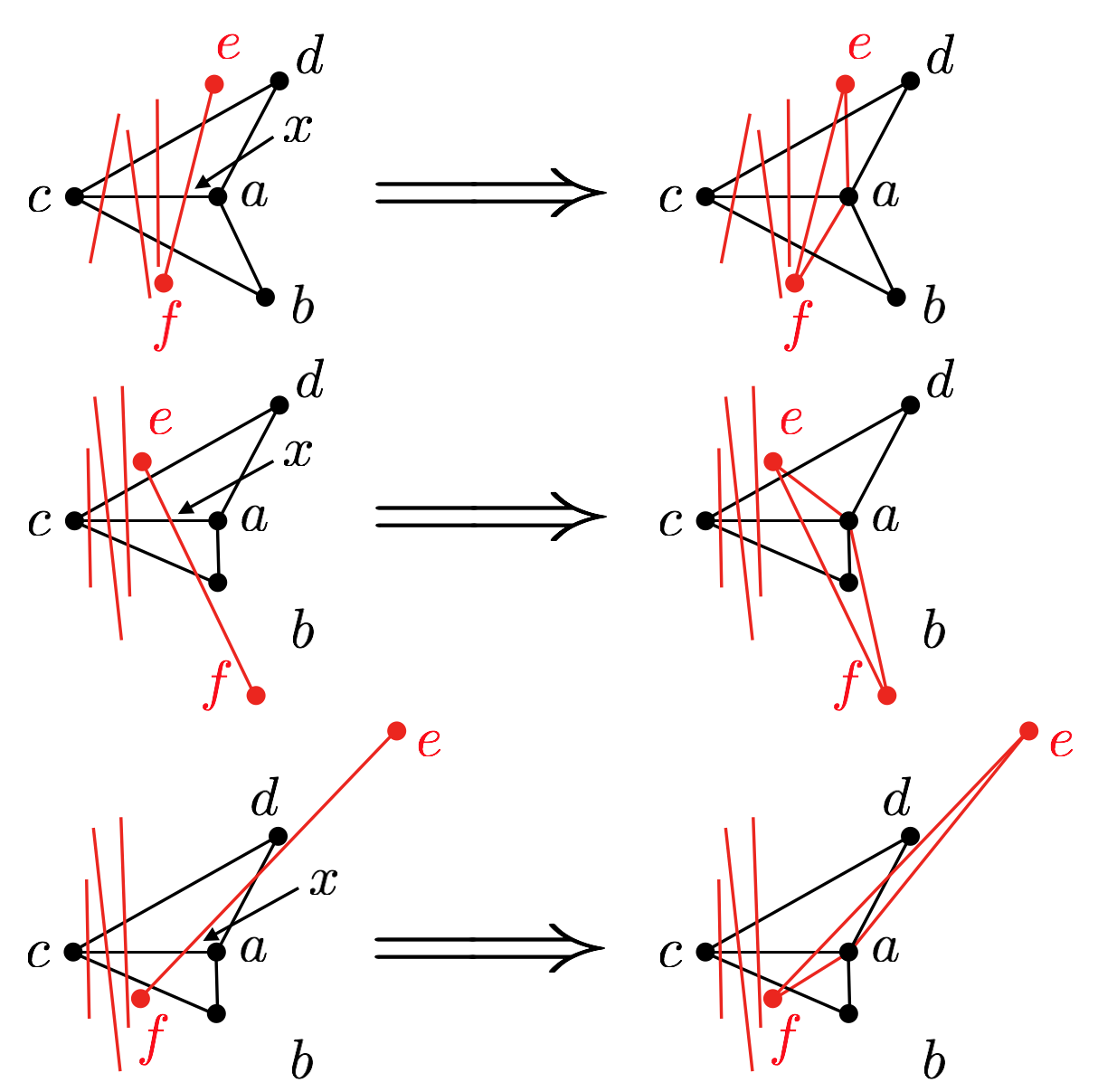}
        \caption{All edges of $T_2$ intersecting $ac$ also intersect $cd$ and $bc$, which implies that they are also maximal. Thus all edges intersecting $cd$ or $bc$ are exactly those that intersect $ac$. Consider the closest edge intersecting $ac$, at point $x$. In order to satisfy the triangulation assumption, $ae$ and $af$ are edges of $T_2$, but at least one (not necessarily both) of these edges intersects either $cd$ or $bc$ but not $ac$, which leads to a contradiction.}
        \label{fig: lemma 1 x ae all vertical}
    \end{figure}

    This result implies that the number of intersections of $cd$ with $T_2$ is at least the number of intersections of $ac$ with $T_2$. By maximality of $ac$, $cd$ is also maximal. Similarly, $cb$ is also maximal. All edges that intersect $cd$ or $cb$ must intersect $ac$. We now apply a similar reasoning to what we previously did (see \cref{fig: lemma 1 x ae all vertical}). Since the intersecting edges of $ac$ intersect $cd$ and $cb$, they cannot come from $d$ or $c$ nor pass through $ad$ or $ab$. Therefore, if we denote $ef$ an intersecting edge of $ac$, then necessarily $e$ and $f$ are different from $d$ and $b$, which implies that they both need to be outside of the quadrilateral as there are no vertices inside the quadrilateral. 
    
    Consider $ef$ the edge intersecting $ac$ at point $x$ closest to $a$. Because it is the closest to $a$, $[ax[$ has no intersection with $T_2$, with $x$ on the edge $ef$ of $T_2$. Thus $ae$ and $af$ are edges of $T_2$ (see \cref{prop: vertex linked to the closest edge}). Furthermore, because $a$ is a reflex vertex, at least one of the vertices $e$ or $f$ needs to be inside one of the cones $cad$ or $cab$, as otherwise $ef$ would not intersect $ac$. Assume without loss of generality that $e$ lies in the cone $cad$ (in the direction of the face $cad$ but on the other side from $a$ of the edge $cd$). Then $ea$, an edge of $T_2$, intersects $cd$, but does not intersect $ac$. This statement is in contradiction to the previous results: all edges of $T_2$ intersecting $cd$ (or $cb$) intersect $ac$. 
    Given the assumption that the quadrilateral $abcd$ is not convex, we have systematically reach a contradiction in all cases. Therefore $abcd$ is necessarily convex.
\end{proof}

Hanke et al. \cite{hanke1996edge} proved a weaker version of \cref{lemma 1}. Indeed, they only showed that there exists at least one convex quadrilateral with diagonal having maximal intersections. This is due to the fact that once they proved that all edges intersecting $ac$ also intersect $cd$ and $bc$, they then claimed that these two edges could not be on the border of the convex hull (or more generally on the border of the triangulation). This observation led them to reiterate their argument by induction on the next quadrilateral until either reaching a desired convex quadrilateral with diagonal having maximal intersections or reaching a contradiction as they would have to reach the border of the convex hull (or more generally the border of the triangulation). These argument are not incorrect. However, the proof of \cref{th: goal} requires the knowledge that all quadrilaterals with diagonal having maximal number of intersections must be convex. Unfortunately, Hanke et al. \cite{hanke1996edge} implicitly assume this knowledge, that was neither claimed nor proven when they call upon crucial similar inductive arguments in the rest of their proof. 

De Loera et al. \cite{de2010triangulations} reinforced the original paper's claim and fixed the last part of the proof of \cref{lemma 1}, by invoking the edge $ef$ of $T_2$ intersecting $ac$ closest to $a$, to show that all quadrilaterals with diagonal having maximal intersections are convex. However, their proof is incomplete and their illustration of the edge $ef$ does not tell the full story. Indeed, they are misled by a non general drawing that omits a possible case. Their illustration corresponds to that of the top case in \cref{fig: lemma 1 x ae all vertical}, with an incorrect claim that since $ef$ intersects $ac$ and $cd$, then $ae$ intersects $cd$. This claim is wrong as shown on the bottom of \cref{fig: lemma 1 x ae all vertical}. Indeed, merely having that $ef$ intersects $ac$, $bc$, and $cd$ only implies that at least one of $ae$ or $af$ intersects $cd$ or $bc$. Note that while this may seem like nitpicking, in a contradiction proof based on a hierarchy of case analyses, it is essential to explore all cases. In particular, while graphical illustrations provide insight, they can easily mislead by omitting cases and hence lead to false claims. While in this case the omission has only a minor impact, as one could debate that the reasoning was implicitly without loss of generality, and while one easily completes the considerations for the overlooked cases, such oversights could be much more detrimental at later stages of the proof.

Let us return to the proof \cref{th: goal}. We now know that all edges in $T_1$ with maximal intersections with $T_2$ can be flipped. Unfortunately, not all maximally intersecting edges will necessarily reduce the total number of intersections with $T_2$. However, we only need to find one of these maximal edges that reduces the number of intersections after being flipped. Next, we will scrutinise a class of configurations of quadrilaterals with maximal intersections diagonal for which we are guaranteed to strictly reduce the number of intersections with $T_2$ by flipping the diagonal edge.

%\bigskip
\begin{lemma}
\label{lemma 2}
    Let $abcd$ be a convex quadrilateral in $T_1$ with diagonal $ac$ with maximum number of intersections with $T_2$. If $T_2$ contains an edge $eb$ intersecting $da$ or $cd$, or if it contains and edge $dg$ intersecting $ab$ or $bc$, or if $bd$ an edge of $T_2$, then flipping $ac \xrightarrow{flip} bd$ reduces the total number of intersections with $T_2$. In other words, if an edge of $T_2$ intersects a maximal edge of $T_1$ and comes from a vertex of the quadrilateral around the maximal edge, then flipping that maximal edge reduces the number of intersections.
\end{lemma}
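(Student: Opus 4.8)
The plan is to reduce everything to a comparison of two numbers. Since a flip only deletes $ac$ and inserts $bd$ while leaving all other edges of $T_1$ untouched, the total number of intersections changes by exactly $\#(bd,T_2)-\#(ac,T_2)$; hence it suffices to prove $\#(bd,T_2)<\#(ac,T_2)$. The case $bd\in T_2$ is immediate: then no edge of $T_2$ can properly cross $bd$ by \cref{prop: inter edge not edge other triangu}, so $\#(bd,T_2)=0$, while $\#(ac,T_2)\ge 1$ because $ac$ is a maximal edge of a triangulation different from $T_2$. For the remaining cases I would classify every edge of $T_2$ that crosses the convex quadrilateral $abcd$ (by \cref{prop: no vertex inside quad} its endpoints lie outside or at the corners) according to which diagonal it meets: let $A$, $B$, $C$ count the edges crossing $ac$ only, $bd$ only, and both, respectively. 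Then $\#(ac,T_2)=A+C$ and $\#(bd,T_2)=B+C$, so the goal becomes $A>B$. The hypothesis edge immediately lands in $A$: an edge $eb$ emanating from the vertex $b$ shares the endpoint $b$ with $bd$, hence cannot properly cross $bd$, yet to reach $da$ or $cd$ it must cross $ac$, which separates $b$ from those two sides.

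Next I would exploit planarity to localise $B$ and $C$. Take the hypothesis edge, say $eb$ from $b$ meeting $cd$ at a point $m$; the chord $bm$ splits the quadrilateral into the triangle $R_1$ containing the corner $c$ and the quadrilateral $R_2$ containing the corners $a$ and $d$. By \cref{prop: no intersec inside quad} no other edge of $T_2$ may cross $eb$ inside the quadrilateral, so every other crossing edge lies entirely in $\overline{R_1}$ or in $\overline{R_2}$, the edges emanating from $b$ being the only ones allowed to touch $eb$. The crucial observation is that the segment $bd$ runs from $b$ into $R_2$, since $d\in R_2$, so its relative interior avoids $R_1$ altogether; consequently no edge lying in $R_1$ can cross $bd$, and therefore every edge counted by $B$ or $C$ lies in $R_2$.

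Finally I would bring in the maximality of $ac$ through the very side that $eb$ crosses. After the localisation one may write $A=1+A_{R_1}+A_{R_2}$, the $1$ being $eb$ itself, together with $B=B_{R_2}$ and $C=C_{R_2}$. I then observe that $eb$, every $A_{R_1}$-edge of the $c$-triangle, and every $B_{R_2}$- and $C_{R_2}$-edge all cross the boundary edge $cd$, so that $\#(cd,T_2)\ge 1+A_{R_1}+B_{R_2}+C_{R_2}$. Combining this with the maximality inequality $\#(ac,T_2)\ge\#(cd,T_2)$ and the identity $\#(ac,T_2)=1+A_{R_1}+A_{R_2}+C_{R_2}$ yields $A_{R_2}\ge B_{R_2}$, hence $A=1+A_{R_1}+A_{R_2}\ge 1+B_{R_2}=1+B>B$, which is exactly $\#(bd,T_2)<\#(ac,T_2)$. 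The subcase where $eb$ meets $da$, and the symmetric hypotheses involving an edge $dg$ from $d$ meeting $ab$ or $bc$, follow verbatim after interchanging the roles of $b$ and $d$ and using the appropriate boundary side. I expect the main difficulty to be the bookkeeping in these last two paragraphs: rigorously arguing that the $c$-side $R_1$ is free of $bd$-crossings, and verifying that exactly the edges appearing on the two sides of the maximality inequality are the ones crossing the chosen boundary side, so that the counts cancel to leave the clean inequality $A_{R_2}\ge B_{R_2}$.
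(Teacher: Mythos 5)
Your proposal is correct and takes essentially the same route as the paper: both dispose of the case $bd\in T_2$ immediately, then use the hypothesis edge from $b$ (or $d$) as a barrier—via \cref{prop: no intersec inside quad}—to force every edge of $T_2$ crossing $bd$ to cross the side of $abcd$ that the hypothesis edge crosses, and conclude by maximality of $ac$ against that side, with the hypothesis edge itself supplying the strict gap of one. The only differences are presentational: the paper enumerates the finitely many types of $bd$-crossing edges and excludes the bad ones case by case, and it bounds $\#(bd,T_2)\le\#(ad,T_2)-1$ directly, whereas you achieve the same localisation by splitting the quadrilateral along the chord $bm$ and carry the harmless but unnecessary extra term $A_{R_1}$ through the maximality inequality.
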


\begin{proof}
    First, look at the easy case where $bd$ is an edge of $T_2$. As $bd$ belongs to $T_2$, then it has no intersections with this triangulation. On the other hand, since $T_1\neq T_2$ and $ac$ has maximal number of intersections with $T_2$, then $ac$ has at least one intersection with $T_2$. Therefore flipping $ac$ to $bd$ reduces the number of intersections by $\#(ac,T_2) - \#(bd,T_2) = \#(ac,T_2)\ge 1$.
    
    \begin{figure}[tbhp]
        \centering
        \subfloat[]{%\label{fig: }
        \includegraphics[width=0.3\textwidth]{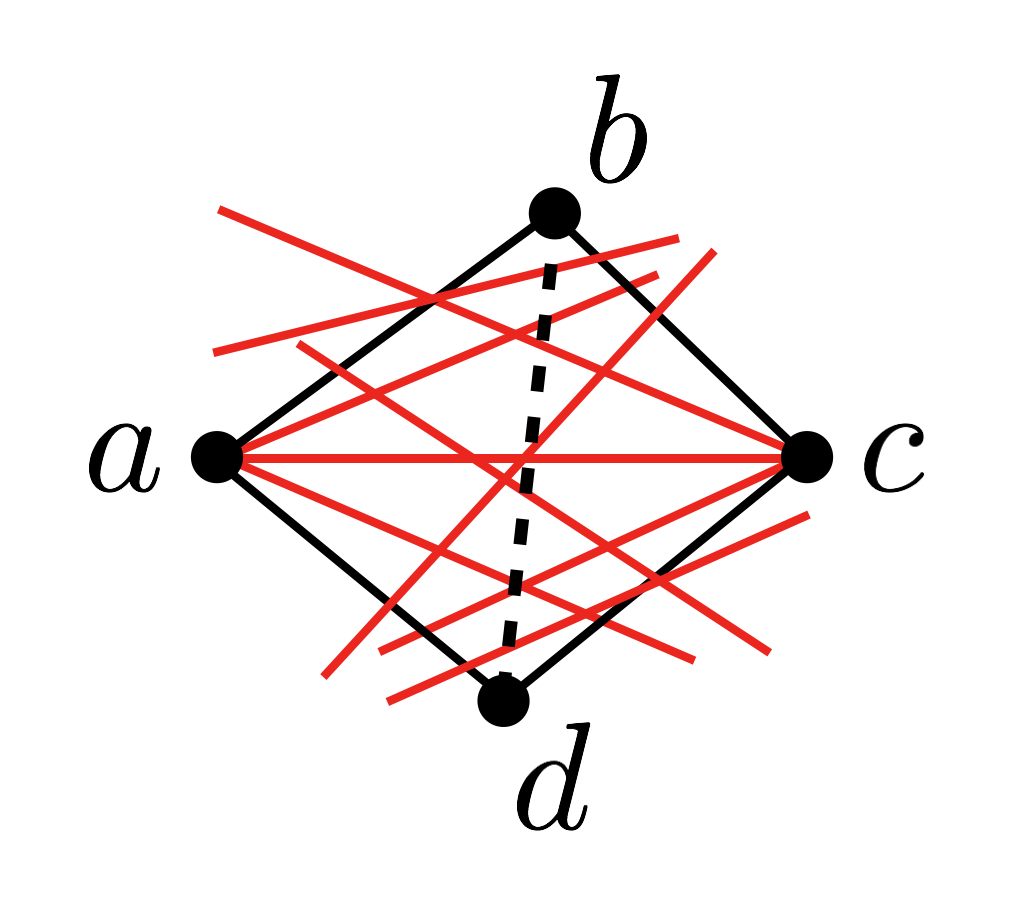}}
        \hfill
        \subfloat[]{%\label{fig: }
        \includegraphics[width=0.3\textwidth]{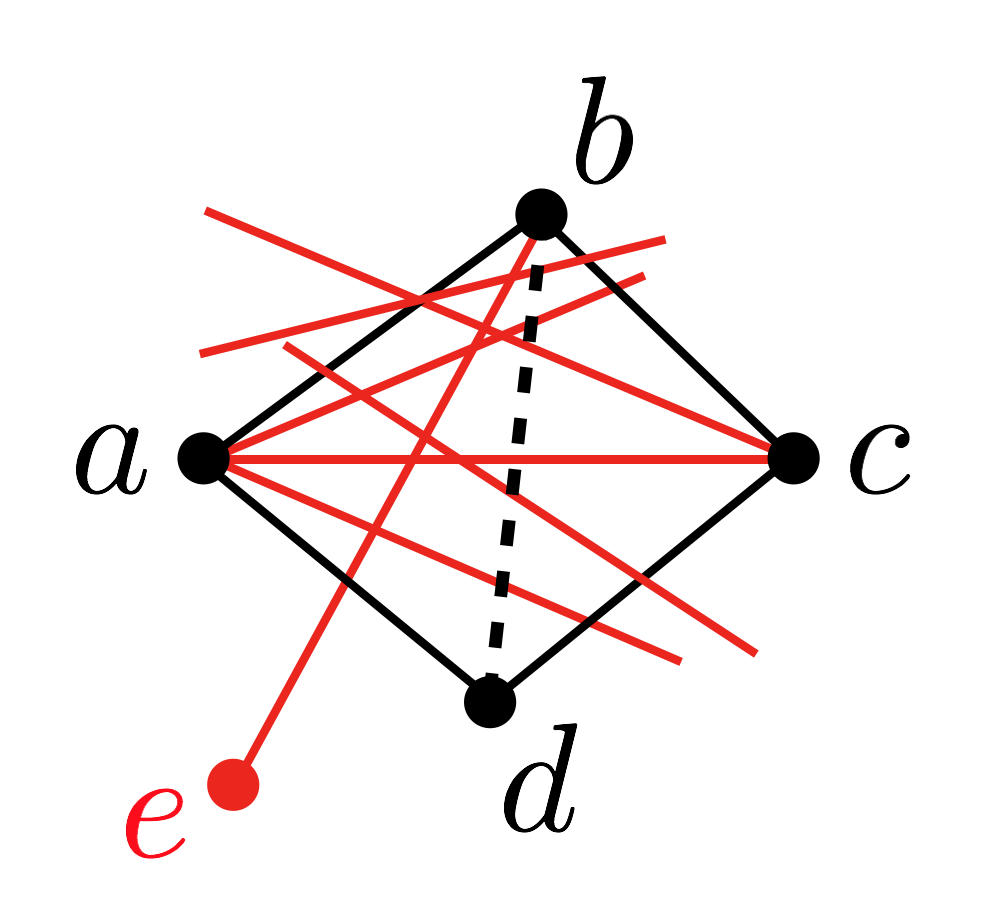}}
        \hfill
        \subfloat[]{%\label{fig: }
        \includegraphics[width=0.3\textwidth]{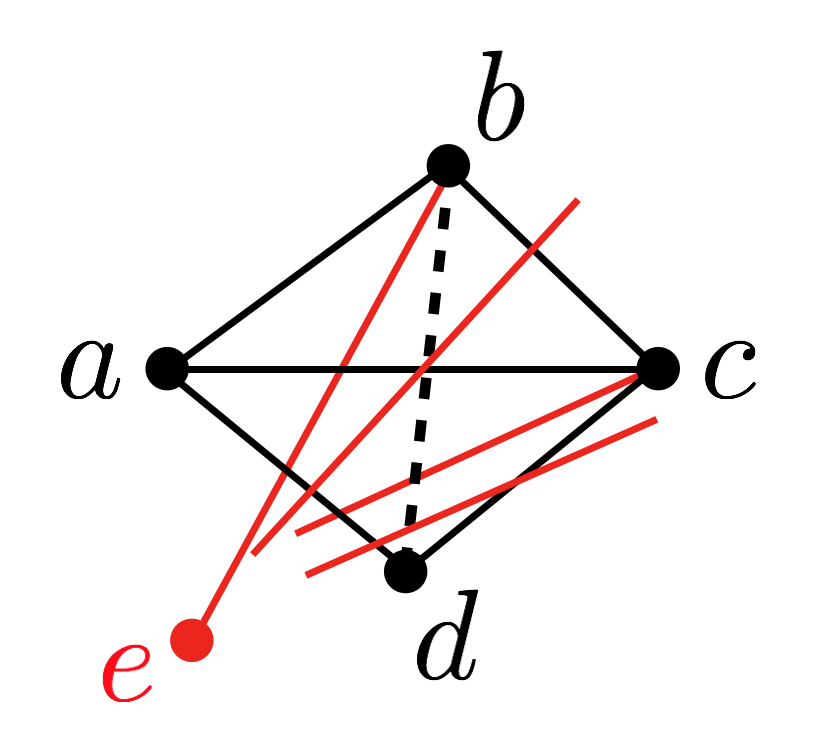}}
        \\ 
        \caption{Left: all 9 types of possible edges of $T_2$ that can intersect $bd$. Middle: the assumed existing edge $be$ along with all 6 types of possible edges of $T_2$ that can intersect $bd$ but not $ad$. All these edges must necessarily intersect $be$ inside the quadrilateral, implying that they cannot exist. Right: the assumed existing edge $be$ along with all 3 types of possible edges of $T_2$ that can intersect $bd$ and $ad$. All edges intersecting $bd$ intersect $ad$, but $be$ intersects $ad$ and not $bd$. Thus, there is at least one less intersection with $bd$ than with $ad$ and a fortiori with the maximum $ac$.}
        \label{fig: lemma 2}
    \end{figure}

    Let us now consider the other cases. The idea is to look at the edges in $T_2$ that intersect $bd$. The assumption of an existing edge coming from $b$ (or $d$) intersecting $ac$ and one other edge of the quadrilateral will imply that edges intersecting $bd$ will be restricted to intersect that same edge of the quadrilateral in order to avoid illegal intersections or meetings of edges of $T_2$ inside the quadrilateral. The concluding argument will be that $bd$ is an edge intersecting an edge of the quadrilateral but it does not intersect $bd$, which implies that $bd$ has at least one less intersection with $T_2$ than $ac$ has.
    
    Assume without loss of generality that we are in the case where $eb$ intersects $ac$ and $ad$ (see \cref{fig: lemma 2}). Let us look at edges that intersect $bd$. Note that we do not have $bd$ in $T_1$, but we can still estimate the number of intersections between this segment and $T_2$ and compare it with the number of intersections of $ac$ to justify the flip. Edges that intersect $bd$ must either intersect $ad$ or exclusive not intersect $ad$ and be one of the six following: intersect $ab$ and $bc$, pass through $a$ and intersect $bc$, pass through $c$ and intersect $ab$, intersect $ab$ and $dc$, pass through $a$ and intersect $dc$ or be $ac$. We will show that necessarily such edges must intersect $be$. First, $ac$ has at least one intersection with $T_2$, therefore it cannot be in $T_2$. Second, any edge passing through the semi-open segments $[ab[$ and $]bc]$ will intersect $eb$ inside the triangle $abc$, which is impossible since both triangulations share the same vertices. Similarly, any edge passing through the semi-open segments $[ab[$ and $]dc]$ will intersect $eb$ inside the quadrilateral, which is forbidden for the same reasons. As such, the only possibility for edges intersecting $bd$ are edges intersecting $ad$. In addition, $eb$ intersects $ad$, but it does not intersect $bd$. Therefore, $bd$ has at least one less intersections with $T_2$ than $ad$, which in turn is smaller than the maximal number of intersections: $\#(bd,T_2) \le \# (ad,T_2) -1 < \mathrm{argmax}_{\widetilde{ac}\in T_1} = \#(ac,T_2)$. Therefore, flipping $ac \xrightarrow{flip} bd$ reduces the number of intersections with $T_2$ by at least one.
\end{proof}

Note that Hanke et al. \cite{hanke1996edge} omit in their proof the case $bd$ in $T_2$ in \cref{lemma 2}. The claim and proof still hold, and it is not mathematically detrimental later in the proof as they correctly argue that ``if the edge-flipping operation $ac\xrightarrow[]{} bd$ decreases the number of intersections between the triangulations $T_1$ and $T_2$, then we are done'' and have found a maximal edge that reduces the total number of intersections, which is the goal for proving \cref{th: goal}. Indeed, $bd$ is an edge of $T_2$ provides such a case. Explicitly mentioning this case in \cref{lemma 2} greatly improves clarity of the main proof of the paper, while simultaneously yielding a sanity check that we did not forget any possible case in the case based proof. Although De Loera et al. \cite{de2010triangulations} do not explicitly write down \cref{lemma 2} (surely due to length constraints of their book), they wisely explicitly mention the case when $bd$ is an edge of $T_2$.

Before diving into the proof of the main theorem, we shall next show that if a quadrilateral has a maximal intersection diagonal, then it cannot belong to a specific class of configurations. This will later help us to find a maximal edge that  reduces the number of intersections when flipped.

%\bigskip
\begin{lemma}
\label{lemma 2.2}
    If $abcd$ is a convex quadrilateral in $T_1$ with diagonal $ac$ of maximal intersections with $T_2$, then $T_2$ cannot have an edge $ae$ intersecting $bc$ or $cd$. Similarly, $T_2$ cannot have an edge $cf$ intersecting $ab$ or $ad$. Also, $T_2$ cannot have the edge $ac$. In summary, edges of $T_2$ that come from $a$ or $c$ cannot intersect the polygon $abcd$.
\end{lemma}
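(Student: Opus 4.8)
The plan is to dispose of the trivial case first and then reduce everything else to one geometric argument that is repeated under symmetry. The edge $ac$ cannot belong to $T_2$: since $T_1\neq T_2$ and $ac$ has maximal intersections with $T_2$, \cref{prop: = equiv no inter} gives $\#(ac,T_2)\ge 1$, so by \cref{prop: inter edge not edge other triangu} the edge $ac$ is not in $T_2$. For the remaining statements I would argue by contradiction and exploit the symmetries of the quadrilateral: reflecting across the diagonal $ac$ swaps $b$ and $d$, while relabelling $a\leftrightarrow c$ (which preserves both the quadrilateral and its diagonal $ac$) turns any claim about edges emanating from $c$ into the corresponding claim about edges emanating from $a$. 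Hence it suffices to rule out a single representative configuration, say an edge $ae$ of $T_2$ emanating from $a$ and intersecting $cd$; the case of $ae$ crossing $bc$, and the two cases of an edge $cf$ from $c$ crossing $ab$ or $ad$, then follow verbatim after applying the appropriate symmetry.

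For the representative case, suppose towards a contradiction that $T_2$ has an edge $ae$ crossing $cd$, and let $q$ denote the intersection point $ae\cap cd$. Since $abcd$ is convex, $ac$ splits it into the triangles $abc$ and $acd$, and the sub-segment $aq$ of $ae$ lies inside $acd$ and divides it into the two triangles $acq$ and $aqd$, where $acq$ is the one having $ac$ as a side. The heart of the argument is to show that \emph{every} edge of $T_2$ crossing $ac$ must also cross $cd$. Indeed, such an edge meets $ac$ at an interior point $p$ and therefore enters triangle $acq$ through the side $ac$; being a straight segment that is not collinear with $ac$ (a proper crossing meets $ac$ at a single point), it cannot leave $acq$ again through $ac$, nor can it pass through the vertices $a$ or $c$ without being collinear with $ac$, nor through $q$ without meeting $ae$ there. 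It must therefore exit through the open side $aq\subset ae$ or the open side $cq\subset cd$. Exiting through $aq$ would force two edges of $T_2$ to meet at an interior point of the quadrilateral, which is forbidden by \cref{prop: no intersec inside quad}; hence it exits through $cq$ and crosses $cd$.

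This establishes that the set of edges of $T_2$ crossing $ac$ is contained in the set of those crossing $cd$. However, $ae$ itself crosses $cd$ at $q$ while it does \emph{not} cross $ac$, since $ae$ and $ac$ share the vertex $a$ and their open segments are disjoint. Consequently $\#(cd,T_2)\ge \#(ac,T_2)+1 > \#(ac,T_2)$, and since $cd$ is an edge of $T_1$ this contradicts the maximality of $\#(ac,T_2)$; the remaining cases follow by symmetry as explained above. I expect the delicate step to be the geometric bookkeeping of the previous paragraph, namely justifying precisely that an $ac$-crossing edge cannot escape the triangle $acq$ except through $cd$, in particular excluding the degenerate possibilities of its passing through the vertices $a$, $c$ or through the crossing point $q$. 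This is exactly the place where an over-specific figure could silently omit a case, so I would take care to argue it from planarity (\cref{prop: no intersec inside quad}) and the single-point nature of proper crossings rather than from a picture.
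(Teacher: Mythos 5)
Your proof is correct, and its skeleton coincides with the paper's: assume an edge $ae$ of $T_2$ emanating from $a$ crosses one side of the quadrilateral, show that every edge of $T_2$ crossing $ac$ must then also cross that same side, and conclude that this side has strictly more intersections with $T_2$ than $ac$ does (since $ae$ crosses it but not $ac$), contradicting the maximality of $ac$; the remaining configurations follow by relabelling, exactly as in the paper. Where you genuinely differ is in how the containment claim is established. The paper fixes the side $bc$ and exhaustively enumerates the six types of edges that could cross $ac$ without crossing $bc$ (crossing $ab$ and $ad$, coming from $b$ and crossing $ad$, coming from $d$ and crossing $ab$, crossing $ab$ and $dc$, coming from $b$ and crossing $cd$, or being the diagonal $bd$), checking that each type would be forced to meet $ae$ inside the triangle $abc$, which \cref{prop: no intersec inside quad} forbids. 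You instead prove the containment uniformly: the sub-segment $aq$ of $ae$ cuts the triangle $acd$ into $acq$ and $aqd$, and any edge crossing $ac$ enters $acq$ and can only leave it through the open side $cq\subset cd$, every other exit being blocked. This buys robustness — it is immune to the case-omission risk that the paper itself repeatedly warns about — at the price of more delicate point-set bookkeeping, and two citations are needed to make that bookkeeping airtight. First, to rule out the crossing edge \emph{terminating} inside $acq$ you should invoke \cref{prop: no vertex inside quad}: the endpoints of a $T_2$-edge are vertices of $S$, hence cannot lie in the interior of the quadrilateral nor in the relative interior of $ac$, $cd$, or $aq$. Second, for the exit through the point $q$ itself, note that $q$ lies on the boundary edge $cd$ rather than in the interior of the quadrilateral, so the prohibition does not follow from \cref{prop: no intersec inside quad}; it follows from the planarity of $T_2$ (\cref{prop: planar}), since two edges of $T_2$ may only meet at a common vertex and $q$, being a proper crossing point, is not one. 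With those two references in place your argument is complete and, if anything, cleaner than the enumeration it replaces.
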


\begin{proof}
    The discussion for $ac$ is trivial. Indeed, this edge cannot be an edge of $T_2$ as we assumed $T_1\neq T_2$ (see \cref{prop: inter edge not edge other triangu}).
    
    \begin{figure}[tbhp]
        \centering
        \subfloat[]{%\label{fig: }
        \includegraphics[width=0.3\textwidth]{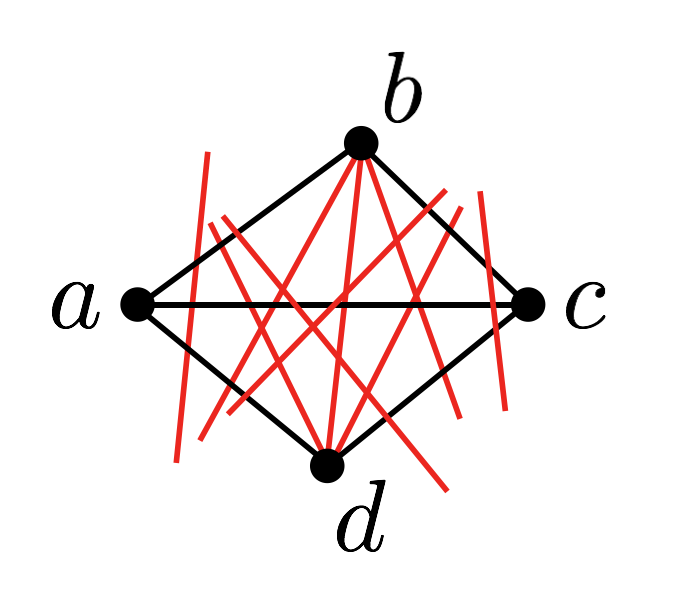}}
        \hfill
        \subfloat[]{%\label{fig: }
        \includegraphics[width=0.3\textwidth]{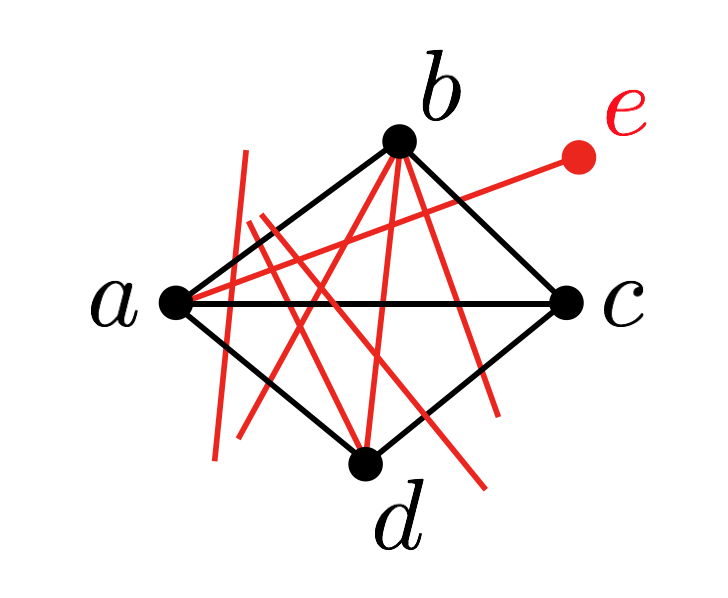}}
        \hfill
        \subfloat[]{%\label{fig: }
        \includegraphics[width=0.3\textwidth]{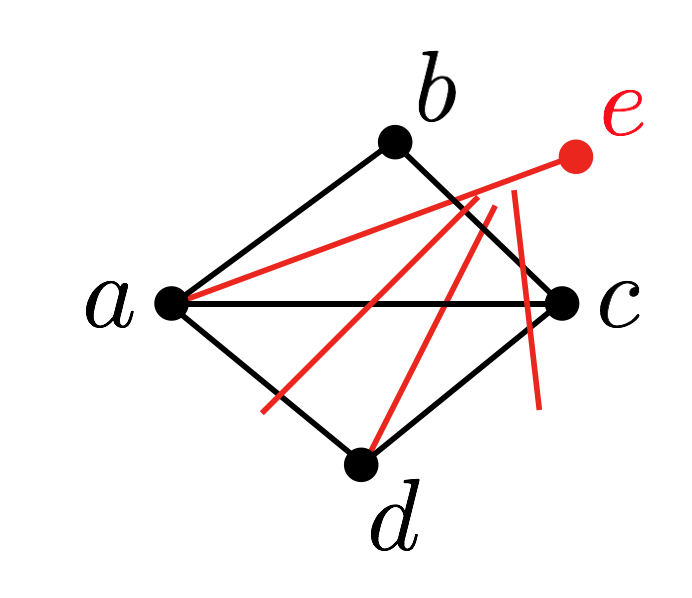}}
        \\ 
        \caption{Left: all 9 types of possible edges of $T_2$ that can intersect $ac$. Middle: the assumed existing edge $ae$ along with  all 6 types of possible edges of $T_2$ that can intersect $ac$ but not $bc$. All these edges must necessarily intersect $ae$ inside the quadrilateral, implying that they cannot exist. Right: the assumed existing edge $ae$ along with  all 3 types of possible edges of $T_2$ that can intersect $ac$ and $bc$. All edges intersecting $ac$ intersect $bc$. By maximality of $ac$, $bc$ is also maximal and no edge can intersect it that does not also intersect $ac$. However, $ae$ intersects $bc$ and not $ac$, which leads to a contradiction.}
        \label{fig: lemma 2.2}
    \end{figure}
    
    Let us now consider the other cases. Assume without loss of generality that there is an edge $ae$ in $T_2$ intersecting $bc$ (see \cref{fig: lemma 2.2}). We want to show that all edges intersecting $ac$ will intersect $bc$. This fact would imply that $bc$ would have maximal intersections with $T_2$ by maximality of $ac$ and then that $ac$ and $bc$ would intersect the same edges. This result would lead to a contradiction since $ae$ intersects $bc$ but not $ac$. 
    
    We look at the different possibilities for an edge intersecting $ac$ that does not intersect $bc$: the edge intersects $ab$ and $ad$, or it comes from $b$ and intersects $ad$, or it intersects $ab$ and comes from $d$, or it intersects $ab$ and $dc$, or it comes from $b$ and intersects $cd$, or it is simply $bd$. In any of these cases, the presence of such an edge will create an intersection with $ae$ inside the triangle $abc$, which is forbidden since both $T_1$ and $T_2$ are triangulations of the same set of points. Therefore, all edges intersecting $ac$ must intersect $bc$. Furthermore, $ae$ intersects $bc$ but does not intersect $ac$. Hence, $bc$ has at least one more intersection with $T_2$ than $ac$. This result violates the maximality assumption on $ac$. Thus, $T_2$ cannot have such an edge $ae$ intersecting $bc$.
    
    In summary, we have proven that $T_2$ cannot have an edge coming from $a$ intersecting the polygon $abcd$. Likewise, $T_2$ cannot either have an edge coming from $c$ intersecting $abcd$.
\end{proof}

Both Hanke et al. \cite{hanke1996edge} and De Loera et al. \cite{de2010triangulations} prove and use \cref{lemma 2.2} without having written it out explicitly. We believe that this choice is detrimental to clarity as this result is invoked with the same importance as \cref{lemma 2}. Furthermore, writing it out in a separate lemma avoids some awkward or unnecessarily complex formulations. Indeed, De Loera et al. \cite{de2010triangulations} claim that ``if $T_2$ has an edge which crosses the sides of the quadrilateral $abcd$ (present in $T_1$) and of which at least one of its two vertices is $a$, $b$, $c$, or $d$, then flipping $ac$ in $T_1$ decreases the number of intersections''. While the claim is mathematically accurate, the cases with such an edge with vertex $a$ or $c$ simply do not exist, which misleads the readers into thinking that they might, and should they exist, flipping the maximal edge $ac$ would reduce the number of intersections.

We are now ready for the final part of the proof. Recall that so far, we have proven that all maximally intersecting edges of $T_1$ with $T_2$ are located within a convex quadrilateral (\cref{lemma 1}), that there exists a class of configurations of these quadrilateral that guarantees that flipping that maximal edge will reduce the number of intersections (\cref{lemma 2}), and that some configurations of these quadrilaterals are forbidden (\cref{lemma 2.2}). We now have to prove that if there are no quadrilaterals with maximally intersecting diagonals satisfying the configurations of \cref{lemma 2}, then we can still find at least one maximal edge for which flipping reduces the total number of intersections with $T_2$.

%\bigskip
\begin{lemma}
\label{lemma 3}
    There exists a convex quadrilateral $abcd$ in $T_1$ with diagonal $ac$ with maximum number of intersections with $T_2$ for which performing the flip operation $ac \xrightarrow{flip} bd$ strictly reduces the number of intersections with $T_2$.
\end{lemma}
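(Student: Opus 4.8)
The plan is to split the argument into the configuration already handled by the earlier lemmas and a remaining generic case that I would treat globally. Write $M := \max_{e \in T_1}\#(e,T_2) \ge 1$, positive since $T_1 \ne T_2$ (\cref{prop: = equiv no inter}), and recall that by \cref{lemma 1} every maximal diagonal lies in a convex quadrilateral and is therefore flippable. First I would check whether some maximal diagonal $ac$ of a quadrilateral $abcd$ falls under \cref{lemma 2}, i.e. whether $T_2$ has an edge out of $b$ or $d$ crossing $ac$ and reaching a far side, or $bd \in T_2$; if so, flipping $ac$ already strictly reduces the count and we are done. So I would assume henceforth that \emph{no} maximal edge is of the \cref{lemma 2} type and aim to produce a maximal $ac$ with $\#(bd,T_2) < \#(ac,T_2) = M$. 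Since the flip $ac \to bd$ leaves every other edge of $T_1$ unchanged, it alters the total number of intersections by exactly $\#(bd,T_2)-\#(ac,T_2)$, so this inequality is precisely what must be shown.

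Next I would classify the edges of $T_2$ meeting the quadrilateral. Under the standing assumption, \cref{lemma 2.2} forbids any edge of $T_2$ issuing from $a$ or $c$ from meeting $abcd$, while the negation of \cref{lemma 2} forbids any edge from $b$ or $d$ from crossing $ac$ to a far side and excludes $bd \in T_2$. Combined with \cref{prop: no intersec inside quad}, these force every crossing edge to cut exactly two sides, and a short case check shows the only edges meeting both diagonals are the two opposite-side types $(ab,cd)$ and $(bc,da)$. Writing $A,C$ for the numbers of corner edges crossing $ab$ and $da$, respectively $bc$ and $cd$ (these meet $ac$ but not $bd$), $B,D$ for those crossing $ab$ and $bc$, respectively $cd$ and $da$ (meeting $bd$ but not $ac$), and $P,Q$ for the opposite-side edges of type $(ab,cd)$ and $(bc,da)$, I get $\#(ac,T_2) = P+Q+A+C$ and $\#(bd,T_2) = P+Q+B+D$. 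Thus the flip strictly reduces intersections precisely when $A + C > B + D$.

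The main structural tool I would develop is a nearest-crossing argument from \cref{prop: vertex linked to the closest edge}. Taking the edge $gh$ of $T_2$ crossing $ac$ closest to $a$, the open segment up to the crossing point meets no edge of $T_2$, so $ag$ and $ah$ are edges of $T_2$; since by \cref{lemma 2.2} an edge of $T_2$ out of $a$ meets neither $bc$ nor $cd$, both $ag$ and $ah$ must leave the quadrilateral beyond $ab$ and beyond $ad$, which forces $gh$ to be a corner-$a$ edge. Running the same reasoning at $a,b,c,d$ yields $A,B,C,D \ge 1$. Crucially, I would then count, for each of the four sides, the edges of $T_2$ entering the quadrilateral through it and invoke the maximality bounds $\#(ab,T_2),\ldots,\#(da,T_2) \le M$ to obtain inequalities of the form $B \le C + Q$, $D \le A + Q$, $B \le A + P$, and $D \le C + P$, which tightly constrain the corner counts by one another and by the opposite-side edges.

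The hard part is that these local inequalities do not by themselves deliver the strict inequality $A + C > B + D$: a balanced quadrilateral with $A = B = C = D$ and $P=Q=0$ satisfies all of them with equality, and in that case the four sides are themselves maximal edges. I would therefore argue globally by contradiction. Assuming no maximal edge strictly reduces, every maximal quadrilateral sits in such an equilibrium, so I would propagate maximality from a quadrilateral to (some of) its sides and iterate, building an ever-growing region bounded by maximal edges. This cannot close up: the triangulation is finite, and by \cref{prop: inter not on border} a maximal edge carries an intersection and so can never be a border edge, so the propagation can neither reach the boundary nor continue indefinitely, a contradiction. I expect this global step to be the genuine obstacle, since the presence of opposite-side edges complicates the propagation and it is exactly where the inductions of Hanke et al. and De Loera et al. were incomplete; making it rigorous requires an extremal choice of maximal edge together with a careful verification that every sub-case of which two sides each nearest edge exits, and the opposite-side bookkeeping, is covered, thereby establishing \cref{th: goal}.
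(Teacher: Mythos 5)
There is a genuine gap, and it sits exactly at the step you yourself flag as ``the genuine obstacle.'' Your local analysis (the classification of crossing edges, the counts $\#(ac,T_2)=A+C+P+Q$, $\#(bd,T_2)=B+D+P+Q$, the four side inequalities, and the observation that the balanced case $A=B$, $C=D$ with all four sides maximal survives them) is correct and matches the paper's own bookkeeping. But your proposed global contradiction --- ``the propagation can neither reach the boundary nor continue indefinitely, a contradiction'' --- is a non sequitur: a finite propagation that never touches the border can simply \emph{close up into a cycle}, producing a triangulated ring of maximal edges, which is perfectly consistent with finiteness and with \cref{prop: inter not on border}. This is precisely the error of Hanke et al.\ that the paper identifies and repairs. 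The actual content of the paper's proof is what happens after the cycle forms: the strip of maximal edges is built carefully (noting that the $3$-zigzag in the next quadrilateral may extend on \emph{either} side, the case De Loera et al.\ overlooked), it must close into a ring, a ring in the Euclidean plane must have a reflex vertex $u_i$, and then a convexity induction on the concatenated triangles around $u_i$ forces the corner cutters of $u_i$ in successive quadrilaterals to cross $u_{i-1}u_i$ until the angular reach exceeds $\pi$, at which point two corner cutters of $T_2$ are forced to cross inside the ring --- the contradiction. None of this mechanism appears in your proposal; without it the contradiction hypothesis is never refuted.

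A second, smaller gap: your nearest-crossing argument for $A,B,C,D\ge 1$ is invalid. From \cref{prop: vertex linked to the closest edge} and \cref{lemma 2.2} you correctly get that the edge $gh$ of $T_2$ crossing $ac$ nearest to $a$ has $ag,ah\in T_2$ avoiding the quadrilateral, but this does \emph{not} force $gh$ to cut the corner at $a$: $gh$ can be an opposite-side edge crossing $da$ and $bc$, with $g$ beyond the line $ad$ and $h$ beyond the line $ab$ past $b$. For instance, with $a=(0,0)$, $b=(1,0)$, $c=(1,1)$, $d=(0,1)$, the segment from $g=(-0.1,\,0.5)$ to $h=(2,\,-0.1)$ crosses $da$, then $ac$, then $bc$, while $ag$ and $ah$ stay entirely outside the quadrilateral. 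The paper instead obtains corner cutters by first proving every side of the quadrilateral is crossed by $T_2$ (if, say, $ab\in T_2$, the face of $T_2$ on the $abcd$ side of $ab$ would be a triangle $abe$ forced to contain $c$ and $d$ in its interior, which is impossible), and then combining this with the equalities $\#(ab,da,T_2)=\#(ab,bc,T_2)$ and $\#(bc,cd,T_2)=\#(da,cd,T_2)$ extracted from the counting. Since corner cutters at the reflex vertex are the engine of the final contradiction, this step cannot be left in its current form.
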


\begin{proof}
    We once again proceed by contradiction. Assume that flipping any maximally intersecting edges with $T_2$ does not strictly reduce the total number of intersections with $T_2$. In particular, each quadrilateral with maximally intersecting diagonal will not satisfy the assumptions of \cref{lemma 2}.
    
    With this assumption in mind, we first present the sketch of the proof. Consider a quadrilateral of $T_1$ with maximally intersecting diagonal with $T_2$. Our assumptions imply that it cannot lie on the borders of the triangulations. Afterwards, we prove that there can exist at most one kind of diagonally intersecting edge of this quadrilateral. This result allows us to count how many intersections some of the edges of the quadrilateral have. In particular, it will allow us to show that there exists a $3$-zigzag of edges of the polygon, including its diagonal edge, that have maximal intersections with $T_2$. After proving that all edges of the quadrilateral have at least one intersection with $T_2$, we then show that each corner of the quadrilateral is cut by a ``corner cutter'' edge of $T_2$. We then construct a strip delimited by two sequences of vertices $u$ and $v$, such that the strip is already triangulated in $T_1$, and such that all edges from one vertex of the sequence to one of the other sequence, i.e. edges $u_iv_j$ of $T_1$, have maximal intersections with $T_2$. As the sequence can never reach the border of the triangulation, as the set of vertices is finite, and as the edges of $T_1$ do not self-intersect, we necessarily reach a cycle, and the strip has the same topology as a ring. We thus rename the strip as the ring. Since we are working on the Euclidean plane, the ring necessarily has a reflex vertex on at least one of either the $u$ sequence or the $v$ sequence (not necessarily on both). We then analyse the structure of the ring around this reflex vertex. In particular, we show that concatenations of triangles of the strip in clockwise order form a convex quadrilateral, as long as the vertices are within an angle of $\pi$ to the reference anti-clockwise border edge of the strip located at the reflex vertex. This result allows us to prove that the corner cutters of this vertex in each quadrilateral of the strip that has this vertex must intersect this reference anti-clockwise border edge of the strip, as long as the quadrilateral is within an angular reach of $\pi$ from it. By then looking at the first vertex of the strip around the reflex vertex that is not within an angular reach of $\pi$ from the reference anti-clockwise border edge, we will prove that two corner cutters of neighbouring quadrilaterals intersect within the strip, which leads to a contradiction.
    
    The existence of ``corner cutters'', to be later rigorously defined, is essential in this proof. Unfortunately, both Hanke et al. \cite{hanke1996edge} and De Loera et al. \cite{de2010triangulations} directly claim their existence without convincing arguments. They both directly claim without proof that if a quadrilateral $abcd$ has a diagonal $ac$ with maximal intersections with $T_2$, and if that quadrilateral does not fulfil the assumptions of \cref{lemma 2}, then \cref{lemma 2.2} directly implies that there exists ``corner cutter'' edges in $T_2$ for each vertex of the quadrilateral, i.e. that there exists in $T_2$ an edge intersecting $ab$ and $bc$, one intersecting $cb$ and $cd$, one intersecting $dc$ and $da$, and another one intersecting $ad$ and $ab$. While it is true that \cref{lemma 2.2} is responsible to this result, this claim is not trivial and deserves to be proven. Moreover, the proof requires an advanced familiarity with the geometry of triangulations. In order to be understandable for all, we prefer to use another approach. Furthermore, we provide a detailed proof for each of our claims.

    \begin{figure}[tbhp]
        \centering
        \subfloat[][Disjoint candidate neighbour areas]{%\label{fig: }
        \includegraphics[width=0.4\textwidth]{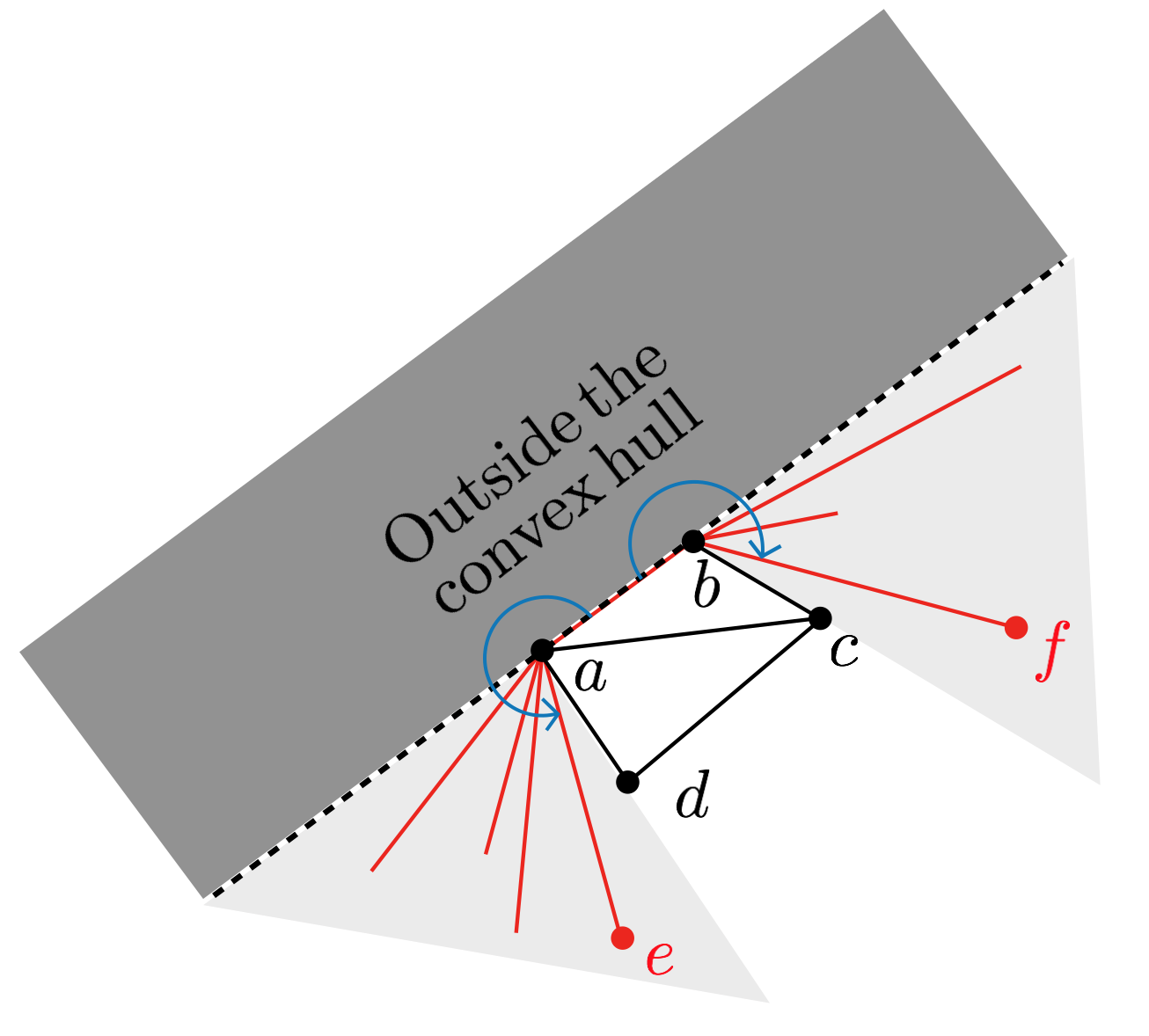}}\hfill
        \subfloat[][Intersecting candidate neighbour areas]{%\label{fig: }
        \includegraphics[width=0.4\textwidth]{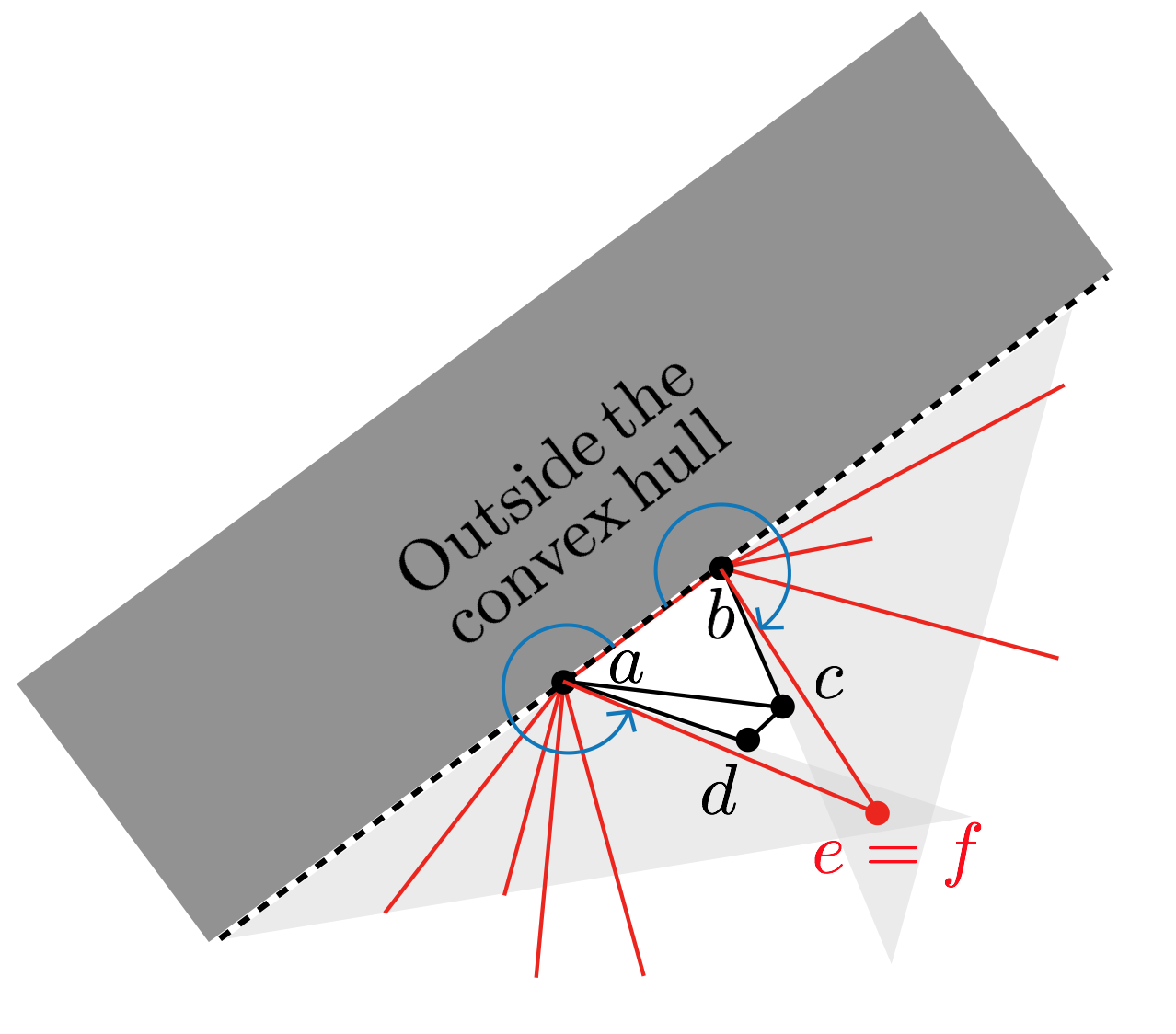}}\\ 
        \subfloat[][Disjoint candidate neighbour areas]{%\label{fig: }
        \includegraphics[width=0.4\textwidth]{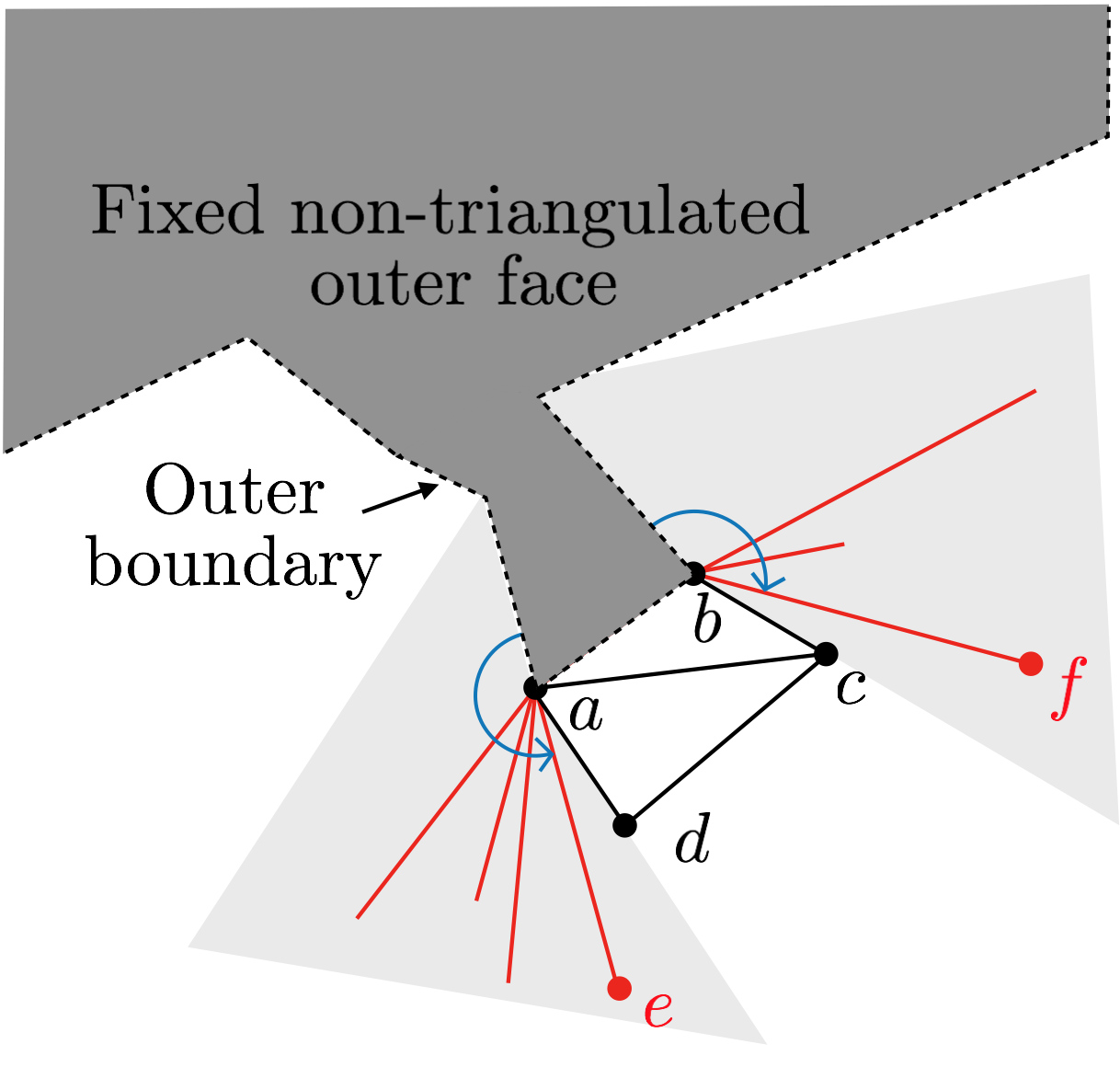}}\hfill
        \subfloat[][Intersecting candidate neighbour areas]{%\label{fig: }
        \includegraphics[width=0.4\textwidth]{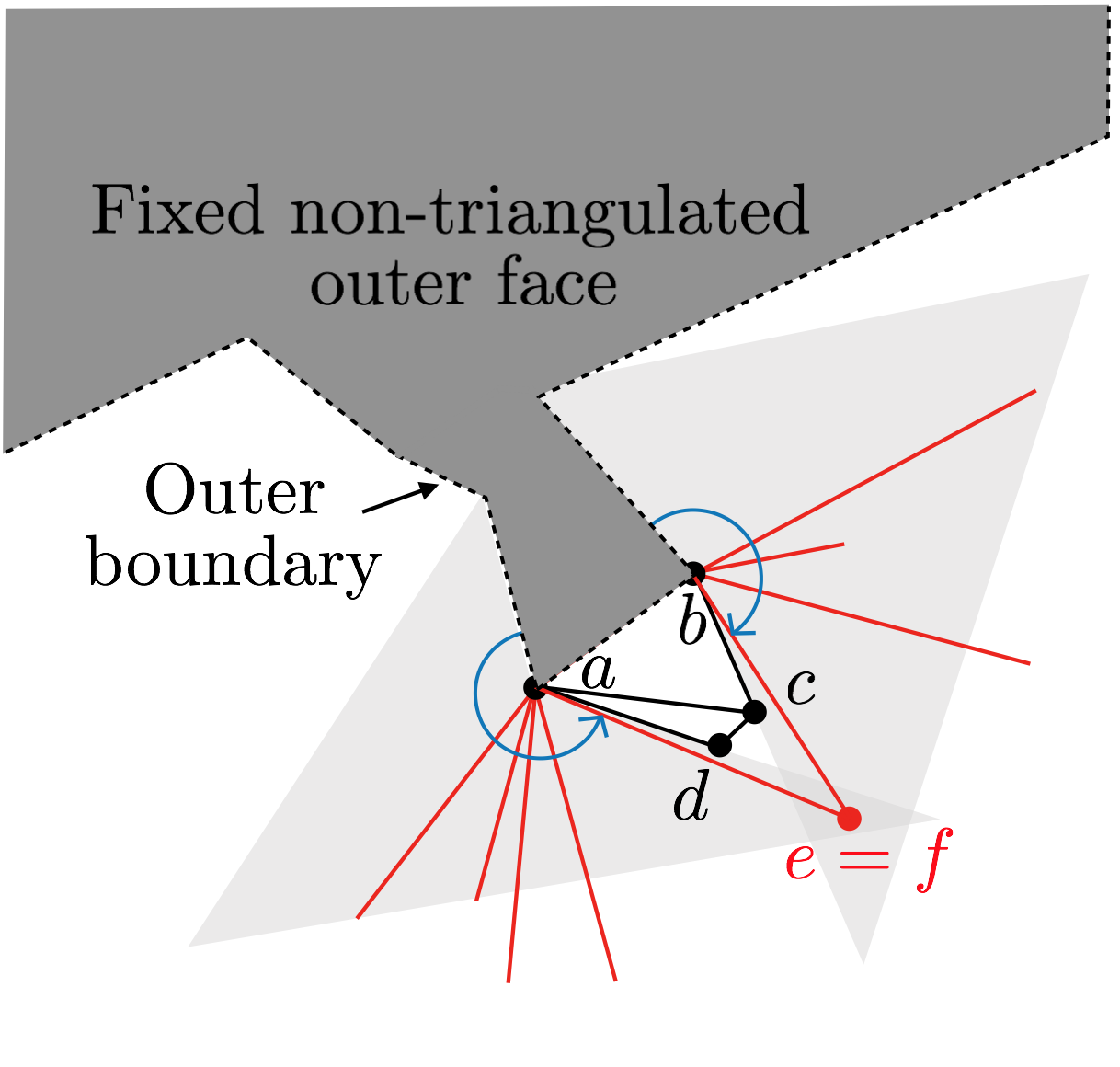}}\\ 
        \subfloat[][Disjoint candidate neighbour areas]{%\label{fig: }
        \includegraphics[width=0.4\textwidth]{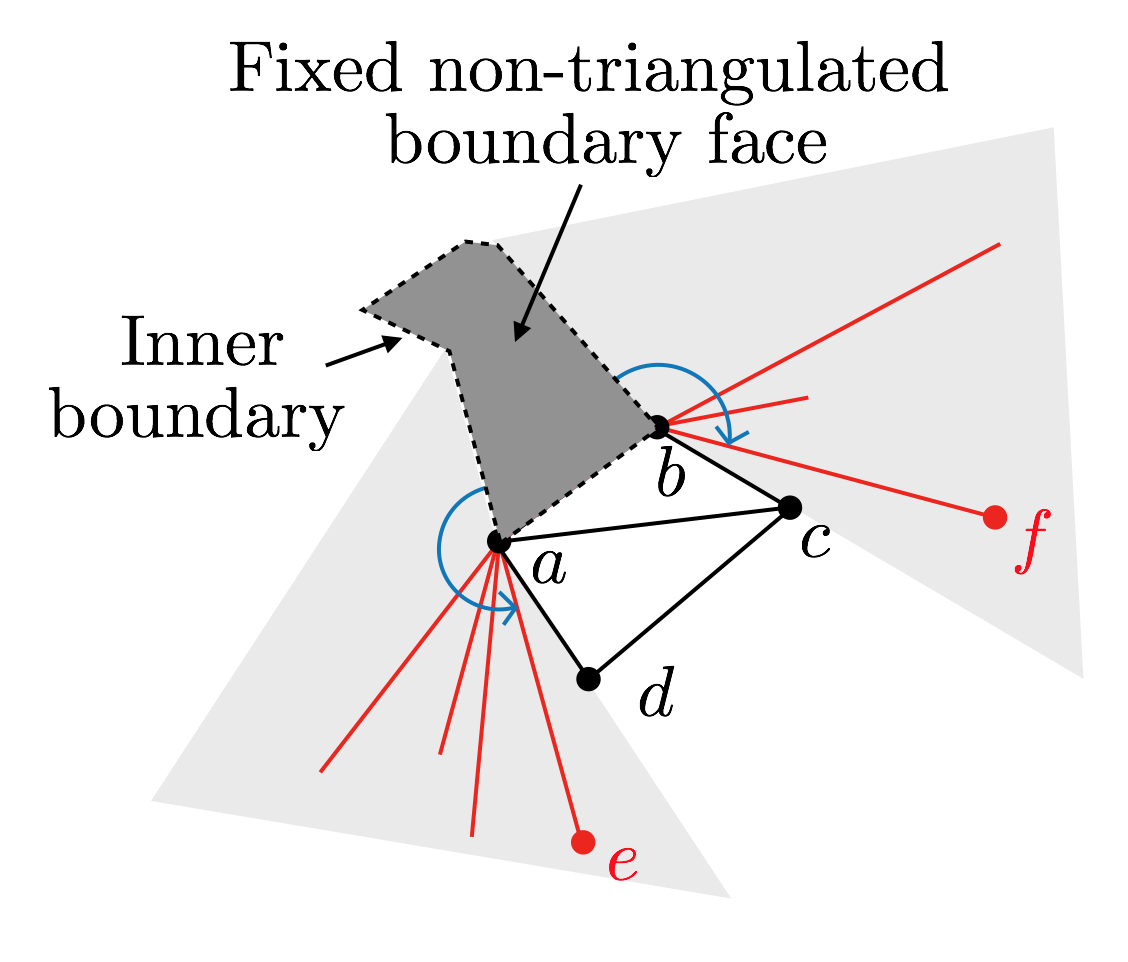}}\hfill
        \subfloat[][Intersecting candidate neighbour areas]{%\label{fig: }
        \includegraphics[width=0.4\textwidth]{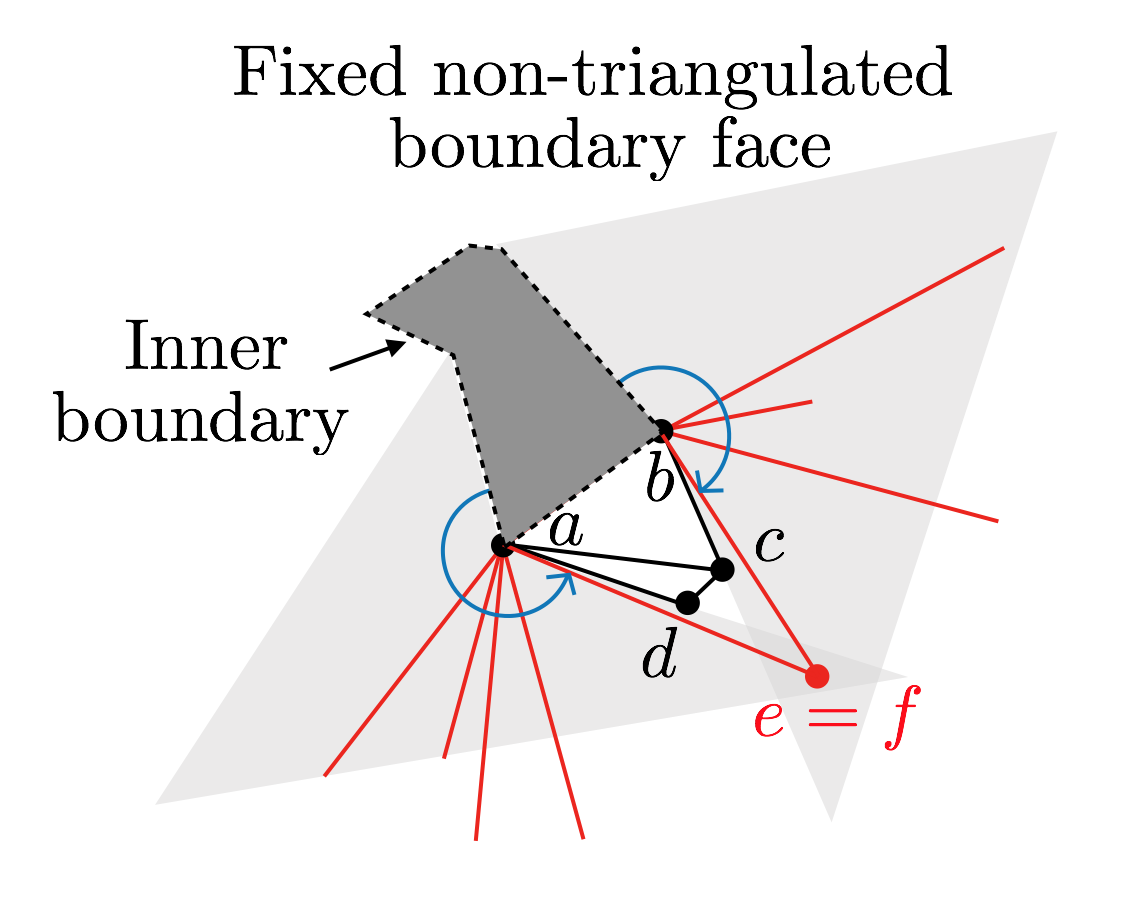}}\\ 
        \caption{Top: the edge $ab$ is on the outer border $B_0$ and on the convex hull. Middle: the edge $ab$ is on the outer border $B_0$ but is not on the polygon defined by the border of the convex hull. Bottom: the edge $ab$ is on an inner border $B_i$ defined by a fixed hole, i.e. a fixed inner non-triangulated face. Because we assume that flipping the diagonal $ac\xrightarrow{flip}bd$ does not reduce the number of intersections with $T_2$, \cref{lemma 2,lemma 2.2} constrain the domain of existence of neighbours in $T_2$ of $a$ and of neighbours of $b$. Consider $e$ and $f$ the angular extremal neighbours in $T_2$ of $a$ and $b$ closest to the quadrilateral $abcd$. Since $ab$ is on the border of $T_2$, it is an edge of $T_2$ as well and cannot have intersections with this triangulation. Thus, to maintain a triangulation, necessarily $e$ and $f$ are shared neighbours of both $a$ and $b$. This result implies that $e=f$ and in particular that the candidate areas of neighbours of $a$ and $b$ in $T_2$ intersect. We then deduce that $aeb$ is a face of $T_2$ that includes two other vertices $c$ and $d$, leading to a contradiction.}
        \label{fig: lemma 3 not border}
    \end{figure}

    We will now present the detailed proof. Let $abcd$ be a convex quadrilateral with diagonal having maximum intersections with $T_2$. We will first prove that $abcd$ cannot lie on the border of the triangulation. Note that both Hanke et al. \cite{hanke1996edge} and De Loera et al. \cite{de2010triangulations} implicitly and directly obtain this essential property by claiming the existence of ``corner cutters''.
    
    Assume $abcd$ lies on the border of the triangulation (see \cref{fig: lemma 3 not border}). Without loss of generality, assume $ab$ is a boundary edge. Then, $ab$ is necessarily an edge of $T_2$. We will then reach a contradiction by looking at the neighbours of $a$ and $b$ in $T_2$.
    
    First, since flipping $ac\xrightarrow{flip}bd$ does not reduce the number of intersections with $T_2$, we cannot have $bd$ in $T_2$, nor an edge $be$ intersecting $ac$ or $bc$ (\cref{lemma 2}).  This result means that we cannot have an edge $be$ in $T_2$, where $e\neq a$ is a vertex in the cone defined by $abc$ (in the direction of the triangle $abc$). In summary, $b$ does not have a neighbour $e$ in $T_2$ for which $be$ intersects the quadrilateral $abcd$. Likewise, $a$ has no neighbour $e$ in $T_2$ such that the edge $ae$ intersects $abcd$.
    
    Then, notice that both $a$ and $b$ have neighbours in $T_2$ different from $b$ and $a$ that lie in the half-plane delimited by $ab$ containing the quadrilateral $abcd$. Indeed, if $ab$ is on the outer boundary $B_0$ and also on the border of the convex hull, then that half-plane is where all their neighbours are. On the other hand, if $ab$ lies on the outer boundary $B_0$ but not on the border of the convex hull, or if it lies on an inner boundary $B_i$ with $i\ge 1$ (the following reasoning also applies if $ab$ is an outer boundary and on the border of the convex hull), then due to the presence of the edges $ad$ and $bc$ in $T_1$, necessarily $a$ and $b$ have neighbours in $T_2$ in that half-plane. Without loss of generality, look at $ad$. If $d$ is a neighbour of $a$ in $T_2$ then we are done. If not, then $ad$ has intersections with $T_2$. By taking the edge $ef$ with intersection point $x$ with $ad$ closest to $a$, then \cref{prop: vertex linked to the closest edge} gives that $ae$ and $af$ are edges of $T_2$. However, necessarily one of $e$ or $f$ is in the open half-plane delimited by $ab$ containing the quadrilateral $abcd$ as otherwise $ef$ would not intersect $ad$. Say $e$ is in that open half-plane. Then $f\neq b$ as $bf$ would then intersect $ad$, which is forbidden as previously proven. Therefore, $a$ has a neighbour different from $b$ in that half-plane. The same goes for $b$.
    
    We now look at the vertices in the half-open plane delimited by $ab$ that are neighbours in $T_2$ with $a$ or $b$. Let $e$ and $f$ be such neighbours of $a$ and $b$ respectively. We showed that $e$ necessarily lies in the cone $b'ab$ that does not include $abcd$, where $b'$ is a point of $\mathbb{R}^2$ that is the rotation of $b$ around $a$ of angle $\pi$ ($b'$ is on the line $ab$ but $a$ belongs to the open segment $[b'b]$). Likewise, $f$ is in the cone $a'bc$ that does not include $abcd$, where $a'\in\mathbb{R}^2$ is the rotation of $b$ around $a$ of angle $\pi$. We now fix $e$ to be such a neighbour of $a$ that is angularly closest to the edge $ad$. The orientation we are considering is the one that goes around $a$ from $ab$ to $ad$ without passing through the quadrilateral $abcd$. We can do this since we know that at least one neighbour $e$ lies in the half-plane delimited by $ab$ containing $abcd$, since there is no edge $ae$ intersecting the quadrilateral $abcd$, and since we have a finite set of points. Note that the case $e=d$ is not excluded. Similarly, we fix $f$ to be such a neighbour of $b$ closest angularly to the edge $bc$. We here considered the opposite orientation around $b$, going from $ba$ to $bc$ without passing through $abcd$.
    
    The edge $ab$ is in $T_2$. Locally, the part of the plane close to the edge inside the quadrilateral $abcd$ is part of a triangulated face of $T_1$. Therefore, it is also part of a triangulated face of $T_2$ as $T_1$ and $T_2$ triangulate the same domain. That face has $ab$ as an edge and is inside the half-plane delimited by $ab$ containing $abcd$. By closure of this face, it also contains the edge $ae$, defined as the edge as close as possible angularly to $ab$ in the orientation around $a$ going from $ab$ to $ad$ without going through $abcd$. Likewise, the face has $bf$ as an edge. Thus $a$, $b$, $e$, and $f$ are vertices of this face. Since that face has to be a triangle, which can only have $3$ vertices, we have $e=f$.
    
    Two possibilities now arise. The first is that the previously used cones $b'ad$ and $a'bc$ do not intersect in the half-plane delimited by $ab$ containing $abcd$. This leads to a contradiction as it would impose that $e$ cannot be equal to $f$. The second possibility is that these cones intersect in this half-plane, and further work is necessary to reach a contradiction. Although we could have had $e=d$ or $f=c$, we cannot have $e=c$ or $f=d$. Indeed, $ac$ has intersections with $T_2$ by maximality assumption since $T_1\neq T_2$ (\cref{prop: inter edge not edge other triangu}) and so it cannot be an edge of $T_2$, but $ae$ is an edge of $T_2$. On the other hand, as mentioned previously, $bd$ is assumed to not belong to $T_2$ as then flipping $ac\xrightarrow{flip}bd$ would strictly reduce the number of intersections with $T_2$, but $bf$ is an edge of $T_2$. Therefore, $c$ and $d$ must lie within the triangle $abe$. However, $abe$ is a face of $T_2$, so it cannot have any other vertices of the triangulation inside it. We have reached a contradiction. 
    
    We have thus proven that necessarily $abcd$ cannot lie on the border of the triangulation. In fact, the contradictory assumption in the previous proof was that $ab$ was an edge of $T_2$. Indeed, we only used the assumption that $ab$ is on the border of the triangulation to get that $ab$ was an edge of both triangulations. Therefore, we actually proved a more general statement: under our assumptions, $ab$ cannot be an edge of $T_2$, i.e. it must have an intersection with $T_2$. As $ab$ was chosen without loss of generality, we proved that all the edges of the quadrilateral are not in $T_2$, i.e. they each have at least one intersection with $T_2$. For those unconvinced, we will later reprove this result when we will need it.
   
    \begin{figure}[tbhp]
        \centering
        \includegraphics[width=0.4\textwidth]{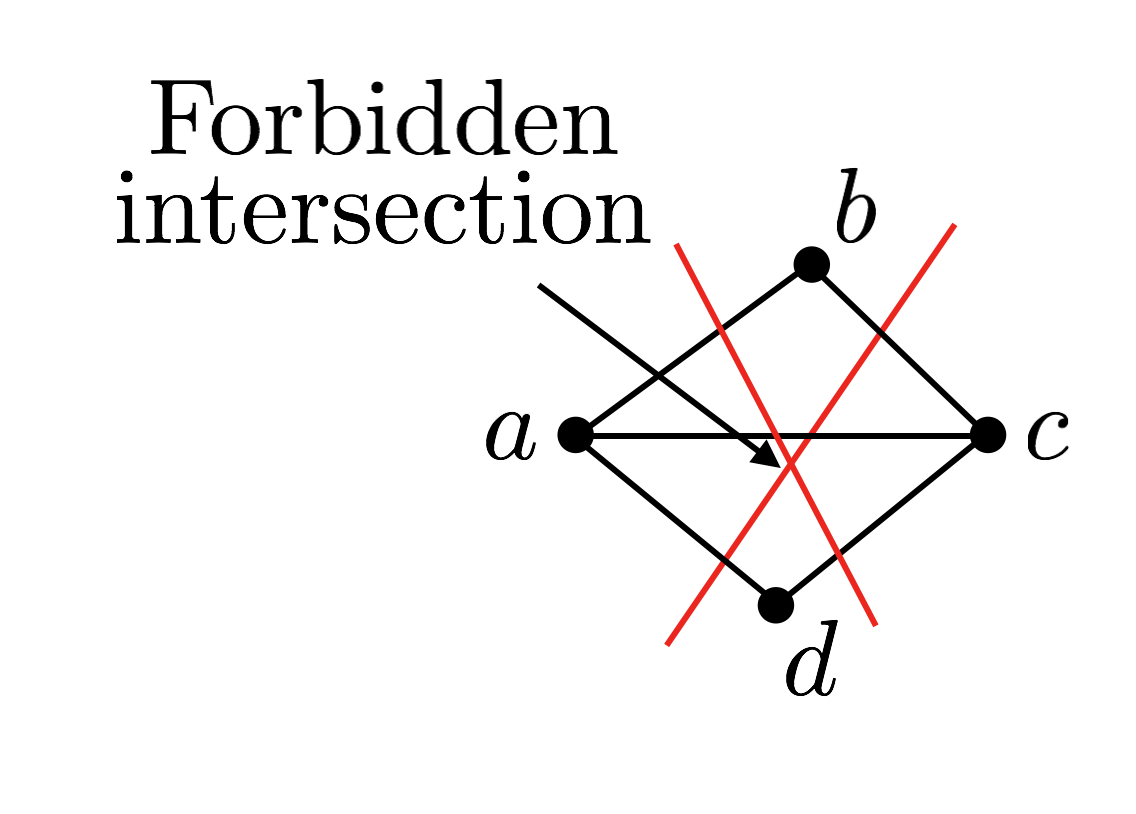}
        \caption{Not both diagonal intersecting edges can exist. We cannot have simultaneously at least one edge intersecting $ab$ and $cd$ and at least one edge intersecting $bc$ and $ad$, as otherwise there would be a forbidden intersection inside the quadrilateral. We assume from now on that there are no edges intersecting $ab$ and $cd$.}
        \label{fig: lemma 3 not both diags}
    \end{figure}
    
    \begin{figure}
        \centering
        \includegraphics[width=0.3\textwidth]{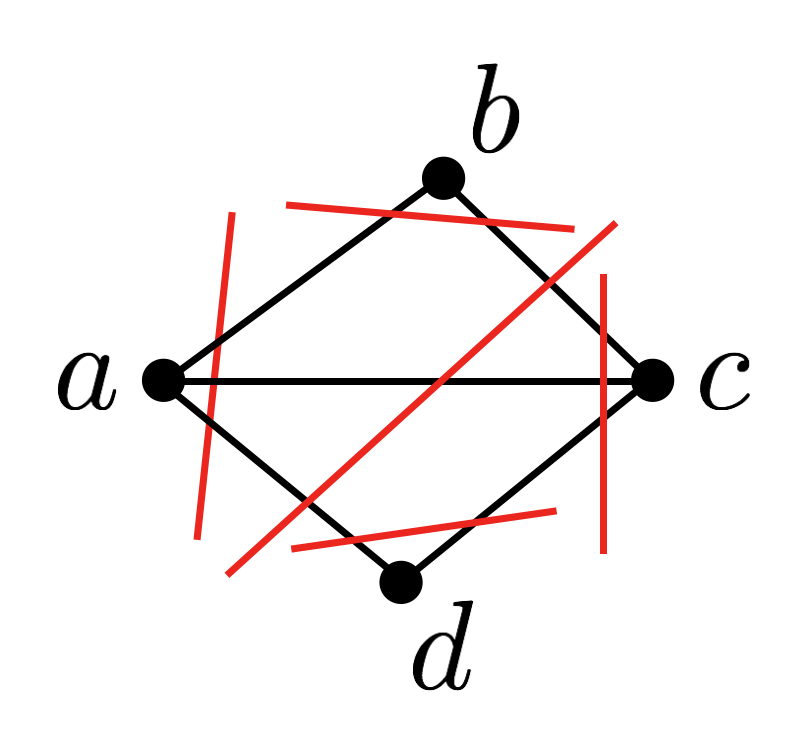}
        \caption{All five types of possible edges of $T_2$ intersecting the quadrilateral. They comprise in the corner cutters and the fixed only possible diagonal direction.}
        \label{fig: lemma 3 possibilities}
    \end{figure}

    Before continuing any further, it is important to realise that there cannot be edges of $T_2$ intersecting with the quadrilateral that pass through two different diagonals, i.e. we cannot have an edge that intersects $ab$ and $dc$ while also having another one intersecting $bc$ and $ad$ (see  \cref{fig: lemma 3 not both diags}). This would indeed imply that there is an intersection of edges of $T_2$ inside the quadrilateral, which is something forbidden. Therefore, there is at most one diagonal direction of intersection possible. Without loss of generality, we will assume from now on that the diagonally intersecting edges with the quadrilateral can only be those that intersect $ad$ and $bc$. This assumption means that we impose from now on without loss of generality that $\#(ab,cd,T_2) = 0$. We insist that this assumption is essential and invite the reader to keep it in mind. An assumption without loss of generality only claims that the proof for either case will be the same up to renaming the vertices accordingly. However, this renaming implies that all consequences derived using this general assumption must be renamed when the assumption changes. In particular, De Loera et al. \cite{de2010triangulations} forget about this important assumption in their proof which later leads them to a false claim and logic flaw.
    
    We now want to show that the edges $ad$, $ac$, and $bc$ have all maximal number of intersections with $T_2$. We will say that these edges form a $3$-zigzag of maximally intersecting edges. To prove this result, let us count the number of intersections of the edges of the $3$-zigzag and also of $bd$. 
    
    Recall that since $abcd$ does not satisfy the assumptions of \cref{lemma 2}, $bd$ cannot be an edge of $T_2$. We also assumed that $\#(ab,dc,T_2)=0$ due to our assumption that diagonally intersecting edges with the quadrilateral can only be those that intersect $ad$ and $bc$. Since $ac$ has an intersection with $T_2$, $ac$ is not an edge of $T_2$ (\cref{prop: inter edge not edge other triangu}). Due to \cref{lemma 2.2}, we also have that $\#_a(bc,T_2) = \#_a(cd,T_2) = \#_c(ad,T_2) = \#_c(ab,T_2) = 0$. By assumption, we do not satisfy the hypotheses of \cref{lemma 2}, thus $\#_b(ad,T_2) = \#_b(dc,T_2) = \#_d(ab,T_2) = \#_d(bc,T_2) = 0$. The only possibilities of edges intersecting the quadrilateral $abcd$ are displayed in  \cref{fig: lemma 3 possibilities}. We can now count intersections with $T_2$:
    \begin{alignat}{8}
        &\# (ac,T_2&) &= &\#&(ab,da,T_2&) &+ &\#&(bc,cd,T_2&) &+ &\#&(da,bc,T_2&), \label{eqn 1}\\
        &\# (bc,T_2&) &= &\#&(ab,bc,T_2&) &+ &\#&(bc,cd,T_2&) &+ &\#&(da,bc,T_2&), \label{eqn 2}\\
        &\# (da,T_2&) &= &\#&(ab,da,T_2&) &+ &\#&(da,cd,T_2&) &+ &\#&(da,bc,T_2&), \label{eqn 3}\\
        &\# (bd,T_2&) &= &\#&(ab,bc,T_2&) &+ &\#&(da,cd,T_2&) &+ &\#&(da,bc,T_2&). \label{eqn 4}
    \end{alignat}
    
    By the maximality of $ac$, we have the following implications: 
    \begin{alignat}{3}
        \cref{eqn 1} &\ge \cref{eqn 2} \implies \#(ab,da,T_2&) &\ge \#(ab,bc,T_2&), \label{eqn 5}\\
        \cref{eqn 1} &\ge \cref{eqn 3} \implies \#(bc,cd,T_2&) &\ge \#(da,cd,T_2&). \label{eqn 6}
    \end{alignat}

    By assumption, the flip operation does not reduce the number of intersections, thus $bd$ has at least as many intersections with $T_2$ than $ac$ has:
    \begin{equation}
    \label{eqn 7}
        \cref{eqn 4} \ge \cref{eqn 1} \implies  \#(ab,bc,T_2)+\#(da,cd,T_2)\ge \#(ab,da,T_2)+\#(bc,cd,T_2).
    \end{equation}
    
    By combining the previously obtained results, we have proven equality between the following quantities:
    \begin{equation}
    \label{eqn 8}
        \begin{cases}
            \cref{eqn 7} \\ \cref{eqn 5} + \cref{eqn 6}
        \end{cases}
        \hspace{-1.3em}\implies  \#(ab,da,T_2)+\#(bc,cd,T_2) = \#(ab,bc,T_2)+\#(da,cd,T_2).
    \end{equation}
    
    Using this equality, we can replace the value of $\#(ab,da,T_2)$ in one of our inequalities to obtain:
    \begin{equation}
    \label{eqn 9}
        \begin{cases}
            \cref{eqn 8} \\ \cref{eqn 5}
        \end{cases}
        \implies  \#(ab,bc,T_2)+\#(da,cd,T_2)-\#(bc,cd,T_2) \ge \#(ab,bc,T_2).
    \end{equation}
    
    By adding $\#(bc,cd,T_2)$ to this inequality, and removing $\#(ab,bc,T_2)$ on both sides, we get:
    \begin{equation}
    \label{eqn 10}
        \cref{eqn 9}+\#(bc,cd,T_2) \implies \#(da,cd,T_2) \ge \#(bc,cd,T_2).
    \end{equation}
    
    We have thus proven that in fact, the following quantities are equal:
    \begin{equation}
    \label{eqn 11}
        \begin{cases}
            \cref{eqn 6} \\ \cref{eqn 10}
        \end{cases}
        \implies  \#(bc,cd,T_2)=\#(da,cd,T_2).
    \end{equation}
    
    An analogous proof will give the symmetrical result:
    \begin{equation}
    \label{eqn 12}
        \#(ab,da,T_2)=\#(ab,bc,T_2).
    \end{equation}
    
    By substituting the two previous results in \cref{eqn 1},\cref{eqn 2},\cref{eqn 3}, and \cref{eqn 4}, we prove the desired result:
    \begin{equation}
         \#(ac,T_2) = \#(bc,T_2) =\#(da,T_2)=\#(bd,T_2).
    \end{equation}
    
    Therefore, the edges $ad$, $ac$, and $bc$ are all three of maximal intersections. Thus, the $3$-zigzag of edges of the quadrilateral $abcd$ containing the diagonal edge $ac$, and in the direction of possible diagonally intersecting edges (there are no edges intersecting simultaneously $ab$ and $cd$) is a $3$-zigzag of maximally intersecting edges with $T_2$.

    \begin{figure}[tbhp]
        \centering
        \subfloat[][Joint neighbour in the half-plane delimited by $ab$ not including $abcd$]{%\label{fig: }
        \includegraphics[width=0.45\textwidth]{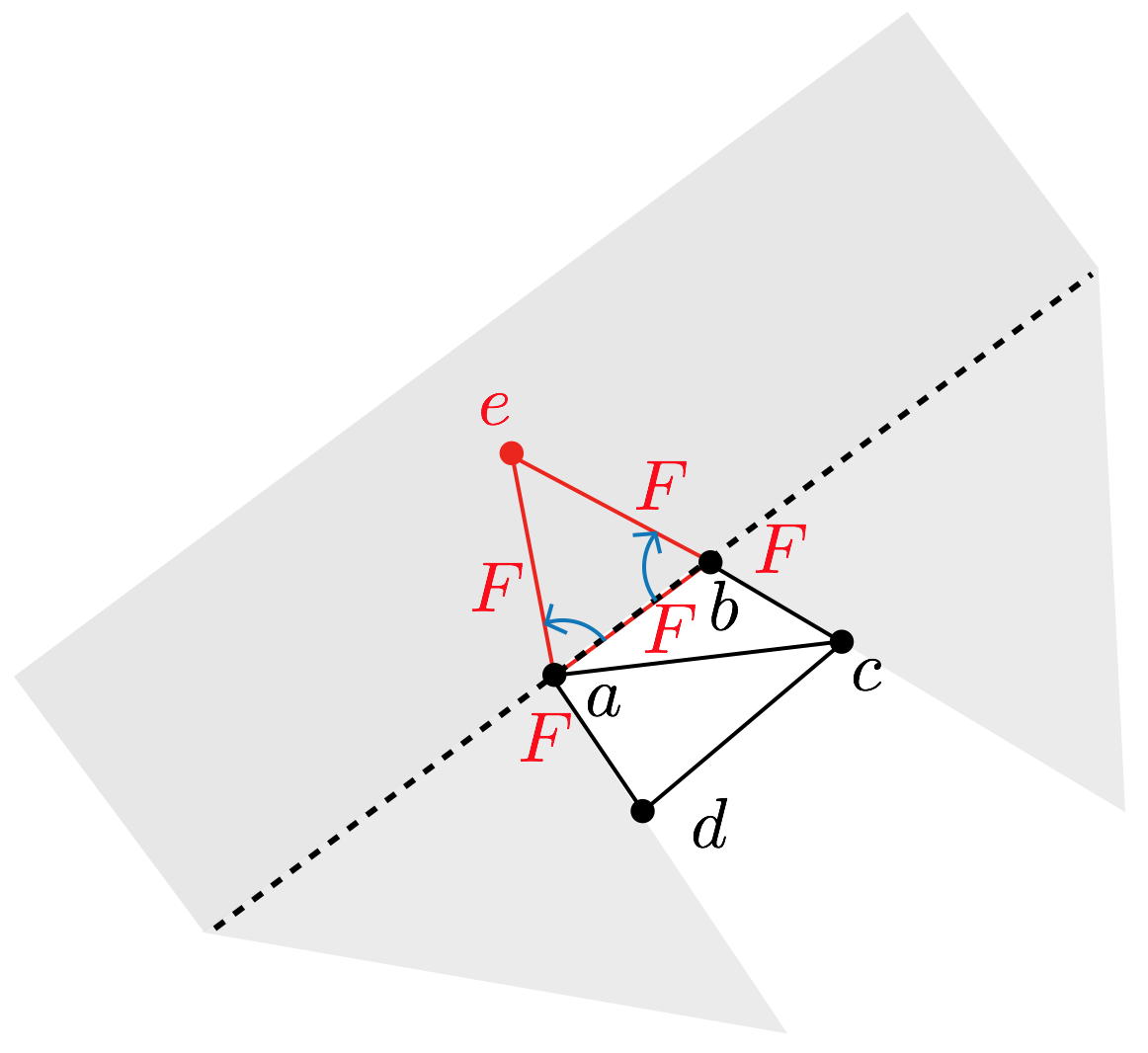}}
        \hspace{2em}
        \subfloat[][Disjoint candidate neighbour areas in the half-plane delimited by $ab$ not including $abcd$]{%\label{fig: }
        \includegraphics[width=0.45\textwidth]{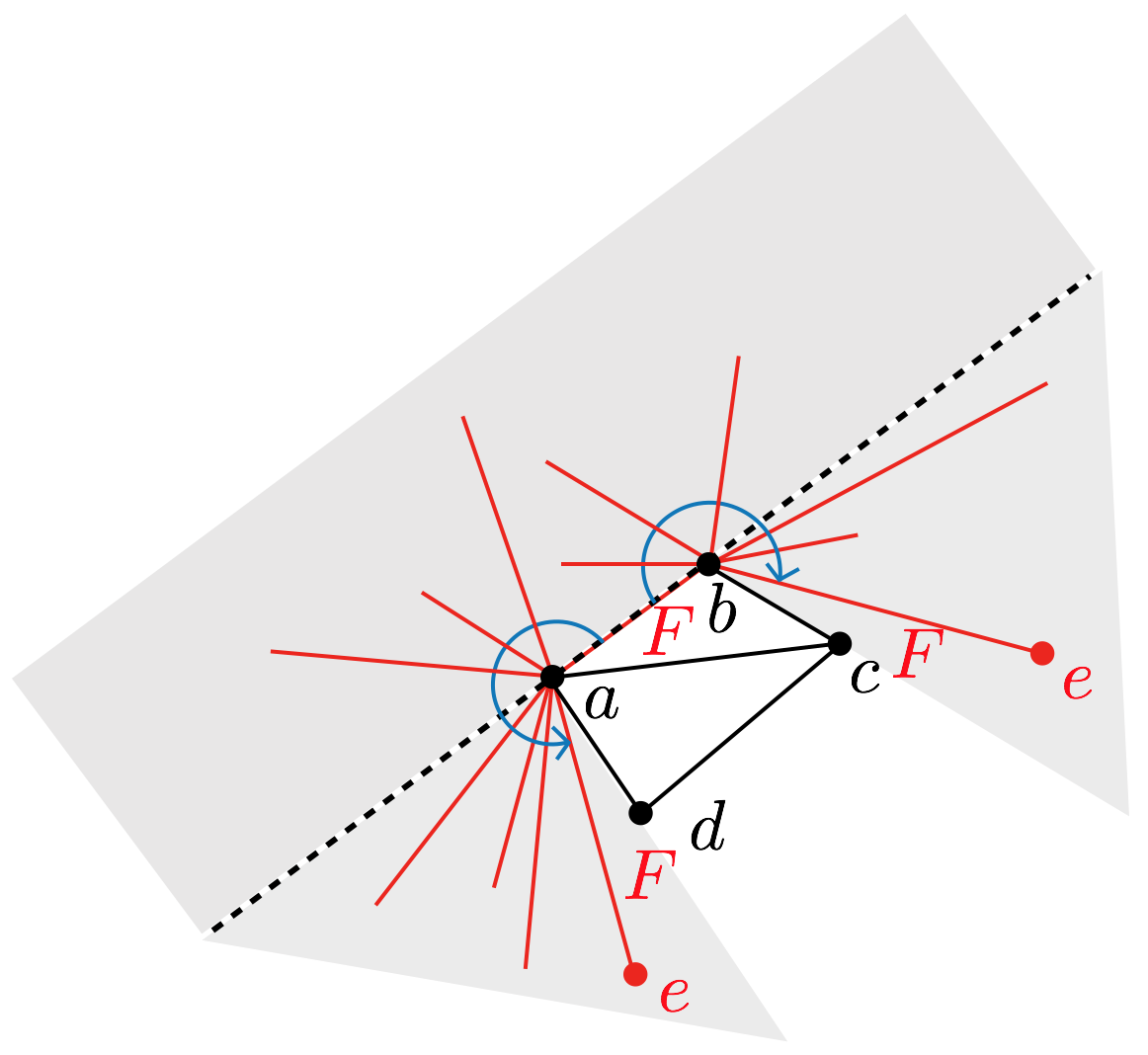}}\\
        \subfloat[][Intersecting candidate neighbour areas in the half-plane delimited by $ab$ not including $abcd$]{%\label{fig: }
        \includegraphics[width=0.45\textwidth]{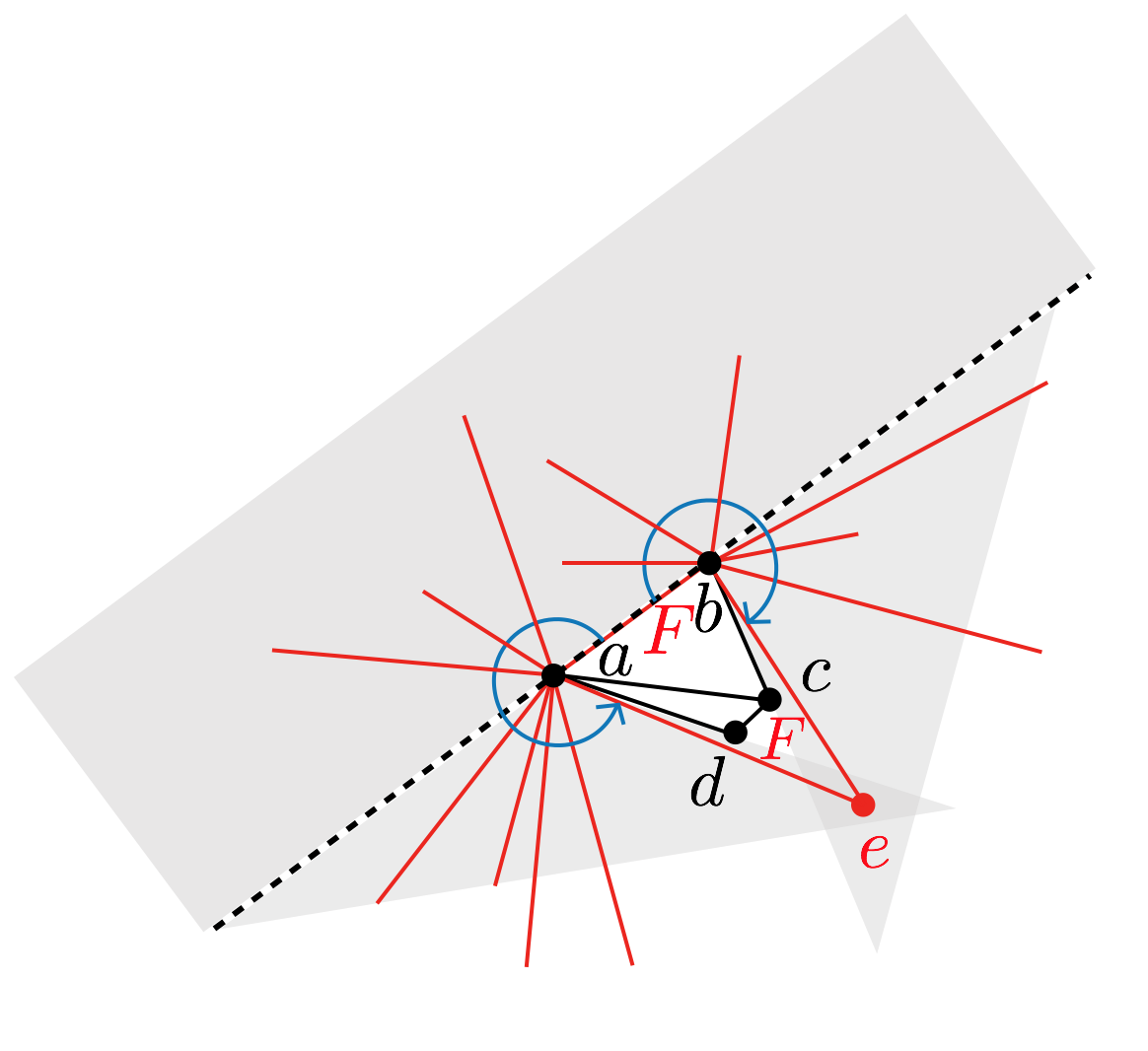}}\\ 
        \caption{Since $ab$ has no intersections with $T_2$, it is an edge of $T_2$. Denote $F$ the face of $T_2$ with edge $ab$ that is locally around $ab$ on the same half-plane delimited by $ab$ as the quadrilateral $abcd$. As $F$ is a triangle, it is then necessarily entirely included within this half-plane. In particular, its third vertex $e$, which is a common neighbour of both $a$ and $b$ in $T_2$, cannot lie on the opposite half-plane, although $a$ and $b$ might have other common neighbours in that other half-plane. As $e$ is a shared neighbour from $a$ and $b$, the domain of existence of neighbours of $a$ and $b$ must intersect. If this statement is not already contradictory, then the face $F$ contains the vertices $c$ and $d$, which leads to a contradiction.}
    \label{fig: lemma 3 all intersected}
    \end{figure}

    In the next part of the proof, we will need to know that all the edges of the quadrilateral $abcd$ are intersected. Recall that we already proved this result when proving that the quadrilateral could not lie on the border of the triangulation, as the proof there only used the border assumption to get an edge of $abcd$ that belonged to both triangulations, which then leads to a contradiction. For sceptical readers, we here re-derive a proof. Note that Hanke et al. \cite{hanke1996edge} and De Loera et al. \cite{de2010triangulations} implicitly and directly obtain this property by claiming the existence of ``corner cutters''.
    
    Without loss of generality, assume that the edge $ab$ is not intersected by $T_2$ (see \cref{fig: lemma 3 all intersected}). There is no loss of generality since we will not use the fact that we fixed the only possibility for intersecting diagonals of the quadrilateral to necessarily intersect $ad$ and $bc$.
    
    Since $T_2$ does not intersect $ab$, $ab$ is an edge of $T_2$ (\cref{prop: inter edge not edge other triangu}). As $ab$ is an edge of both triangulations, and as the area of the plane locally around $ab$ on the half-plane delimited by this edge containing the quadrilateral $abcd$ is triangulated in $T_1$, then this area is necessarily also triangulated in $T_2$ albeit differently. In particular, the area locally around $ab$ on this half-plane is part of a face $F$ of $T_2$. As $T_2$ is a triangulation, $F$ is a triangle, which implies that $F$ is entirely included in this half-plane. The vertices of $F$ are $a$, $b$, and some vertex $e$. Necessarily, $e$ is also in this half-plane.
    
    Since the quadrilateral $abcd$ does not satisfy the assumptions of \cref{lemma 2}, and since $ac$ is not in $T_2$ by maximality of $ac$ due to $T_1\neq T_2$ (\cref{prop: inter edge not edge other triangu}), the edges $ae$ and $be$ cannot intersect the quadrilateral $abcd$. Therefore, $e$ needs to be in the cone defined by $bad$ that does not include $abcd$ and it must also be in the cone defined by $abc$ that does not include $abcd$. However, $e$ must also lie on the half-plane delimited by $ab$ containing the quadrilateral $abcd$. Thus, if both previously defined cones do not intersect in this half-plane, we have reached a contradiction. If they do, then necessarily both $c$ and $d$ are included within the triangle $abe$. However, $abe$ is the face $F$ of $T_2$, which cannot have any other vertices inside it. We have reached a contradiction.
    
    We thus proved in detail that the edges $ab$, $bc$, $cd$, and $ad$ of the quadrilateral $abcd$ are not part of $T_2$, and thus each have at least one intersection with this triangulation.

    Now, we want to prove that for each vertex of the quadrilateral there is at least one ``corner cutter''. A ``corner cutter'' of $a$ for $abcd$ is defined as an edge of $T_2$ that intersects $ab$ and $ad$. The same definition is extended to other vertices of the quadrilateral.
    
    Assume that $a$ is not cut, i.e. there is no edge of $T_2$ intersecting $ab$ and $ad$: $\#(ab,da,T_2) = 0$. Using \cref{eqn 12}, we get that $\#(ab,bc,T_2)=0$.  However, the edges of $T_2$ intersecting $ab$ are those that intersect $ab$ and $ad$ and those that intersect $ab$ and $bc$ (see \cref{fig: lemma 3 possibilities}). Since non of these edges exist, $T_2$ has no edge intersecting $ab$. This statement is in contradiction with the previous result that $ab$ has at least one intersection with $T_2$. Therefore $a$ is necessarily cut. Similarly, if we assume $b$ is not cut, i.e. $\#(ab,bc,T_2)=0$, \cref{eqn 12} will give that $\#(ab,da,T_2) = 0$, which implies once again that $ab$ has no intersection with $T_2$ which is a contradiction. Thus $b$ is but. By symmetry, we have that $c$ and $d$ are also cut.
   
    We have thus proven that all the corners of the quadrilateral $abcd$ are cut. 
    
    We here take a short break in the detailed proof to remind us of what is the idea of the proof and how all the effort we have made so far is essential for the parts to come. The sketch of the proof left to reach a final contradiction is the following. We will construct a sequence of maximal edges. More formally, we will construct a triangulated ring in $T_1$, where each edge connecting a vertex on one side of the ring to a vertex on the other side of it is an edge with maximal intersections with $T_2$. Adjacent faces in this ring all have maximal edge diagonals, so we can apply the previous reasoning to them as we assumed that we could not find any such quadrilateral for which flipping the diagonal edge would reduce the number of intersections. In particular, all the quadrilaterals will have corner cutter. The reason why we have a ring is that we will sequentially create a strip-like structure that cannot reach the border of the triangulation and that cannot create self-intersections, thus by finiteness of the set of vertices we create a cycle, coined as a ring. However, since we are on the plane, a ring necessarily has one reflex vertex. By analysing corner cutters of quadrilaterals in the ring having that vertex, we will find two corner cutters that intersect inside the triangulated ring, which leads to a contradiction.
    
    We now continue the detailed proof. The structure we want to create will be defined by two sequences of connected vertices in $T_1$. We denote $(u_k)$ and $(v_l)$ these sequences that are yet to be defined. Start with $u_1 = d$, $u_2 = c$, $v_1 = a$, and $v_2 = b$. Note that the strip delimited by $u_1u_2$ and $v_1v_2$ is already triangulated in $T_1$ by the triangles $acd$ and $abc$. Further note that all edges of $T_1$ of the form $u_iv_j$ for $(i,j)\in\{1,2\}^2$ are in $T_1$, and they all have maximal intersections with $T_2$. The task is to incrementally find $u_{k+1}$ or $v_{l+1}$ given $u_1,u_2,\cdots,u_k$ and $v_1,v_2,\cdots,v_l$ that will maintain the property of the curved strip delimited by $u$ and $v$ is already triangulated in $T_1$, and for which all edges of $T_1$ of the form $u_i v_j$ have maximal intersections with $T_2$. Proving how to do the construction from $k=l=2$ to $k+1=3$ or $l+1=3$ will explain how to perform the incremental construction for general $k$ and $l$. See \cref{fig: lemma 3 strip construction} for an illustration.
    
    \begin{figure}[tbhp]
        \centering
        \includegraphics[width=0.5\textwidth]{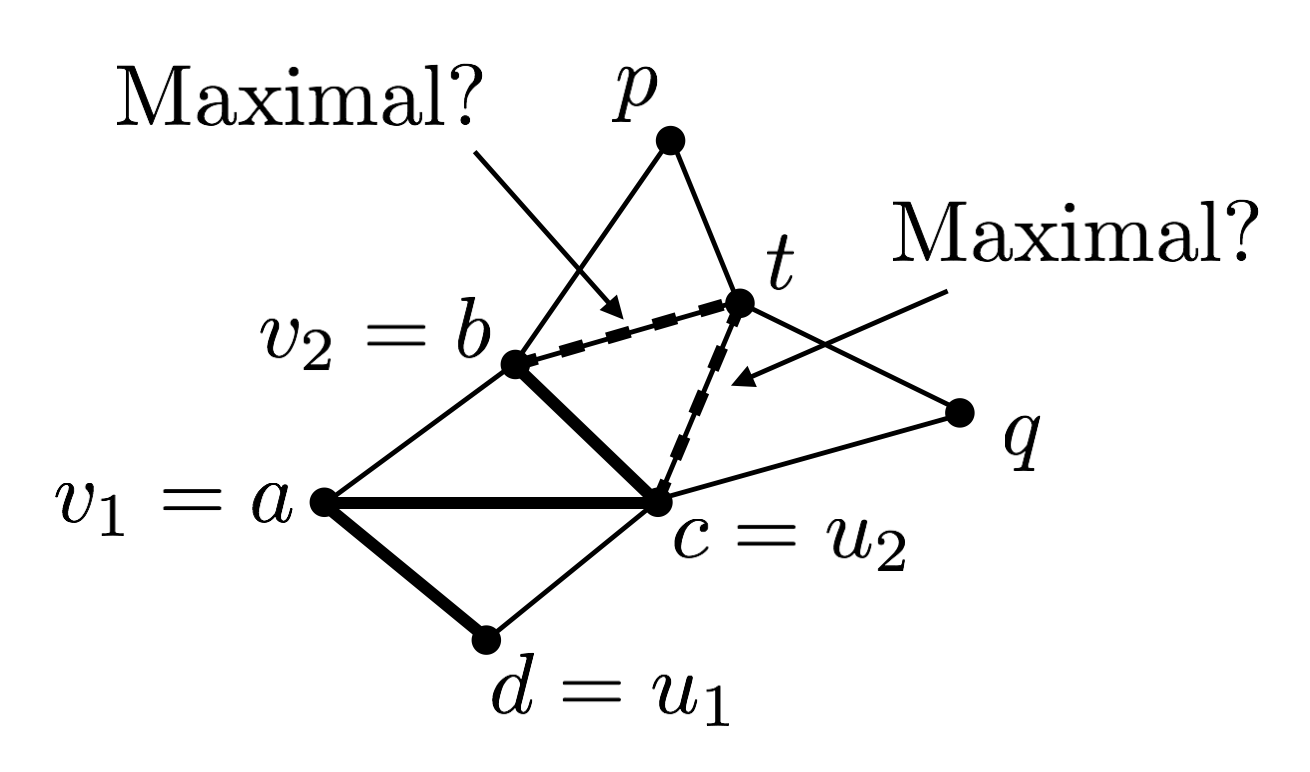}
        \caption{Since the quadrilateral $abcd$ is not on the borders of the triangulation, then $b$ and $c$ share a common neighbour $t$ different from $a$. As we showed that $bc$ has maximal intersections with $T_2$, the quadrilateral $abtc$ is a new quadrilateral with diagonal having maximal intersections with $T_2$. Furthermore, the assumption that flipping its diagonal does not result in a strict decrease in the number of intersections with $T_2$ holds for this new quadrilateral. Thus, by reapplying the same proof as we did on $abcd$, $abtc$ has a $3$-zigzag of maximally intersecting edges. In particular, $bt$ or $ct$ must also have maximal intersections with $T_2$. If $bt$ is maximally intersecting $T_2$, then add $u_3 = t$ to the $u$ sequence of the strip. If it is not, then $ct$ is maximally intersecting $T_2$, and then add $v_3$ to the $v$ sequence of the strip. We can reiterate the reasoning in the next quadrilateral, which exists as $abtc$ cannot lie on the border of the triangulations, and which has a maximally intersecting diagonal with $T_2$. This allows us to construct incrementally the desired strip.}
        \label{fig: lemma 3 strip construction}
    \end{figure}
    
    We previously showed that, because we assumed without loss of generality that there are no edges of $T_2$ intersecting simultaneously $ab$ and $cd$, the 3-zigzag of edges $ad$, $ac$, and $bc$ of the quadrilateral $abcd$ passing through the diagonal $ac$ all have maximal intersections, i.e. $ad$ and $bc$ are also maximal intersection edges. Since $abcd$ is not on the border of the triangulations, the edge $bc$ is the diagonal of a quadrilateral in $T_1$, i.e. there exists a vertex $t\neq a$ such that $bt$ and $ct$ are edges of $T_1$. This new quadrilateral $abtc$ has its diagonal $bc$ with maximal intersections with $T_2$, which implies that it is convex due to \cref{lemma 1}. Furthermore, we assumed that all such quadrilaterals could not have a diagonal for which a flip would strictly reduce the number of intersections. Therefore, we can redo the same reasoning on $abtc$ that we did for $abcd$. The induction is not as trivial as it seems and it is easy to make mistakes and omit important cases. Both Hanke et al. \cite{hanke1996edge} and De Loera et al. \cite{de2010triangulations} make major logic flaws by applying the same reasoning too hastily. By applying the induction, there exists a $3$-zigzag of edges of $abtc$ passing through its diagonal that are all with maximum intersections with $T_2$. There are two possible $3$-zigzags for $abtc$: the first one consists of the edges $ac$, $bc$, and $bt$, the second consists of $ab$, $bc$, and $ct$. Without further assumptions, we do not know which of these two $3$-zigzags are maximal, even though we assumed that the only diagonally intersecting edges of $abcd$ are those intersecting $ad$ and $bc$. We will look at the two only cases: either $bt$ has maximal intersections with $T_2$, or it does not and then $ct$ has maximal intersections with $T_2$.

    \begin{figure}[tbhp]
        \centering
        \includegraphics[width=0.6\textwidth]{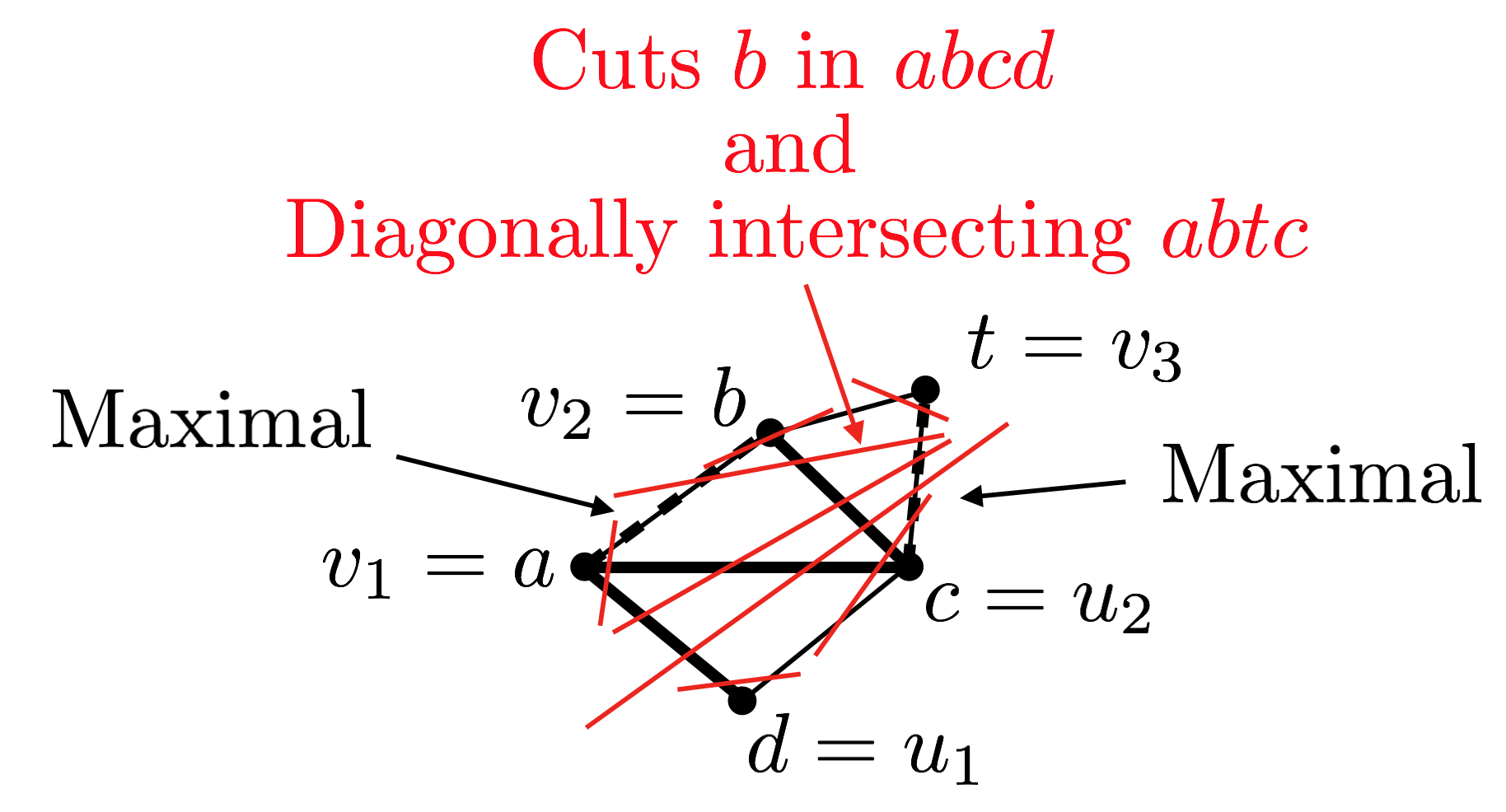}
        \caption{We cannot exclude the case where although $ad$, $ac$, and $bc$ form a 3-zigzag of maximally intersecting edges with $T_2$, since there is an edge of $T_2$ diagonally intersecting $abcd$ by intersecting $ad$ and $bc$ by assumption, that the 3-zigzag of maximally intersecting edges with $T_2$ in $abtc$ is $ab$, $bc$, and $ct$ and not $ac$, $bc$, and $bt$. It is possible that all diagonally intersecting edges of $abcd$ are corner cutters in the next quadrilateral whereas corner cutters of one vertex of $abcd$ become diagonally intersecting edges of $abtc$. This new zigzag does not necessarily create a trivial conflict with corner cutters of either $abcd$ or $abtc$. There is thus no apparent contradiction in this case.}
        \label{fig: lemma 3 second case zigzag must be careful}
    \end{figure}

    De Loera et al. \cite{de2010triangulations} here forget one of the cases, which is the most important gap in their book version of the proof. This omission is due to the fact that they overlook the fact that the 3-zigzag was obtained under the fundamental assumption that edges of $T_2$ diagonally intersecting $abcd$ could only intersect $ad$ and $bc$, i.e. the assumption that $\#(ab,cd, T_2) = 0$. Indeed, they implicitly assume that the 3-zigzag extends to $ac$, $bc$, and $ct$ in $abtc$. But the direction of the diagonal is given by the direction of possible diagonally intersecting edges of $T_2$ with the quadrilateral. Unfortunately, having that all edges of $T_2$ diagonally intersecting $abcd$ intersect $ad$ and $bc$ does not imply that they all then intersect as well $ac$ and $bt$. In fact, it can be possible that none of these edges intersect $bt$. In such cases, we could actually have that all edges of $T_2$ diagonally intersecting $abtc$ intersect $ab$ and $ct$ instead. Even by taking into account the existence of corner cutters, there are cases where this other direction of 3-zigzag does not immediately imply a contradiction. Indeed, edges diagonally intersecting $ad$ and $bc$ could all be corner cutters of $c$ in $abtc$, corner cutters of $b$ in $abcd$ could all be diagonally intersecting $abtc$ by intersecting $ab$ and $ct$, and corner cutters of $c$ in $abcd$ could all be corner cutters of $c$ in $abtc$. 
    Because of their mistake, fixing the initial $3$-zigzag incorrectly fixes for the authors inductively a sequence of zigzags with a new maximal edge with vertex the latest added vertex to the sequence. For this reason, De Loera et al. \cite{de2010triangulations} only introduce the term ``zigzag''. As we have seen, such a zigzagging only occurs between three edges of a quadrilateral and does not generalise: a new vertex from a 3-zigzag may not be a vertex of the next maximal edge in the sequence. To fix this, we introduced the concept of a $3$-zigzag rather than a zigzag. Overall, this error shows the importance of rigorous attention to details in the derivation of the proof, rather than using very quick and seemingly reasonable arguments justified by misleading figures. These turn out to be possibly restricted to some particular constellations but are false for general ones when all cases are carefully scrutinised. See \cref{fig: lemma 3 second case zigzag must be careful} for an illustration of such a possible but, so far, overlooked case.

    We return to the discussion on which of the two $3$-zigzags are maximal. Consider the first case: $bt$ has maximal intersections with $T_2$. We then choose $u_{k+1} = u_3 = t$. The strip delimited by $u_1u_2\cdots u_{k+1}$ and $v_1v_2\cdots v_l$ is already triangulated in $T_1$, and each edge of $T_1$ of the form $u_i v_j$ has maximal intersections with $T_2$. Note that we did not create yet $v_{l+1} = v_3$. As we showed previously for $abcd$, the quadrilateral $abtc$ cannot lie on the border of the triangulation, therefore $b$ and $t$ have a second shared neighbour, i.e. there exists a vertex $p\neq c$ such that $bp$ and $tp$ are edges of $T_1$. This new quadrilateral has maximally intersecting diagonal. We can thus continue from here the construction of $u$ and $k$ by induction.
    
    Consider the second case: $bt$ does not have maximal intersections with $T_2$. Since $abtc$ must have a $3$-zigzag of maximally intersecting edges including its diagonal edge, then $ct$ has maximal intersections with $T_2$. We thus choose $v_{l+1} = v_3 = t$, but do not create yet $u_{k+1}$. The strip delimited by $u_1u_2\cdots u_k$ and $v_1v_2\cdots v_{l+1}$ is already triangulated in $T_1$, and each edge of $T_1$ of the form $u_i v_j$ has maximal intersections with $T_2$. As we showed previously for $abcd$, the quadrilateral $abtc$ cannot lie on the border of the triangulation, therefore $c$ and $t$ have a second shared neighbour, i.e. there exists a vertex $q\neq b$ such that $cq$ and $tq$ are edges of $T_1$. This new quadrilateral has maximally intersecting diagonal. We can thus continue from here the construction of $u$ and $k$ by induction.
    
    Using this process, we can incrementally construct the sequences $u$ and $v$ until we reach a cycle, that is until we reach a point where the current $k,l$ points are $u_k = u_1$ and $v_l = v_1$ for $k$ or $l$ different to $1$. Indeed, reaching a cycle is unescapable. Since the quadrilaterals being considered at each step always have a maximally intersecting diagonal with $T_2$, then they can never lie on the border of the triangulation, which implies that we can always continue the construction process. However, the number of vertices, and thus of edges, is finite. Therefore, at some point, we will have cycled. Hanke et al. \cite{hanke1996edge} missed the fact that cycling is a possibility and directly claim that this inductive process necessarily reaches the border and thus leads to a contradiction. De Loera et al. \cite{de2010triangulations} correct this by analysing a cycle of maximal edges. Unfortunately, they incorrectly constructed this cycle with a global zigzag of maximal edges, but in fact, a correct analysis of the 3-zigzags as we did, shows that the cycle construction does not provide a global zigzag of maximal edges. This logic error is the greatest flaw in their proof. See \cref{fig: lemma 3 ring} for an example of ring of maximal edges, constructed by analysing local 3-zigzags of maximal edges, but for which the maximal edges do not globally zigzag throughout the ring.

    \begin{figure}[tbhp]
        \centering
        \includegraphics[width=0.5\textwidth]{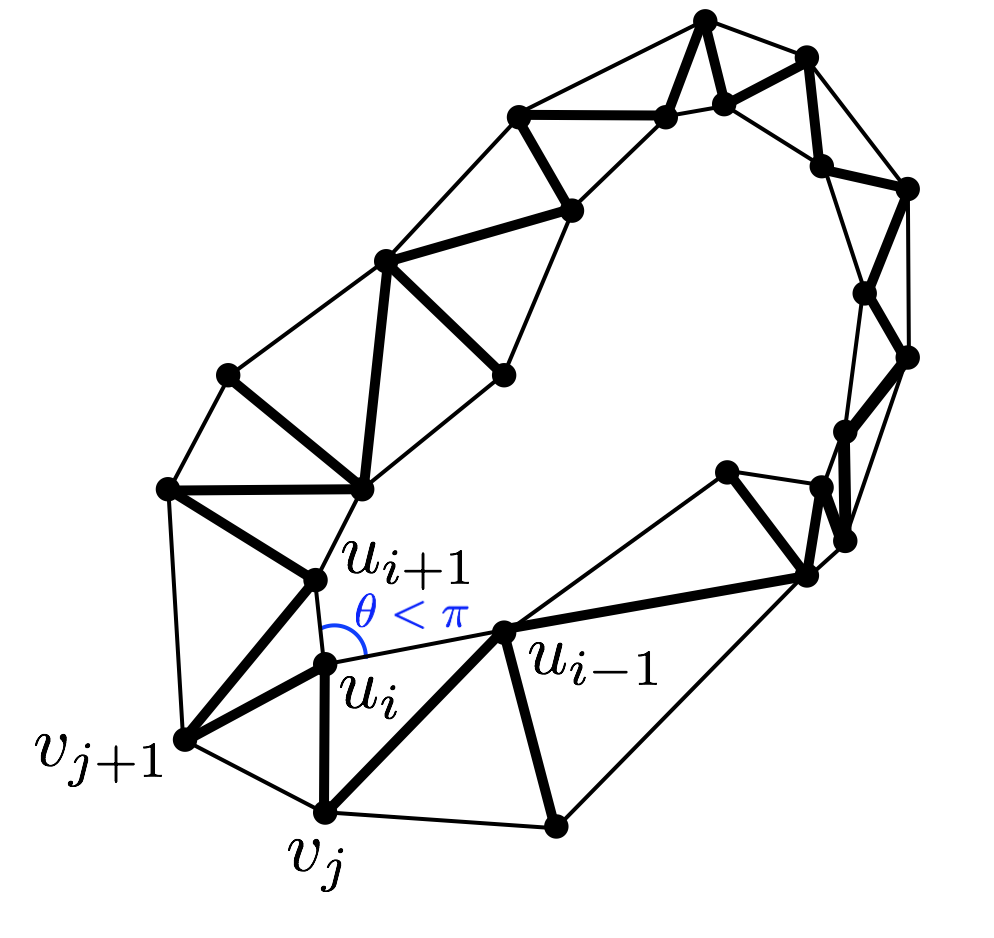}
        \caption{The strip delimited by $u$ and $v$ with edges of $T_1$ between both sequences, i.e. edges of $T_1$ of the form $u_iv_j$, all have maximal intersections with $T_2$. As the strip cannot reach the border of the triangulation, by finiteness of the problem it necessarily cycles. Furthermore, since it cannot self-intersect, and since we are in the Euclidean plane, the strip has the topology of a ring, and it necessarily has at least one reflex vertex.}
        \label{fig: lemma 3 ring}
    \end{figure}
    
    As the strip is delimited by $u_1\cdots u_ku_1$ and $v_1v_2\cdots v_lv_1$ that are edges of $T_1$, then there necessarily is one vertex $u_i$ or $v_j$ that is a reflex vertex. See \cref{fig: lemma 3 ring} for an illustration. We say $u_i$ is a reflex vertex of the strip if $\widehat{u_{i-1}u_i u_{i+1}} < \pi$, where $\widehat{u_{i-1}u_i u_{i+1}}$ is the geometric angle of the cone $u_{i-1}u_i u_{i+1}$ that does not contain the triangles of the strip with at least one vertex among $u_{i-1}$, $u_i$, or $u_{i+1}$. It can be seen as the ``outer'' angle from the strip at $u_i$ is strictly smaller than $\pi$. Similarly, we say $v_j$ is a reflex vertex of the strip if the ``outer'' angle from the strip at $v_j$ is strictly smaller than $\pi$. If the strip did not have a reflex vertex, then necessarily it would intersect one of the edges $u_i v_j$, which is forbidden as the strip is composed only of edges of the planar triangulation $T_1$. We thus have that the strip has the topology of a ring. 
    
    Without loss of generality, assume that the ring has a reflex vertex at a vertex of the $u$ sequence, denoted $u_i$. By renumbering the $v$ vertices, let $v_{-1}, v_0, v_1, \cdots v_r $ be the vertices $v_j$ in clockwise order such that $u_i v_j$ are edges of $T_1$. Note that $r\ge 0$, i.e. that there are at least $2$ vertices of the $v$ sequence connected to $u_i$ in $T_1$ on the strip. Indeed, by construction of the strip, each vertex of $u$ is connected to at least one vertex of $v$ in $T_1$, and vice versa each vertex of $v$ is connected to at least one vertex of $u$ in $T_1$. If $u_i$ is connected to only one vertex of $v$ in $T_1$, say $v_j$, then look at the quadrilateral of the strip $u_i u_{i-1} v_j u_{i+1}$, which has $u_i v_j$ as maximally intersecting diagonal with $T_2$. This quadrilateral is convex (\cref{lemma 1}). This property implies that the angle formed by the $u_{i-1}u_iu_{i+1}$ including the quadrilateral is smaller than or equal to $\pi$. Therefore, such vertex $u_i$ with only one neighbour in $v$ in the strip cannot be a reflex vertex. Thus, the reflex vertex $u_i$ of the ring has at least two neighbours in $v$ in $T_1$.

    We will want to show that all corner cutters of $u_i$ in quadrilaterals $v_{j-1}v_{j}v_{j+1}u_i$ necessarily intersect $u_{i-1}u_i$ if the angle $\widehat{u_{i-1}u_i v_{j+1}}$, defined by the cone $u_{i-1}u_i v_{j+1}$ containing the vertices $v_s$ for $-1\le s\le j$, is smaller or equal to $\pi$. For that statement to be proven, we first need to show that the concatenation of these triangles form a convex polygon $u_i u_{i-1} v_{-1}\cdots v_{j+1}$ when the angle previously defined satisfies $\widehat{u_{i-1}u_i v_{j+1}}\le \pi$.

    \begin{figure}[tbhp]
        \centering
        \includegraphics[width=0.7\textwidth]{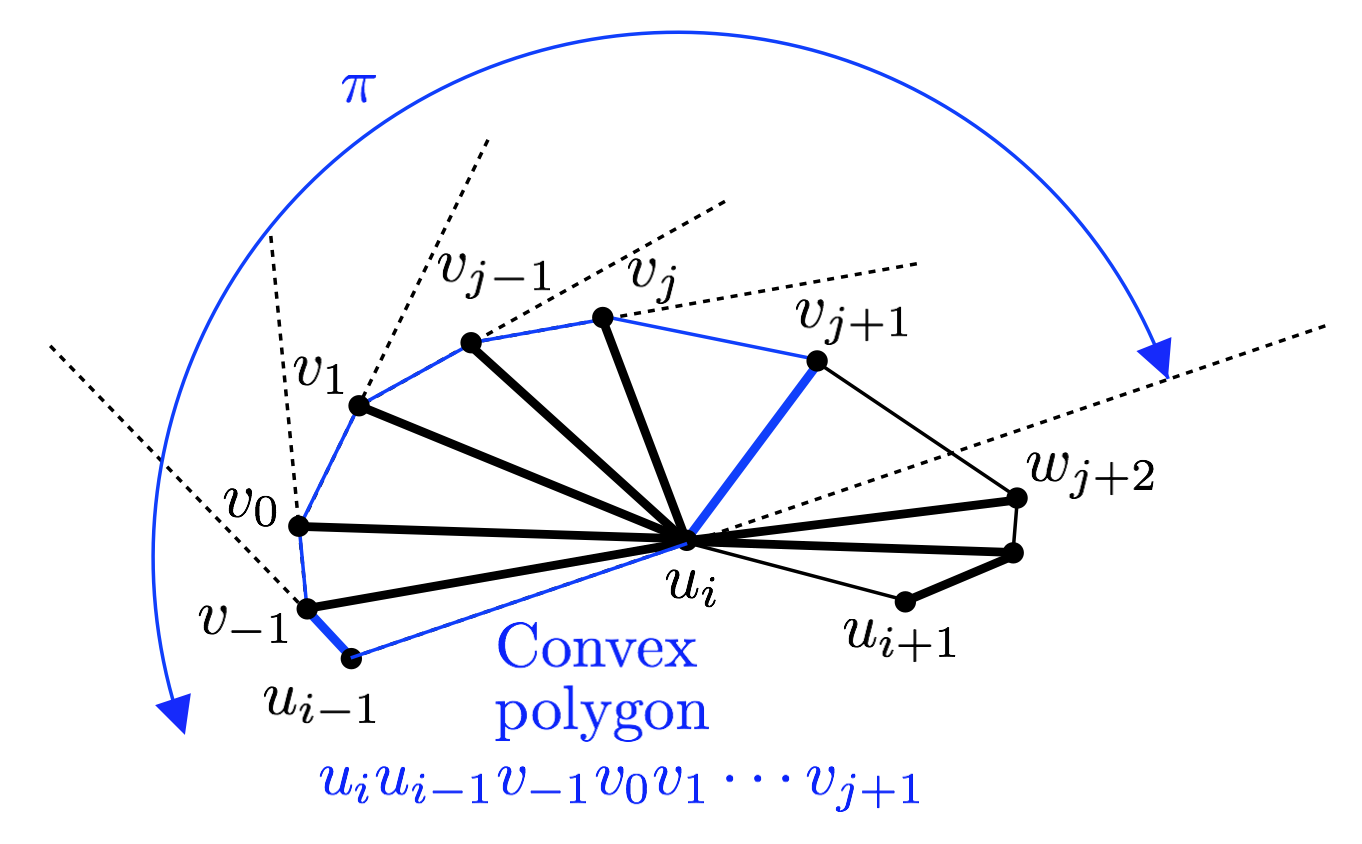}
        \caption{The successive concatenation of triangles of the ring around $u_i$ forms a convex quadrilateral $u_iu_{i-1}v_{-1}v_0\cdots v_{j+1}$ as long as the angle defined by the cone $u_{i-1}u_iv_{j+1}$ containing the other vertices of the polygon is smaller or equal to $\pi$. This result is due to the convexity of the successive overlapping quadrilaterals of the strip, which all have the common vertex $u_i$ in their diagonal.}
        \label{fig: lemma 3 reflex convex}
    \end{figure}

    We prove the convexity of the concatenation of triangles while the angle at $u_i$ of the agglomerated polygon is less than or equal to $\pi$ by a simple induction. See \cref{fig: lemma 3 reflex convex} for an illustration. The result easily holds for $u_i u_{i-1} v_{-1} v_0$, as by construction of the ring this quadrilateral is convex. Assume now that the polygon $u_i u_{i-1} v_{-1}\cdots v_{s}$ is convex for $s\ge 0$, and that the angle, defined by the cone $u_{i-1}u_iv_{s+1}$ and including $v_{-1}\cdots v_{s}$, is smaller or equal to $\pi$. By construction of the ring, $u_iv_{s-1}v_{s}v_{s+1}$ is a convex quadrilateral. This fact constrains $v_{s+1}$ to lie in the cone defined by $u_iv_{s-1}v_s$ and including the triangle $u_iv_{s-1}v_s$. However, due to the angular assumption on $v_{s+1}$ with respect to $u_{i-1}u_i$, necessarily $v_{s+1}$ also lies in the half-plane delimited by $u_{i-1}u_i$ and including the vertices $v_{-1}\cdots v_{s}$. Thus, $v_{s+1}$ belongs in the intersection of both domains. Then, all segments $u_{i-1}v_{s+1}$ and $v_{-1}v_{s+1}, \cdots, v_{s}v_{s+1}$ are included in the polygon $u_i u_{i-1} v_{-1}\cdots v_{s+1}$, which means that it is convex.

    \begin{figure}[tbhp]
        \centering
        \includegraphics[width=0.7\textwidth]{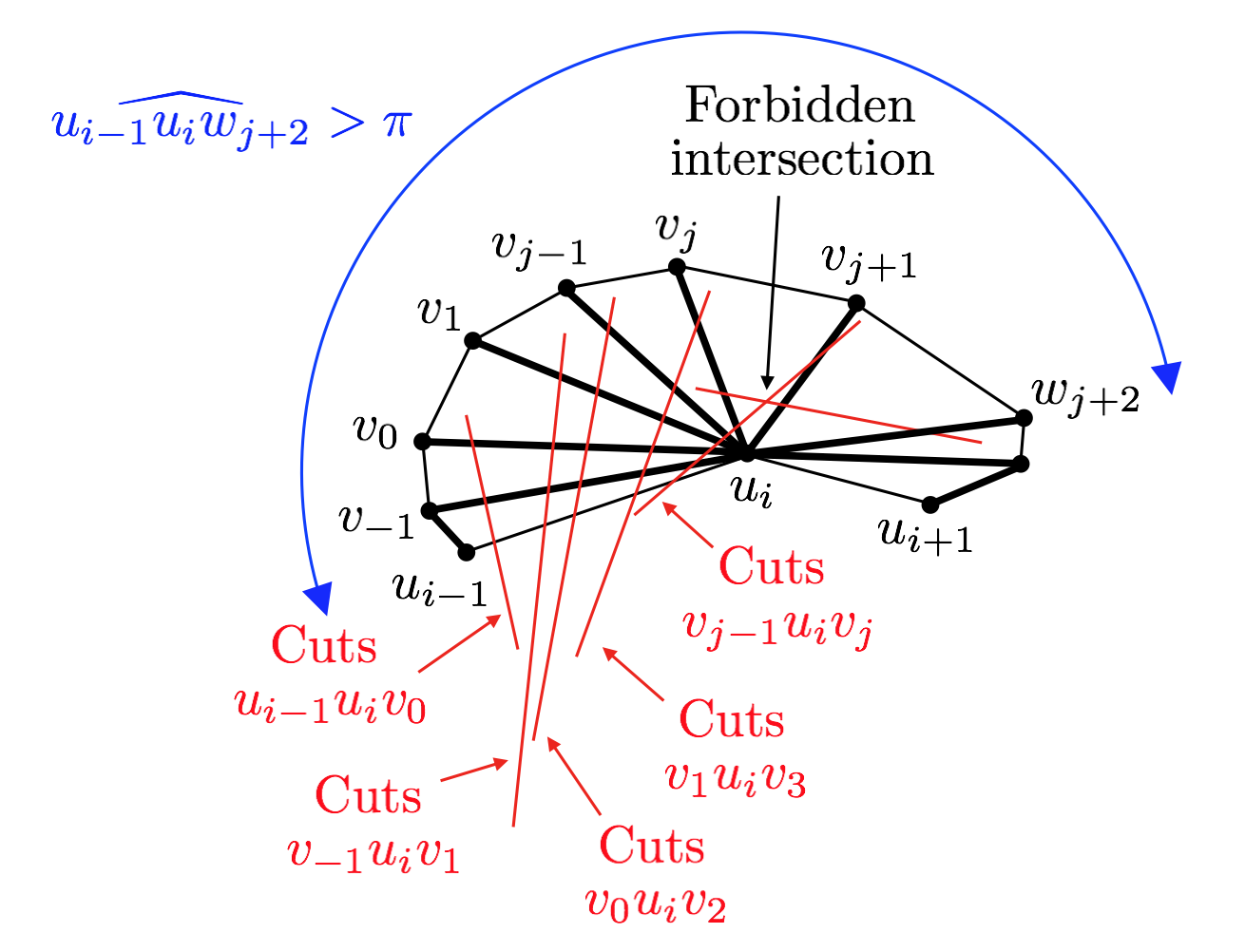}
        \caption{The corner cutter of $u_i$ in the quadrilateral $u_iu_{i-1}v_{-1}v_0$ necessarily intersects $u_{i-1}u_i$. Furthermore, the polygon of successively concatenated triangles is convex when the polygon is entirely contained within a half-plane delimited by $u_{i-1}u_i$. Thus, a simple induction shows that all of the corner cutters of $u_i$ inside the quadrilaterals of this convex polygon must intersect $u_{i-1}u_i$. However, the corner cutter of $u_i$ in the  first quadrilateral that cannot be concatenated with the previous convex polygon in a convex way will intersect the corner cutter of $u_i$ for the previous quadrilateral inside the triangulated ring. This statement leads to a final contradiction.}
        \label{fig: lemma 3 corner cutters}
    \end{figure}

    We now prove that, in each quadrilateral of such a concatenation of triangles $u_i u_{i-1} v_{-1}\cdots v_{j+1}$ with the previously defined angle $\widehat{u_{i-1}u_iv_{j+1}}\le \pi$, the corner cutters of $u_i$ intersect $u_{i-1}u_i$. See \cref{fig: lemma 3 corner cutters} for an illustration. First, notice that they cannot come from $u_{i-1}$ or $v_s$ for any $-1\le s\le j-1$, as such corner cutters would create edges intersecting quadrilaterals with maximal diagonals that come from one of the vertices of that same quadrilateral, which is a property we showed could never hold under the assumption that no flip of a maximally intersecting edge reduces the total number of intersections. Thus, the corner cutters of $u_i$ in $u_iu_{i-1}v_{-1}v_0$ intersects $u_{i-1}u_i$. Look now at the corner cutters of $u_i$ in the quadrilateral $u_i v_{-1}v_0v_1$. These edges of $T_2$ necessarily intersect either $u_{i-1}u_i$ or $u_{i-1}v_{-1}$, in particular they cannot come from $u_{i-1}$. However, by convexity of the polygon $u_iu_{i-1}v_0v_1$, if they intersect $u_{i-1}v_{-1}$, then they will also intersect the corner cutters of $u_i$ in the previous quadrilateral $u_i u_{i-1}v_{-1}v_0$, as these corner cutters intersect $u_{i-1}u_i$ and $u_{i}v_0$, and for which we know there exists at least one. Thus, the corner cutters of $u_i$ in $u_i v_{-1}v_0v_1$ intersect $u_{i-1}u_i$. We can repeat this argument of looking at corner cutters of $u_i$ along successive quadrilaterals by using the fact that the polygon $u_i u_{i-1} v_{-1}\cdots v_{s+1}$ remains convex for $-1\le s\le j$, which proves by induction that in each quadrilateral of a concatenation of triangles $u_i u_{i-1} v_{-1}\cdots v_{j+1}$ with the previously defined angle $\widehat{u_{i-1}u_iv_{j+1}}\le \pi$, the corner cutters of $u_i$ intersect $u_{i-1}u_i$.
    
    We are now ready to ready to reach a contradiction. See \cref{fig: lemma 3 corner cutters} for an illustration. Since $u_i$ is a reflex vertex of the ring, the previously defined angle $\widehat{u_{i-1}u_iu_{i+1}}$ along this cone containing the vertices $v_s$ for $-1\le s\le r$, is strictly greater than $\pi$. We can thus take the first vertex $w_{j+2}$ such that $\widehat{u_{i-1}u_iw_{j+2}} >\pi$, where $w_{j+2}$ is either $v_{j+2}$ for some $j\le r-2$ or either $u_{i+1}$, i.e. $\widehat{u_{i-1}u_iv_{j+1}}\le \pi$ and $\widehat{u_{i-1}u_iv_{j+2}}> \pi$ using the previous definition of angles. Since $\widehat{u_{i-1}u_iv_{j+1}}\le \pi$, the corner cutters of $u_i$ in the quadrilateral $u_i w_{j-1}v_jv_{j+1}$ intersect $u_{i-1}u_i$, where $w_{j-1}$ is $v_{j-1}$ if $j\ge 0$ and $u_{i-1}$ otherwise. On the other hand, the corner cutters of $u_i$ in $u_i v_j v_{j+1}w_{j+2}$ intersect $u_i w_{j+2}$ and either $v_j v_{j+1}$ or either $v_{j+1} w_{j+2}$. In particular, they cannot intersect $u_{i-1}u_i$ since $\widehat{u_{i-1}u_iv_{j+2}}> \pi$. Thus, they must intersect the previous corner cutters inside the polygon $u_i w_{j-1} v_{j}v_{j+1}w_{j+2}$, which is not allowed as it would create a vertex inside faces of the triangulation $T_1$ or intersecting edges in the planar triangulation $T_2$. However, these corner cutters must exist as previously proven. Therefore, we have reached a contradiction.
    
    Due to the fact that De Loera et al. \cite{de2010triangulations} missed an important case when assuming that only the first case was possible when extending the $3$-zigzag to the next quadrilateral (they incorrectly implicitly claim that $bt$ is maximal when $ad$, $ac$, and $bc$ are maximal), they seem to reach a contradiction much faster. Indeed, as a consequence of their incorrect claim, any vertex of the ring can only be present in two overlapping quadrilaterals, i.e. that any $u_i$ is connected to two vertices $v_j$ and $v_{j+1}$ but not $v_{j+2}$, and that likewise all $v_j$ are connected to two $u_i$ and $u_{i+1}$ but not $u_{i+2}$. This incorrect result leads to a simplified analysis of corner cutters at reflex vertices of the ring, say $u_i$ without loss of generality, as then the corner cutters of $u_i$ in the two successive quadrilaterals must necessarily meet inside the overlapping quadrilaterals which leads to a contradiction. To avoid the pitfall of De Loera et al. \cite{de2010triangulations}, we saw that further work was needed, as simply analysing two consecutive corner cutters around a reflex vertex does not necessarily provide a contradiction. We must instead find the right successive overlapping quadrilaterals for that reflex vertex that guarantees a contradiction when studying their corner cutters of the reflex vertex.

    This result is the final contradiction against the earliest assumption. This assumption was that no matter what maximally intersecting edge of $T_1$ with $T_2$ we chose, flipping it would not strictly reduce the number of intersections with $T_2$. Thus, we have proven that there exists a maximally intersecting edge of $T_1$ that we can flip, and for which this flip operation reduces the number of intersections with $T_2$ by at least one.
    
\end{proof}

We can now conclude the proof of \cref{th: goal}.

\begin{proof}
    Starting from $T_1$, as long as the current triangulation differs from $T_2$, there exists edges of the current triangulation with at least one intersection with $T_2$ that are also have maximal intersections with $T_2$. According to \cref{lemma 3}, there exists at least of these edges such that when flipped, the total number of intersections with $T_2$ reduces by at least one. To reach $T_2$ we need only reduce the number of intersections to $0$. By following the strategy of flipping a maximally intersecting edge with $T_2$ that reduces the number of intersections, we perform a sequence of flips that reduce this number by at least one at each step until it reaches $T_2$. Therefore, we found a sequence of edge flips bringing $T_1$ to $T_2$ using at most the original total number of intersections $\#(T_1,T_2)$ between $T_1$ and $T_2$. By taking the minimum length sequence of edge flips from $T_1$ to $T_2$, we have proven that ${d_f(T_1,T_2) \le \#(T_1,T_2)}$.
    
    The last inequality of \cref{th: goal} is achieved by naively upper-bounding the number of intersections of both triangulations using a worst case analysis. Since both triangulations share the same boundary edges, in the worst case, i.e. the one given the most intersections, all inner edges of $T_1$ intersect all inner edges of $T_2$. We then estimate the number of these edges using the Euler formula. The proof is well-known and thus not detailed in Hanke et al. \cite{hanke1996edge} nor in De Loera et al. \cite{de2010triangulations}. We nevertheless provide its details here for completeness. 
    
    For planar graphs with holes, the Euler formula is 
    \begin{equation}
        n-e+f=1-h,
    \end{equation}
    where $n$ is the number of vertices, $e$ is the number of edges, $f$ is the number of faces (excluding the outer face and the holes), and $h$ is the number of holes (i.e. the number of fixed non triangulated inner faces). For each face, we can count the number of edges it sees. Since all faces are triangles, this accounts to a vote of $3f$. By looking at it from the point of view of the edges, each inner edge is neighboured by exactly two faces and each border edge is neighboured by exactly one face. Thus we get an edge count of $2e_{int}+e_{b}$, where $e_{int}$ is the number of inner edges and $e_b$ is the number of boundary edges. Note that $e_b = n_b$ is 
    the sum of the length of the polygons $B_i$, and that it is larger than the number of inner boundary vertices with equality only when each inner boundary vertex is only present on a single boundary polygon. To see this, cyclically order the vertices in a clockwise fashion along the boundary for each boundary contour (around each hole and around the outer face). This naturally translates to a cyclical ordering of boundary edges where each edge can be associated uniquely with its counter clockwise vertex and reciprocally. We have therefore an edge count of $2e_{int}+n_b$. By equalising both counts, we get that $3f = 2e_{int}+b$. We then plug in this result in the Euler formula (multiplied on both sides by 3) to obtain $3n-3e+2e_{int}+n_b = 3 - 3h$. However, $e = e_{int}+e_b$. Therefore, $e_{int} = 3n-2n_b-3 - 3h$. In particular, note how this formula does not depend on the triangulation of the set of vertices. Therefore, we can crudely upper-bound the total number of intersections by $\#(T_1,T_2)\le(3n-2n_b-3-3h)^2$.
    
    Hanke et al. \cite{hanke1996edge} similarly estimated the number of intersections as $3n-2n_b-3$ as they only considered full triangulations of the convex hull of the points $S$. De Loera \cite{de2010triangulations} also use this assumption in their book and provide the same bound. However, as we have shown, the proof naturally extends to more general triangulations that can incorporate holes and an outer border only given by a simple polygon rather than simply the polygon defined as the boundary of the convex hull.

\end{proof}

\section{Conclusion}

We  revisited the proof that the flip distance between triangulations of a same finite set of vertices in a planar region is upper-bounded by the total number of edge intersections between them $d_f \le \#(T_1,T_2)$, and we provided a crude upper-bound estimate of this number of intersections. This upper-bound can be reached for specific configurations. On the other hand, there are known examples of optimal sequences of edges flips that do not necessarily follow the heuristic of systematically reducing the number of intersections \cite{hanke1996edge}. 

The global line of attack of the proof is due to Hanke et al. \cite{hanke1996edge}, later revised by De Loera et al. \cite{de2010triangulations} in their book. Our main contribution is to provide what we believe to be a fully detailed and extensively revised proof that corrects some errors and some false claims, while detailing every step in a hopefully readable fashion. Furthermore, due to our consideration of full details, we showed that the proof readily applies to triangulations of simple polygons and polygonal regions with holes and hence refined the estimation of the number of intersections to take into account this possibility.

\bibliographystyle{siamplain}
\bibliography{references}
\end{document}